\keywords{infinitary rewriting, confluence, normalisation, B{\"o}hm
  trees, coinduction}
\newcommand{\Nbb}{\ensuremath{\mathbb{N}}}
\newcommand{\Pow}{\ensuremath{\mathcal{P}}}
\newcommand{\la}{\ensuremath{\langle}}
\newcommand{\ra}{\ensuremath{\rangle}}
\newcommand{\Tc}{\ensuremath{{\mathcal T}}}
\newcommand{\Rc}{\ensuremath{{\mathcal R}}}
\newcommand{\Uc}{\ensuremath{{\mathcal U}}}
\newcommand{\Oc}{\ensuremath{{\mathcal O}}}
\newcommand{\Hc}{\ensuremath{{\mathcal H}}}
\newcommand{\Os}{\ensuremath{{\mathsf O}}}
\newcommand{\FV}{\ensuremath{\mathrm{FV}}}
\newcommand{\reduces}{\ensuremath{\to^*}}
\newcommand{\infred}{\ensuremath{\to^\infty}}
\newcommand{\To}{\ensuremath{\Rightarrow}}
\newcommand{\xto}{\xrightarrow}
\newcommand{\pos}[2]{\ensuremath{{#1}_{|#2}}}
\newcommand{\da}{{\downarrow}}
\newcommand{\ua}{{\uparrow}}
\newcommand{\cons}{\mathtt{cons}}
\newcommand{\even}{\mathtt{even}}
\newcommand{\tail}{\mathtt{tl}}
\newcommand{\head}{\mathtt{hd}}
\newcommand{\zip}{\mathtt{zip}}
\newcommand{\mer}{\mathtt{merge}}
\newcommand{\sfr}{\mathfrak{s}}
\newcommand{\dfr}{\mathfrak{d}}
\newcommand{\Type}{\mathrm{Type}}
\newcommand{\Prop}{\mathrm{Prop}}
\newcommand{\Set}{\mathrm{Set}}
\DeclareMathOperator{\Coloneqq}{{\,::=\,}}
\title{A new coinductive confluence proof for infinitary lambda calculus}
\author[{\L}.~Czajka]{{\L}ukasz Czajka}
\address{University of Copenhagen, Universitetsparken 5, 2100 Copenhagen, Denmark}
\email{luta@di.ku.dk}
\thanks{Supported by the European Union's Horizon 2020 research and
  innovation programme under the Marie Sk{\l}odowska-Curie grant
  agreement number~704111.}
\begin{document}

\begin{abstract}
  We present a new and formal coinductive proof of confluence and
  normalisation of B{\"o}hm reduction in infinitary
  lambda calculus. The proof is simpler than previous proofs of this
  result. The technique of the proof is new, i.e., it is not merely a
  coinductive reformulation of any earlier proofs. We formalised the
  proof in the Coq proof assistant.
\end{abstract}

\maketitle


\newcommand{\IND}{\mathrm{IND}}
\newcommand{\COIND}{\mathrm{COIND}}

\section{Introduction}

Infinitary lambda calculus is a generalisation of lambda calculus that
allows infinite lambda terms and transfinite reductions. This enables
the consideration of ``limits'' of terms under infinite reduction
sequences. For instance, for a term $M \equiv (\lambda m x . m m)
(\lambda m x . m m)$ we have
\[
M \to_\beta \lambda x . M \to_\beta \lambda x . \lambda x . M
\to_\beta \lambda x . \lambda x . \lambda x . M \to_\beta \ldots
\]
Intuitively, the ``value'' of~$M$ is an infinite term~$L$ satisfying
$L \equiv \lambda x . L$, where by~$\equiv$ we denote identity of
terms. In fact, $L$ is the normal form of~$M$ in infinitary
lambda calculus.

In~\cite{EndrullisPolonsky2011,EndrullisHansenHendriksPolonskySilva2018}
it is shown that infinitary reductions may be defined
coinductively. The standard non-coinductive definition makes explicit
mention of ordinals and limits in a certain metric
space~\cite{KennawayKlopSleepVries1997,KennawayVries2003,BarendregtKlop2009}. A
coinductive approach is better suited to formalisation in a
proof-assistant. Indeed, with relatively little effort we have
formalised our results in~Coq (see Section~\ref{sec_formalisation}).

In this paper we show confluence of infinitary lambda calculus, modulo
equivalence of so-called meaningless
terms~\cite{KennawayOostromVries1999}. We also show confluence and
normalisation of infinitary B{\"o}hm reduction over any set of
strongly meaningless terms. All these results have already been
obtained in~\cite{KennawayKlopSleepVries1997,KennawayOostromVries1999}
by a different and more complex proof method.

In a related conference paper~\cite{Czajka2014} we have shown
confluence of infinitary reduction modulo equivalence of root-active
subterms, and confluence of infinitary B{\"o}hm reduction over
root-active terms. The present paper is quite different
from~\cite{Czajka2014}. A new and simpler method is used. The proof
in~\cite{Czajka2014} follows the general strategy
of~\cite{KennawayKlopSleepVries1997}. There first confluence modulo
equivalence of root-active terms is shown, proving confluence of an
auxiliary $\epsilon$-calculus as an intermediate step. Then confluence
of B{\"ohm} reduction is derived from confluence modulo equivalence of
root-active terms. Here we first show that every term has a unique
normal form reachable by a special standard infinitary
$N_\Uc$-reduction. Then we use this result to derive confluence of
B{\"o}hm reduction, and from that confluence modulo equivalence of
meaningless terms. We do not use any $\epsilon$-calculus at all. See
the beginning of Section~\ref{sec_inflam_confluence} for a more
detailed discussion of our proof method.

\subsection{Related work}

Infinitary lambda calculus was introduced
in~\cite{KennawayKlopSleepVries1997,KennawayKlopSleepVries1995b}. Meaningless
terms were defined in~\cite{KennawayOostromVries1999}. The confluence
and normalisation results of this paper were already obtained
in~\cite{KennawayKlopSleepVries1997,KennawayOostromVries1999}, by a
different proof method. See
also~\cite{KennawayVries2003,BarendregtKlop2009,EndrullisHendriksKlop2012}
for an overview of various results in infinitary lambda calculus and
infinitary rewriting.

Joachimski in~\cite{Joachimski2004} gives a coinductive confluence
proof for infinitary lambda calculus, but Joachimski's notion of
reduction does not correspond to the standard notion of a strongly
convergent reduction. Essentially, it allows for infinitely many
parallel contractions in one step, but only finitely many reduction
steps. The coinductive definition of infinitary reductions capturing
strongly convergent reductions was introduced
in~\cite{EndrullisPolonsky2011}. Later~\cite{EndrullisHansenHendriksPolonskySilva2015,EndrullisHansenHendriksPolonskySilva2018}
generalised this to infinitary term rewriting
systems. In~\cite{Czajka2014} using the definition
from~\cite{EndrullisPolonsky2011}, confluence of infinitary
lambda calculus modulo equivalence of root-active subterms was shown
coinductively. The proof in~\cite{Czajka2014} follows the general
strategy
of~\cite{KennawayKlopSleepVries1997,KennawayKlopSleepVries1995b}. The
proof in the present paper bears some similarity to the proof of the
unique normal forms property of orthogonal~iTRSs
in~\cite{KlopVrijer2005}. It is also similar to the coinductive
confluence proof for nearly orthogonal infinitary term rewriting
systems in~\cite{Czajka2015a}, but there the ``standard'' reduction
employed is not unique and need not be normalising.

Confluence and normalisation results in infinitary rewriting and
infinitary lambda calculus have been generalised to the framework of
infinitary combinatory reduction
systems~\cite{KetemaSimonsen2009,KetemaSimonsen2010,KetemaSimonsen2011}.

There are three well-known variants of infinitary lambda calculus:
the~$\Lambda^{111}$, $\Lambda^{001}$ and~$\Lambda^{101}$
calculi~\cite{BarendregtKlop2009,EndrullisHendriksKlop2012,KennawayKlopSleepVries1997,KennawayKlopSleepVries1995b}. The
superscripts $111$, $001$, $101$ indicate the depth measure used:
$abc$ means that we shall add $a$/$b$/$c$ to the depth when going
down/left/right in the tree of the
lambda term~\cite[Definition~6]{KennawayKlopSleepVries1997}. In this
paper we are concerned only with a coinductive presentation of the
$\Lambda^{111}$-calculus.

In the $\Lambda^{001}$-calculus, after addition of appropriate
$\bot$-rules, every finite term has its B{\"o}hm
tree~\cite{KennawayKlopSleepVries1995b} as the normal
form. In~$\Lambda^{111}$ and~$\Lambda^{101}$, the normal forms are,
respectively, Berarducci trees and Levy-Longo
trees~\cite{KennawayKlopSleepVries1997,KennawayKlopSleepVries1995b,Berarducci1996b,Levy1975,Longo1983}. With
the addition of infinite $\eta$- or $\eta!$-reductions it is possible
to also capture, respectively, $\eta$-B{\"ohm} or
$\infty\eta$-B{\"o}hm trees as normal
forms~\cite{SeveriVries2002,SeveriVries2017}.

The addition of $\bot$-rules may be avoided by basing the definition
of infinitary terms on ideal completion. This line of work is pursued
in~\cite{Bahr2010,Bahr2014,Bahr2018}. Confluence of the resulting
calculi is shown, but the proof depends on the confluence results
of~\cite{KennawayKlopSleepVries1997}.

\section{Infinite terms and corecursion}\label{sec_corec}

In this section we define many-sorted infinitary terms. We also
explain and justify guarded corecursion using elementary notions. The
results of this section are well-known.

\begin{defi}\label{def_coterms}
  A \emph{many-sorted algebraic signature}
  $\Sigma=\la\Sigma_s,\Sigma_c\ra$ consists of a collection of
  \emph{sort symbols}~$\Sigma_s=\{s_i\}_{i\in I}$ and a collection of
  \emph{constructors} $\Sigma_c=\{c_j\}_{j\in J}$. Each
  constructor~$c$ has an associated \emph{type}
  $\tau(c)=(s_1,\ldots,s_n;s)$ where $s_1,\ldots,s_n,s\in\Sigma_s$. If
  $\tau(c)=(;s)$ then~$c$ is a \emph{constant} of sort~$s$. In what
  follows we use $\Sigma,\Sigma'$, etc., for many-sorted algebraic
  signatures, $s,s'$, etc., for sort symbols, and $f,g,c,d$, etc., for
  constructors.

  The set~$\Tc^\infty(\Sigma)$, or just~$\Tc(\Sigma)$, of
  \emph{infinitary terms over~$\Sigma$} is the set of all finite and
  infinite terms over~$\Sigma$, i.e., all finite and infinite labelled
  trees with labels of nodes specified by the constructors of~$\Sigma$
  such that the types of labels of nodes agree. More precisely, a
  term~$t$ over~$\Sigma$ is a partial function from~$\Nbb^*$
  to~$\Sigma_c$ satisfying:
  \begin{itemize}
  \item $t(\epsilon)\da$, and
  \item if $t(p) = c \in \Sigma_c$ with $\tau(c)=(s_1,\ldots,s_n;s)$
    then
    \begin{itemize}
    \item $t(pi)=d \in \Sigma_c$ with
      $\tau(d)=(s_1',\ldots,s_{m_i}';s_i)$ for $i < n$,
    \item $t(pi)\ua$ for $i \ge n$,
    \end{itemize}
  \item if $t(p)\ua$ then~$t(pi)\ua$ for every $i\in\Nbb$,
  \end{itemize}
  where $t(p)\ua$ means that~$t(p)$ is undefined, $t(p)\da$ means
  that~$t(p)$ is defined, and~$\epsilon \in \Nbb^*$ is the empty
  string. We use obvious notations for infinitary terms, e.g.,
  $f(g(t,s),c)$ when $c,f,g \in \Sigma_c$ and $t,s \in \Tc(\Sigma)$,
  and the types agree. We say that a term~$t$ \emph{is of sort~$s$} if
  $t(\epsilon)$ is a constructor of type $(s_1,\ldots,s_n;s)$ for some
  $s_1,\ldots,s_n\in\Sigma_s$. By~$\Tc_s(\Sigma)$ we denote the set of
  all terms of sort~$s$ from~$\Tc(\Sigma)$.
\end{defi}

\begin{exa}
  Let $A$ be a set. Let $\Sigma$ consist of two sorts~$\sfr$
  and~$\dfr$, one constructor~$\cons$ of type $(\dfr,\sfr;\sfr)$ and a
  distinct constant $a \in A$ of sort~$\dfr$ for each element
  of~$A$. Then~$\Tc_\sfr(\Sigma)$ is the set of streams over~$A$. We
  also write $\Tc_\sfr(\Sigma) = A^\omega$ and $\Tc_\dfr(\Sigma) =
  A$. Instead of $\cons(a,t)$ we usually write $a : t$, and we assume
  that~$:$ associates to the right, e.g., $x : y : t$ is $x : (y :
  t)$. We also use the notation $x : t$ to denote the application of
  the constructor for~$\cons$ to~$x$ and~$t$. We define the functions
  $\head : A^\omega \to A$ and $\tail : A^\omega \to A^\omega$ by
  \[
  \begin{array}{rcl}
    \head(a : t) &=& a \\
    \tail(a : t) &=& t
  \end{array}
  \]
  Specifications of many-sorted signatures may be conveniently given
  by coinductively interpreted grammars. For instance, the
  set~$A^\omega$ of streams over a set~$A$ could be specified by
  writing
  \[
  A^\omega \Coloneqq \cons(A, A^\omega).
  \]
  A more interesting example is that of finite and infinite binary trees
  with nodes labelled either with~$a$ or~$b$, and leaves labelled with
  one of the elements of a set~$V$:
  \[
  T \Coloneqq V \parallel a(T, T) \parallel b(T, T).
  \]
  As such specifications are not intended to be formal entities but
  only convenient notation for describing sets of infinitary terms, we
  will not define them precisely. It is always clear what many-sorted
  signature is meant.
\end{exa}

For the sake of brevity we often use $\Tc = \Tc(\Sigma)$ and $\Tc_s =
\Tc_s(\Sigma)$, i.e., we omit the signature~$\Sigma$ when clear from
the context or irrelevant.

\begin{defi}\label{def_guarded_corecursion}
  The class of \emph{constructor-guarded} functions is defined
  inductively as the class of all functions $h : \Tc_s^m \to \Tc_{s'}$
  (for arbitrary $m \in \Nbb$, $s,s' \in \Sigma_s$) such that there
  are a constructor~$c$ of type $(s_1,\ldots,s_{k};s')$ and functions
  $u_i : \Tc_s^m \to \Tc_{s_i}$ ($i=1,\ldots,k$) such that
  \[
  h(y_1,\ldots,y_m) =
  c(u_1(y_1,\ldots,y_m),\ldots,u_k(y_1,\ldots,y_m))
  \]
  for all $y_1,\ldots,y_m \in \Tc_s$, and for each $i=1,\ldots,k$ one
  of the following holds:
  \begin{itemize}
  \item $u_i$ is constructor-guarded, or
  \item $u_i$ is a constant function, or
  \item $u_i$ is a projection function, i.e., $s_i = s$ and there is
    $1\le j \le m$ with $u_i(y_1,\ldots,y_m) = y_j$ for all
    $y_1,\ldots,y_m \in \Tc_s$.
  \end{itemize}
  Let~$S$ be a set. A function $h : S \times \Tc_s^m \to \Tc_{s'}$ is
  constructor-guarded if for every $x \in S$ the function $h_x :
  \Tc_s^m \to \Tc_{s'}$ defined by $h_x(y_1,\ldots,y_m) =
  h(x,y_1,\ldots,y_m)$ is constructor-guarded. A function $f : S \to
  \Tc_s$ is defined by \emph{guarded corecursion} from $h : S \times
  \Tc_s^m \to \Tc_s$ and $g_i : S \to S$ ($i=1,\ldots,m$) if~$h$ is
  constructor-guarded and~$f$ satisfies
  \[
  f(x) = h(x, f(g_1(x)), \ldots, f(g_m(x)))
  \]
  for all $x \in S$.
\end{defi}

The following theorem is folklore in the coalgebra community. We
sketch an elementary proof. In fact, each set of many-sorted
infinitary terms is a final coalgebra of an appropriate
set-functor. Then Theorem~\ref{thm_corecursion} follows from more
general principles. We prefer to avoid coalgebraic terminology, as it
is simply not necessary for the purposes of the present paper. See
e.g.~\cite{JacobsRutten2011,Rutten2000} for a more general coalgebraic
explanation of corecursion.

\begin{thm}\label{thm_corecursion}
  For any constructor-guarded function $h : S \times \Tc_s^m \to
  \Tc_{s}$ and any $g_i : S \to S$ ($i=1,\ldots,m$), there exists a
  unique function $f : S \to \Tc_s$ defined by guarded corecursion
  from~$h$ and~$g_1,\ldots,g_m$.
\end{thm}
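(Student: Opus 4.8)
The plan is to build $f(x)$ explicitly as a labelled tree, that is, as a partial function $\Nbb^* \rightharpoonup \Sigma_c$, and to read off its label at each position by induction on the length of that position. Existence and uniqueness will both follow from a single observation about constructor-guarded functions: the outermost constructor is emitted immediately, so any dependence on the corecursive calls $f(g_j(x))$ is pushed strictly below the root. This ``productivity'' is exactly what lets an induction on position length go through.

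First I would unfold the guard. Since the class of constructor-guarded functions is defined inductively, each $h_x$ admits a finite derivation, and I would argue by induction on that derivation that $h_x$ can be written as $h_x(y_1,\ldots,y_m) = C_x[y_1,\ldots,y_m]$, where $C_x$ is a ``context'': a tree over $\Sigma_c$ whose leaves are either fixed subterms (coming from the constant functions, themselves possibly infinite terms of the appropriate sort) or holes $\square_1,\ldots,\square_m$ (coming from the projections, with $\square_j$ standing for the $j$-th argument). The crucial property, proved in the same induction, is that every hole in $C_x$ occurs at depth at least $1$: the topmost constructor $c$ guarantees this, and nesting further guarded functions only increases the depth. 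The types supplied by $\tau$ ensure that $C_x$ is well-sorted and that each hole $\square_j$ sits at a position of sort~$s$.

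For existence I would define $t := f(x)$ by letting $t(p)$ be the result of descending through $C_x$ along $p$: if $p$ never leaves the context part, $t(p)$ is the label $C_x$ assigns there; if $p$ passes through a hole $\square_j$, reached at some depth $d \ge 1$, then set $t(p) = f(g_j(x))(p')$ where $p = qp'$ with $|q| = d$. Because $d \ge 1$ we have $|p'| < |p|$, so this is a legitimate strong induction on $|p|$, uniform in $x$. Checking that $t$ is a genuine term---$t(\epsilon)\da$, the type constraints, and downward closure of undefinedness---is routine from the well-sortedness of $C_x$ and the fact that each grafted $f(g_j(x))$ is itself a term of sort~$s$; and $f(x) = C_x[f(g_1(x)),\ldots,f(g_m(x))] = h(x,f(g_1(x)),\ldots,f(g_m(x)))$ holds by construction.

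Uniqueness follows by the same device: if $f'$ also satisfies the corecursion equation, I would show $f(x)(p) = f'(x)(p)$ for all $x$ and all $p$ by induction on $|p|$. Expanding both sides through the equation and the context $C_x$, the label at $p$ is either fixed by $C_x$ (hence identical for $f$ and $f'$) or equals $f(g_j(x))(p')$, respectively $f'(g_j(x))(p')$, for some $|p'| < |p|$, and these agree by the induction hypothesis. The main obstacle is purely the bookkeeping of the first step---setting up the unfolding $C_x$ and verifying the positive-depth property---since it is this property alone that makes the position-length induction well-founded; once it is in place, both existence and uniqueness reduce to straightforward inductions.
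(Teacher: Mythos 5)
Your proof is correct, and it takes a somewhat different route from the paper's, though both rest on the same productivity observation (the guard forces the recursive calls strictly below the root, so an induction on position length is well-founded). The paper never unfolds $h_x$ explicitly: it picks an arbitrary $f_0$, iterates $f_{n+1}(x) = h(x, f_n(g_1(x)),\ldots,f_n(g_m(x)))$, proves by induction on $n$ that $f_{n+1}(x)(p) = f_n(x)(p)$ whenever $|p|<n$, and then defines $f(x)(p) = f_{|p|+1}(x)(p)$; uniqueness is obtained by showing the result is independent of $f_0$. You instead prove, by induction on the derivation of constructor-guardedness, that $h_x$ is a well-sorted context $C_x$ whose holes all lie at depth at least $1$, and then define $f(x)(p)$ by a single strong induction on $|p|$, reading labels off $C_x$ or recursing into $f(g_j(x))(p')$ with $|p'|<|p|$; the same induction gives uniqueness directly against any other solution of the equation. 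Your context lemma is the extra bookkeeping the paper avoids, but it buys you a direct, single-pass definition in which existence and uniqueness fall out of one induction, rather than a stabilisation lemma for a tower of approximants followed by a separate independence argument. One small point worth making explicit: a constructor-guardedness derivation for $h_x$ need not be unique, so $C_x$ involves a choice; this is harmless because whatever choice you make, the resulting $f$ satisfies the corecursion equation, and your uniqueness argument then applies to all solutions uniformly.
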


\begin{proof}
  Let $f_0 : S \to \Tc_s$ be an arbitrary function. Define~$f_{n+1}$
  for $n \in \Nbb$ by $f_{n+1}(x) = h(x, f_n(g_1(x)), \ldots,
  f_n(g_m(x)))$. Using the fact that~$h$ is constructor-guarded, one
  shows by induction on~$n$ that:
  \[
    \text{$f_{n+1}(x)(p) = f_n(x)(p)$ for $x \in S$ and $p
    \in \Nbb^*$ with $|p|<n$}
    \tag{$\star$}
  \]
  where~$|p|$ denotes the length of~$p$. Indeed, the base is
  obvious. We show the inductive step. Let $x \in S$. Because~$h$ is
  constructor-guarded, we have for instance
  \[
  f_{n+2}(x) = h(x, f_{n+1}(g_1(x)), \ldots, f_{n+1}(g_m(x))) =
  c_1(c_2, c_3(w, f_{n+1}(g_1(x))))
  \]
  Let $p \in \Nbb^*$ with $|p| \le n$. The only interesting case is
  when $p=11p'$, i.e., when~$p$ points inside~$f_{n+1}(g_1(x))$. But
  then $|p'| < |p| \le n$, so by the inductive hypothesis
  $f_{n+1}(g_1(x))(p') = f_n(g_1(x))(p')$. Thus $f_{n+2}(x)(p) =
  f_{n+1}(g_1(x))(p') = f_n(g_1(x))(p') = f_{n+1}(x)(p)$.

  Now we define $f : S \to \Tc_s$ by
  \[
  f(x)(p) = f_{|p|+1}(x)(p)
  \]
  for $x \in S$, $p \in \Nbb^*$. Using~$(\star)$ it is routine to
  check that~$f(x)$ is a well-defined infinitary term for each $x \in
  S$. To show that~$f : S \to \Tc_s$ is defined by guarded corecursion
  from~$h$ and~$g_1,\ldots,g_m$, using~$(\star)$ one shows by
  induction on the length of~$p \in \Nbb^*$ that for any $x \in S$:
  \[
  f(x)(p) = h(x, f(g_1(x)),\ldots,f(g_m(x)))(p).
  \]
  To prove that~$f$ is unique it suffices to show that it does not
  depend on~$f_0$. For this purpose, using~$(\star)$ one shows by
  induction on the length of~$p \in \Nbb^*$ that~$f(x)(p)$ does not
  depend on~$f_0$ for any $x \in S$.
\end{proof}

We shall often use the above theorem implicitly, just mentioning that
some equations define a function by guarded corecursion.

\begin{exa}\label{ex_corec}
  Consider the equation
  \[
  \even(x : y : t) = x : \even(t)
  \]
  It may be rewritten as
  \[
  \even(t) = \head(t) : \even(\tail(\tail(t)))
  \]
  So $\even : A^\omega \to A^\omega$ is defined by guarded corecursion
  from $h : A^\omega \times A^\omega \to A^\omega$ given by
  \[
  h(t,t') = \head(t) : t'
  \]
  and $g : A^\omega \to A^\omega$ given by
  \[
  g(t) = \tail(\tail(t))
  \]
  By Theorem~\ref{thm_corecursion} there is a unique function $\even :
  A^\omega \to A^\omega$ satisfying the original equation.

  Another example of a function defined by guarded corecursion is
  $\zip : A^\omega \times A^\omega \to A^\omega$:
  \[
  \zip(x : t, s) = x : \zip(s, t)
  \]
  The following function $\mer : \Nbb^\omega \times \Nbb^\omega \to
  \Nbb^\omega$ is also defined by guarded corecursion:
  \[
  \mer(x : t_1, y : t_2) =
  \left\{
    \begin{array}{cl}
      x : \mer(t_1, y : t_2) & \text{ if } x \le y \\
      y : \mer(x : t_1, t_2) & \text{ otherwise }
    \end{array}
  \right.
 \]
\end{exa}

\section{Coinduction}\label{sec_coind}

In this section\footnote{This section is largely based
  on~\cite[Section~2]{Czajka2015a}.} we give a brief explanation of
coinduction as it is used in the present paper. Our presentation of
coinductive proofs is similar to
e.g.~\cite{EndrullisPolonsky2011,BezemNakataUustalu2012,NakataUustalu2010,LeroyGrall2009,KozenSilva2017}.

There are many ways in which our coinductive proofs could be
justified. Since we formalised our main results (see
Section~\ref{sec_formalisation}), the proofs may be understood as a
paper presentation of formal Coq proofs. They can also be justified by
appealing to one of a number of established coinduction
principles. With enough patience one could, in principle, reformulate
all proofs to directly employ the usual coinduction principle in set
theory based on the Knaster-Tarski fixpoint
theorem~\cite{Sangiorgi2012}. One could probably also use the
coinduction principle from~\cite{KozenSilva2017}. Finally, one may
justify our proofs by indicating how to interpret them in ordinary set
theory, which is what we do in this section.

The purpose of this section is to explain how to justify our proofs by
reducing coinduction to transfinite induction. The present section
does not provide a formal coinduction proof principle as such, but
only indicates how one could elaborate the proofs so as to eliminate
the use of coinduction. Naturally, such an elaboration would introduce
some tedious details. The point is that all these details are
essentially the same for each coinductive proof. The advantage of
using coinduction is that the details need not be provided each
time. A similar elaboration could be done to directly employ any of a
number of established coinduction principles, but as far as we know
elaborating the proofs in the way explained here requires the least
amount of effort in comparison to reformulating them to directly
employ an established coinduction principle. In fact, we do not wish
to explicitly commit to any single formal proof principle, because we
do not think that choosing a specific principle has an essential
impact on the content of our proofs, except by making it more or less
straightforward to translate the proofs into a form which uses the
principle directly.

A reader not satisfied with the level of rigour of the explanations of
coinduction below is referred to our formalisation (see
Section~\ref{sec_formalisation}). The present section provides one way
in which our proofs can be understood and verified without resorting
to a formalisation. To make the observations of this section
completely precise and general one would need to introduce formal
notions of ``proof'' and ``statement''. In other words, one would need
to formulate a system of logic with a capacity for coinductive
proofs. We do not want to do this here, because this paper is about a
coinductive confluence proof for infinitary lambda calculus, not about
foundations of coinduction. It would require some work, but should not
be too difficult, to create a formal system based on the present
section in which our coinductive proofs could be interpreted
reasonably directly. We defer this to future work. The status of the
present section is that of a ``meta-explanation'', analogous to an
explanation of how, e.g., the informal presentations of inductive
constructions found in the literature may be encoded in ZFC set
theory.

\begin{exa}\label{ex_1}
  Let~$T$ be the set of all finite and infinite terms defined
  coinductively by
  \[
  T \Coloneqq V \parallel A(T) \parallel B(T, T)
  \]
  where~$V$ is a countable set of variables, and~$A$, $B$ are
  constructors. By $x,y,\ldots$ we denote variables, and by
  $t,s,\ldots$ we denote elements of~$T$. Define a binary
  relation~$\to$ on~$T$ coinductively by the following rules.
  \[
  \infer=[(1)]{x \to x}{} \quad
  \infer=[(2)]{A(t) \to A(t')}{t \to t'} \quad
  \infer=[(3)]{B(s,t) \to B(s',t')}{s \to s' & t \to t'} \quad
  \infer=[(4)]{A(t) \to B(t',t')}{t\to t'}
  \]

  Formally, the relation~${\to}$ is the greatest fixpoint of a
  monotone function
  \[
  F : \Pow(T \times T) \to \Pow(T \times T)
  \]
  defined by
  \[
  F(R) = \left\{ \la t_1, t_2 \ra \mid \exists_{x \in V}(t_1 \equiv
    t_2 \equiv x) \lor \exists_{t,t'\in T}(t_1 \equiv A(t) \land t_2
    \equiv A(t') \land R(t,t')) \lor \ldots \right\}.
  \]

  Alternatively, using the Knaster-Tarski fixpoint theorem, the
  relation~$\to$ may be characterised as the greatest binary relation
  on~$T$ (i.e. the greatest subset of $T\times T$ w.r.t.~set
  inclusion) such that ${\to} \subseteq F({\to})$, i.e., such that for
  every $t_1,t_2 \in T$ with $t_1 \to t_2$ one of the following holds:
  \begin{enumerate}
  \item $t_1 \equiv t_2 \equiv x$ for some variable $x \in V$,
  \item $t_1 \equiv A(t)$, $t_2 \equiv A(t')$ with $t \to t'$,
  \item $t_1 \equiv B(s,t)$, $t_2 \equiv B(s',t')$ with $s \to s'$ and
    $t \to t'$,
  \item $t_1 \equiv A(t)$, $t_2 \equiv B(t',t')$ with $t \to t'$.
  \end{enumerate}

  Yet another way to think about~$\to$ is that $t_1 \to t_2$ holds if
  and only if there exists a \emph{potentially infinite} derivation
  tree of $t_1 \to t_2$ built using the rules~$(1)-(4)$.

  The rules~$(1)-(4)$ could also be interpreted inductively to yield
  the least fixpoint of~$F$. This is the conventional interpretation,
  and it is indicated with a single line in each rule separating
  premises from the conclusion. A coinductive interpretation is
  indicated with double lines.

  The greatest fixpoint~$\to$ of~$F$ may be obtained by transfinitely
  iterating~$F$ starting with~$T \times T$. More precisely, define an
  ordinal-indexed sequence~$(\to^\gamma)_\gamma$ by:
  \begin{itemize}
  \item $\to^0 = T \times T$,
  \item $\to^{\gamma+1} = F(\to^\gamma)$,
  \item $\to^\delta = \bigcap_{\gamma<\delta} \to^\gamma$ for a limit
    ordinal~$\delta$.
  \end{itemize}
  Then there exists an ordinal~$\zeta$ such that ${\to} =
  {\to^\zeta}$. The least such ordinal is called the \emph{closure
    ordinal}. Note also that ${\to^\gamma} \subseteq {\to^\delta}$ for $\gamma \ge \delta$
  (we often use this fact implicitly). See
  e.g.~\cite[Chapter~8]{DaveyPriestley2002}. The relation~$\to^\gamma$ is
  called the \emph{$\gamma$-approximant}. Note that the $\gamma$-approximants
  depend on a particular definition of~$\to$ (i.e.~on the
  function~$F$), not solely on the relation~$\to$ itself. We
  use~$R^\gamma$ for the $\gamma$-approximant of a relation~$R$.

  It is instructive to note that the coinductive rules for~$\to$ may
  also be interpreted as giving rules for the $\gamma+1$-approximants,
  for any ordinal~$\gamma$.
  \[
  \infer[(1)]{x \to^{\gamma+1} x}{}\quad
  \infer[(2)]{A(t) \to^{\gamma+1} A(t')}{
    t \to^\gamma t'}\quad
  \infer[(3)]{B(s,t) \to^{\gamma+1} B(s',t')}{
    s \to^\gamma s' & t \to^\gamma t'}\quad
  \infer[(4)]{A(t) \to^{\gamma+1} B(t',t')}{
    t\to^\gamma t'}
  \]

  Usually, the closure ordinal for the definition of a coinductive
  relation is~$\omega$, as is the case with all coinductive
  definitions appearing in this paper. In general, however, it is not
  difficult to come up with a coinductive definition whose closure
  ordinal is greater than~$\omega$. For instance, consider the
  relation $R \subseteq \Nbb \cup \{\infty\}$ defined coinductively by
  the following two rules.
  \[
  \infer={R(n+1)}{R(n) & n \in \Nbb} \quad\quad
  \infer={R(\infty)}{\exists n \in \Nbb . R(n)}
  \]
  We have $R = \emptyset$,
  $R^n = \{m \in \Nbb \mid m \ge n\} \cup \{\infty\}$ for
  $n \in \Nbb$, $R^\omega = \{\infty\}$, and only
  $R^{\omega+1}=\emptyset$. Thus the closure ordinal of this
  definition is $\omega+1$.
\end{exa}

In this paper we are interested in proving by coinduction statements
of the form $\psi(R_1,\ldots,R_m)$ where
\[
\psi(X_1,\ldots,X_m) \equiv \forall x_1 \ldots x_n . \varphi(\vec{x})
\to X_1(g_1(\vec{x}),\ldots,g_k(\vec{x})) \land \ldots \land
X_m(g_1(\vec{x}),\ldots,g_k(\vec{x})).
\]
and $R_1,\ldots,R_m$ are coinductive relations on~$T$, i.e, relations
defined as the greatest fixpoints of some monotone functions on the
powerset of an appropriate cartesian product of~$T$, and
$\psi(R_1,\ldots,R_m)$ is~$\psi(X_1,\ldots,X_m)$ with~$R_i$
substituted for~$X_i$. Statements with an existential quantifier may
be reduced to statements of this form by skolemising, as explained in
Example~\ref{ex_skolem} below.

Here $X_1,\ldots,X_m$ are meta-variables for which relations on~$T$
may be substituted. In the statement~$\varphi(\vec{x})$ only
$x_1,\ldots,x_n$ occur free. The meta-variables $X_1,\ldots,X_m$
\emph{are not allowed to occur} in~$\varphi(\vec{x})$. In general, we
abbreviate $x_1,\ldots,x_n$ with~$\vec{x}$. The
symbols~$g_1,\ldots,g_k$ denote some functions of~$\vec{x}$.

To prove~$\psi(R_1,\ldots,R_m)$ it suffices to show by transfinite
induction that $\psi(R_1^\gamma,\ldots,R_m^\gamma)$ holds for each
ordinal~$\gamma \le \zeta$, where~$R_i^\gamma$ is the
$\gamma$-approximant of~$R_i$. It is an easy exercise to check that
because of the special form of~$\psi$ (in particular because~$\varphi$
does not contain~$X_1,\ldots,X_m$) and the fact that each~$R_i^0$ is
the full relation, the base case~$\gamma=0$ and the case of~$\gamma$ a
limit ordinal hold. They hold for \emph{any}~$\psi$ of the above form,
\emph{irrespective} of $\varphi,R_1,\ldots,R_m$. Note
that~$\varphi(\vec{x})$ is the same in
all~$\psi(R_1^\gamma,\ldots,R_m^\gamma)$ for any~$\gamma$, i.e., it
does not refer to the $\gamma$-approximants or the
ordinal~$\gamma$. Hence it remains to show the inductive step
for~$\gamma$ a successor ordinal. It turns out that a coinductive
proof of~$\psi$ may be interpreted as a proof of this inductive step
for a successor ordinal, with the ordinals left implicit and the
phrase ``coinductive hypothesis'' used instead of ``inductive
hypothesis''.

\begin{exa}
  On terms from~$T$ (see Example~\ref{ex_1}) we define the operation
  of substitution by guarded corecursion.
  \[
  \begin{array}{rclcrcl}
    y[t/x] &=& y \quad\text{ if } x \ne y &\quad&
    (A(s))[t/x] &=& A(s[t/x]) \\
    x[t/x] &=& t &\quad&
    (B(s_1,s_2))[t/x] &=& B(s_1[t/x],s_2[t/x])
  \end{array}
  \]
  We show by coinduction: if $s \to s'$ and $t \to t'$ then $s[t/x]
  \to s'[t'/x]$, where~$\to$ is the relation from
  Example~\ref{ex_1}. Formally, the statement we show by transfinite
  induction on~$\gamma \le \zeta$ is: for $s,s',t,t' \in T$, if $s \to
  s'$ and $t \to t'$ then $s[t/x] \to^\gamma s'[t'/x]$. For
  illustrative purposes, we indicate the $\gamma$-approximants with
  appropriate ordinal superscripts, but it is customary to omit these
  superscripts.

  Let us proceed with the proof. The proof is by coinduction with case
  analysis on $s \to s'$. If $s \equiv s' \equiv y$ with $y \ne x$,
  then $s[t/x] \equiv y \equiv s'[t'/x]$. If $s \equiv s' \equiv x$
  then $s[t/x] \equiv t \to^{\gamma+1} t' \equiv s'[t'/x]$ (note that
  ${\to} \equiv {\to^\zeta} \subseteq {\to^{\gamma+1}}$). If $s \equiv
  A(s_1)$, $s' \equiv A(s_1')$ and $s_1 \to s_1'$, then $s_1[t/x]
  \to^\gamma s_1'[t'/x]$ by the coinductive hypothesis. Thus $s[t/x]
  \equiv A(s_1[t/x]) \to^{\gamma+1} A(s_1'[t'/x]) \equiv s'[t'/x]$ by
  rule~$(2)$. If $s \equiv B(s_1,s_2)$, $s' \equiv B(s_1',s_2')$ then
  the proof is analogous. If $s \equiv A(s_1)$, $s' \equiv
  B(s_1',s_1')$ and $s_1 \to s_1'$, then the proof is also
  similar. Indeed, by the coinductive hypothesis we have $s_1[t/x]
  \to^\gamma s_1'[t'/x]$, so $s[t/x] \equiv A(s_1[t/x]) \to^{\gamma+1}
  B(s_1'[t'/x],s_1'[t'/x]) \equiv s'[t'/x]$ by rule~$(4)$.
\end{exa}

With the following example we explain how our proofs of existential
statements should be interpreted.

\begin{exa}\label{ex_skolem}
  Let~$T$ and~$\to$ be as in Example~\ref{ex_1}. We want to show: for
  all $s,t,t' \in T$, if $s \to t$ and $s \to t'$ then there exists
  $s' \in T$ with $t \to s'$ and $t' \to s'$. The idea is to skolemise
  this statement. So we need to find a Skolem function $f : T^3 \to T$
  which will allow us to prove the Skolem normal form:
  \[
    \text{if $s \to t$ and $s \to t'$ then $t \to f(s,t,t')$
      and $t' \to f(s,t,t')$.}
    \tag{$\star$}
  \]
  The rules for~$\to$ suggest a definition of~$f$:
  \[
  \begin{array}{rcl}
    f(x, x, x) &=& x \\
    f(A(s), A(t), A(t')) &=& A(f(s,t,t')) \\
    f(A(s),A(t),B(t',t')) &=& B(f(s,t,t'),f(s,t,t')) \\
    f(A(s),B(t,t),A(t')) &=& B(f(s,t,t'),f(s,t,t')) \\
    f(A(s),B(t,t),B(t',t')) &=& B(f(s,t,t'),f(s,t,t')) \\
    f(B(s_1,s_2), B(t_1,t_2), B(t_1',t_2')) &=&
    B(f(s_1,t_1,t_1'),f(s_2,t_2,t_2')) \\
    f(s, t, t') &=& \text{some fixed term if none of the above matches}
  \end{array}
  \]
  This is a definition by guarded corecursion, so there exists a
  unique function $f : T^3 \to T$ satisfying the above equations. The
  last case in the above definition of~$f$ corresponds to the case in
  Definition~\ref{def_guarded_corecursion} where all~$u_i$ are
  constant functions. Note that any fixed term has a fixed constructor
  (in the sense of Definition~\ref{def_guarded_corecursion}) at the
  root. In the sense of Definition~\ref{def_guarded_corecursion} also
  the elements of~$V$ are nullary constructors.

  We now proceed with a coinductive proof of~$(\star)$. Assume $s \to
  t$ and $s \to t'$. If $s \equiv t \equiv t' \equiv x$ then
  $f(s,t,t') \equiv x$, and $x \to x$ by rule~$(1)$. If $s \equiv A(s_1)$,
  $t \equiv A(t_1)$ and $t' \equiv A(t_1')$ with $s_1 \to t_1$ and
  $s_1 \to t_1'$, then by the coinductive hypothesis $t_1 \to
  f(s_1,t_1,t_1')$ and $t_1' \to f(s_1,t_1,t_1')$. We have $f(s,t,t')
  \equiv A(f(s_1,t_1,t_1'))$. Hence $t \equiv A(t_1) \to f(s,t,t')$
  and $t \equiv A(t_1') \to f(s,t,t')$, by rule~$(2)$. If $s \equiv
  B(s_1,s_2)$, $t \equiv B(t_1,t_2)$ and $t' \equiv B(t_1',t_2')$,
  with $s_1 \to t_1$, $s_1 \to t_1'$, $s_2 \to t_2$ and $s_2 \to
  t_2'$, then by the coinductive hypothesis we have $t_1 \to
  f(s_1,t_1,t_1')$, $t_1' \to f(s_1,t_1,t_1')$, $t_2 \to
  f(s_2,t_2,t_2')$ and $t_2' \to f(s_2,t_2,t_2')$. Hence $t \equiv
  B(t_1,t_2) \to B(f(s_1,t_1,t_1'),f(s_2,t_2,t_2')) \equiv f(s,t,t')$
  by rule~$(3)$. Analogously, $t' \to f(s,t,t')$ by rule~$(3)$. Other
  cases are similar.

  Usually, it is inconvenient to invent the Skolem function
  beforehand, because the definition of the Skolem function and the
  coinductive proof of the Skolem normal form are typically
  interdependent. Therefore, we adopt an informal style of doing a
  proof by coinduction of a statement
  \[
  \begin{array}{rcl}
    \psi(R_1,\ldots,R_m) &=& \forall_{x_1, \ldots, x_n \in T}
    \,.\, \varphi(\vec{x}) \to \\ &&\quad \exists_{y \in T}
    . R_1(g_1(\vec{x}),\ldots,g_k(\vec{x}), y) \land \ldots \land R_m(g_1(\vec{x}),\ldots,g_k(\vec{x}),y)
  \end{array}
  \]
  with an existential quantifier. We intertwine the corecursive
  definition of the Skolem function~$f$ with a coinductive proof of
  the Skolem normal form
  \[
  \begin{array}{l}
    \forall_{x_1, \ldots, x_n \in T} \,.\, \varphi(\vec{x}) \to
    \\ \quad\quad
    R_1(g_1(\vec{x}),\ldots,g_k(\vec{x}),f(\vec{x}))
    \land \ldots \land R_m(g_1(\vec{x}),\ldots,g_k(\vec{x}),f(\vec{x}))
  \end{array}
  \]
  We proceed as if the coinductive hypothesis
  was~$\psi(R_1^\gamma,\ldots,R_m^\gamma)$ (it really is the Skolem
  normal form). Each element obtained from the existential quantifier
  in the coinductive hypothesis is interpreted as a corecursive
  invocation of the Skolem function. When later we exhibit an element
  to show the existential subformula
  of~$\psi(R_1^{\gamma+1},\ldots,R_m^{\gamma+1})$, we interpret this
  as the definition of the Skolem function in the case specified by
  the assumptions currently active in the proof. Note that this
  exhibited element may (or may not) depend on some elements obtained
  from the existential quantifier in the coinductive hypothesis, i.e.,
  the definition of the Skolem function may involve corecursive
  invocations.

  To illustrate our style of doing coinductive proofs of statements
  with an existential quantifier, we redo the proof done above. For
  illustrative purposes, we indicate the arguments of the Skolem
  function, i.e., we write~$s'_{s,t,t'}$ in place
  of~$f(s,t,t')$. These subscripts $s,t,t'$ are normally omitted.

  We show by coinduction that if $s \to t$ and $s \to t'$ then there
  exists $s' \in T$ with $t \to s'$ and $t' \to s'$. Assume $s \to t$
  and $s \to t'$. If $s \equiv t \equiv t' \equiv x$ then take
  $s'_{x,x,x} \equiv x$. If $s \equiv A(s_1)$, $t \equiv A(t_1)$ and
  $t' \equiv A(t_1')$ with $s_1 \to t_1$ and $s_1 \to t_1'$, then by
  the coinductive hypothesis we obtain~$s'_{s_1,t_1,t_1'}$ with
  $t_1 \to s'_{s_1,t_1,t_1'}$ and $t_1' \to s'_{s_1,t_1,t_1'}$. More
  precisely: by corecursively applying the Skolem function to
  $s_1,t_1,t_1'$ we obtain~$s'_{s_1,t_1,t_1'}$, and by the coinductive
  hypothesis we have $t_1 \to s'_{s_1,t_1,t_1'}$ and
  $t_1' \to s'_{s_1,t_1,t_1'}$. Hence
  $t \equiv A(t_1) \to A(s'_{s_1,t_1,t_1'})$ and
  $t \equiv A(t_1') \to A(s'_{s_1,t_1,t_1'})$, by rule~$(2)$. Thus we
  may take $s'_{s,t,t'} \equiv A(s'_{s_1,t_1,t_1'})$. If
  $s \equiv B(s_1,s_2)$, $t \equiv B(t_1,t_2)$ and
  $t' \equiv B(t_1',t_2')$, with $s_1 \to t_1$, $s_1 \to t_1'$,
  $s_2 \to t_2$ and $s_2 \to t_2'$, then by the coinductive hypothesis
  we obtain~$s'_{s_1,t_1,t_1'}$ and~$s'_{s_2,t_2,t_2'}$ with
  $t_1 \to s'_{s_1,t_1,t_1'}$, $t_1' \to s'_{s_1,t_1,t_1'}$,
  $t_2 \to s'_{s_2,t_2,t_2'}$ and $t_2' \to s'_{s_2,t_2,t_2'}$. Hence
  $t \equiv B(t_1,t_2) \to B(s'_{s_1,t_1,t_1'},s'_{s_2,t_2,t_2'})$ by
  rule~$(3)$. Analogously,
  $t' \to B(s'_{s_1,t_1,t_1'},s'_{s_2,t_2,t_2'})$ by rule~$(3)$. Thus
  we may take
  $s'_{s,t,t'} \equiv B(s'_{s_1,t_1,t_1'},s'_{s_2,t_2,t_2'})$. Other
  cases are similar.

  It is clear that the above informal proof, when interpreted in the
  way outlined before, implicitly defines the Skolem function~$f$. It
  should be kept in mind that in every case the definition of the
  Skolem function needs to be guarded. We do not explicitly mention
  this each time, but verifying this is part of verifying the proof.
\end{exa}

When doing proofs by coinduction the following criteria need to be
kept in mind in order to be able to justify the proofs according to
the above explanations.
\begin{itemize}
\item When we conclude from the coinductive hypothesis that some
  relation~$R(t_1,\ldots,t_n)$ holds, this really means that only its
  approximant~$R^\gamma(t_1,\ldots,t_n)$ holds. Usually, we need to
  infer that the next approximant~$R^{\gamma+1}(s_1,\ldots,s_n)$ holds
  (for some other elements~$s_1,\ldots,s_n$) by
  using~$R^\gamma(t_1,\ldots,t_n)$ as a premise of an appropriate
  rule. But we cannot, e.g., inspect (do case reasoning
  on)~$R^\gamma(t_1,\ldots,t_n)$, use it in any lemmas, or otherwise
  treat it as~$R(t_1,\ldots,t_n)$.
\item An element~$e$ obtained from an existential quantifier in the
  coinductive hypothesis is not really the element itself, but a
  corecursive invocation of the implicit Skolem function. Usually, we
  need to put it inside some constructor~$c$, e.g.~producing~$c(e)$,
  and then exhibit~$c(e)$ in the proof of an existential
  statement. Applying at least one constructor to~$e$ is necessary to
  ensure guardedness of the implicit Skolem function. But we cannot,
  e.g., inspect~$e$, apply some previously defined functions to it, or
  otherwise treat it as if it was really given to us.
\item In the proofs of existential statements, the implicit Skolem
  function cannot depend on the ordinal~$\gamma$. However, this is the
  case as long as we do not violate the first point, because if the
  ordinals are never mentioned and we do not inspect the approximants
  obtained from the coinductive hypothesis, then there is no way in
  which we could possibly introduce a dependency on~$\gamma$.
\end{itemize}

Equality on infinitary terms may be characterised coinductively.

\begin{defi}\label{def_bisimilarity}
  Let~$\Sigma$ be a many-sorted algebraic signature, as in
  Definition~\ref{def_coterms}. Let $\Tc = \Tc(\Sigma)$. Define
  on~$\Tc$ a binary relation~${=}$ of \emph{bisimilarity} by the
  coinductive rules
  \[
  \infer={f(t_1,\ldots,t_n) = f(s_1,\ldots,s_n)}{t_1 = s_1 & \ldots &
    t_n = s_n}
  \]
  for each constructor $f \in \Sigma_c$.
\end{defi}

It is intuitively obvious that on infinitary terms bisimilary is the
same as identity. The following easy (and well-known) proposition
makes this precise.

\begin{prop}\label{prop_bisimilarity}
  For $t,s \in \Tc$ we have: $t = s$ iff $t \equiv s$.
\end{prop}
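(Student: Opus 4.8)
The plan is to prove the two implications by different means, since identity $\equiv$ is defined concretely, as equality of the partial functions $\Nbb^* \rightharpoonup \Sigma_c$ representing terms, whereas bisimilarity $=$ is defined coinductively. For the direction $t \equiv s \Rightarrow t = s$ I would argue coinductively; for the converse $t = s \Rightarrow t \equiv s$ I would use ordinary induction on the length of a position.

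First I would treat $t \equiv s \Rightarrow t = s$. It suffices to observe that the diagonal $\Delta = \{\la t, t\ra \mid t \in \Tc\}$ is a bisimulation, i.e.\ $\Delta \subseteq F(\Delta)$, where $F$ is the monotone function whose greatest fixpoint is $=$. Indeed, every term $t$ satisfies $t(\epsilon)\da$, so $t \equiv f(t_1,\ldots,t_n)$ for a unique constructor $f$ and immediate subterms $t_1,\ldots,t_n$; since each $\la t_i, t_i \ra \in \Delta$, the pair $\la t, t\ra$ lies in $F(\Delta)$. As $=$ is the greatest fixpoint of $F$, we get $\Delta \subseteq {=}$, and then $t \equiv s$ means $\la t, s\ra$ is the pair $\la t, t\ra \in {=}$, i.e.\ $t = s$. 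Equivalently, in the informal style of Section~\ref{sec_coind}, one does a coinductive proof by cases on the common root constructor of $t$ and $s$, applying the coinductive hypothesis to the matching immediate subterms.

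For the converse, assuming $t = s$, I would show that $t(p)$ and $s(p)$ are simultaneously undefined or equal for every $p \in \Nbb^*$, by induction on $|p|$; this is exactly $t \equiv s$. The key observation, used at each step, is that since $=$ is a fixpoint of $F$ we have ${=} \subseteq F({=})$, so from $t = s$ we obtain a constructor $f$ with $t \equiv f(t_1,\ldots,t_n)$, $s \equiv f(s_1,\ldots,s_n)$ and $t_i = s_i$ for $i < n$. The base case $p = \epsilon$ follows since $t(\epsilon) = f = s(\epsilon)$. For $p = ip'$: if $i \geq n$ then $t(ip')\ua$ and $s(ip')\ua$ by the defining clauses for terms in Definition~\ref{def_coterms}, while if $i < n$ then $t(ip') = t_i(p')$ and $s(ip') = s_i(p')$, and $t_i = s_i$ together with $|p'| < |p|$ lets the induction hypothesis conclude $t_i(p') = s_i(p')$.

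The only real subtlety lies in this second direction: because $\equiv$ is not phrased coinductively, it cannot be established by a direct coinduction, so one must descend into positions by finite induction and invoke the fixpoint unfolding of $=$ afresh at each node. Everything else is routine, as befits a well-known proposition.
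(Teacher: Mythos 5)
Your proposal is correct and follows essentially the same route as the paper: the direction $t \equiv s \Rightarrow t = s$ is proved by coinduction on reflexivity (the diagonal is a bisimulation), and the converse by finite induction on the length of positions, unfolding the fixpoint of bisimilarity at each node. The only cosmetic difference is that you peel positions from the front ($p = ip'$, applying a universally quantified induction hypothesis to the immediate subterms) whereas the paper peels from the back ($p = p'j$, carrying the subterms at $p'$ along as the invariant); both work equally well.
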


\begin{proof}
  Recall that each term is formally a partial function from~$\Nbb^*$
  to~$\Sigma_c$. We write $t(p) \approx s(p)$ if either both
  $t(p),s(p)$ are defined and equal, or both are undefined.

  Assume $t = s$. It suffices to show by induction of the length of $p
  \in \Nbb^*$ that $\pos{t}{p} = \pos{s}{p}$ or $t(p)\ua,s(p)\ua$,
  where by~$\pos{t}{p}$ we denote the subterm of~$t$ at
  position~$p$. For $p = \epsilon$ this is obvious. Assume $p =
  p'j$. By the inductive hypothesis, $\pos{t}{p'} = \pos{s}{p'}$ or
  $t(p')\ua, s(p')\ua$. If $\pos{t}{p'} = \pos{s}{p'}$ then
  $\pos{t}{p'} \equiv f(t_0,\ldots,t_n)$ and $\pos{s}{p'} \equiv
  f(s_0,\ldots,s_n)$ for some $f \in \Sigma_c$ with $t_i = s_i$ for
  $i=0,\ldots,n$. If $0 \le j \le n$ then $\pos{t}{p} \equiv t_j = s_j
  = \pos{s}{p}$. Otherwise, if $j > n$ or if $t(p')\ua,s(p')\ua$, then
  $t(p)\ua,s(p)\ua$ by the definition of infinitary terms.

  For the other direction, we show by coinduction that for any $t \in
  \Tc$ we have $t = t$. If $t \in \Tc$ then $t \equiv
  f(t_1,\ldots,t_n)$ for some $f \in \Sigma_c$. By the coinductive
  hypothesis we obtain $t_i = t_i$ for $i=1,\ldots,n$. Hence $t = t$
  by the rule for~$f$.
\end{proof}

For infinitary terms $t,s\in \Tc$, we shall therefore use the
notations $t = s$ and $t \equiv s$ interchangeably, employing
Proposition~\ref{prop_bisimilarity} implicitly. In particular, the
above coinductive characterisation of term equality is used implicitly
in the proof of Lemma~\ref{lem_N_confluent}.

\begin{exa}
  Recall the coinductive definitions of~$\zip$ and~$\even$ from
  Example~\ref{ex_corec}.
  \[
  \begin{array}{rcl}
    \even(x : y : t) &=& x : \even(t) \\
    \zip(x : t, s) &=& x : \zip(s, t)
  \end{array}
  \]
  By coinduction we show
  \[
  \zip(\even(t),\even(\tail(t))) = t
  \]
  for any stream $t \in A^\omega$. Let $t \in A^\omega$. Then $t = x :
  y : s$ for some $x, y \in A$ and $s \in A^\omega$. We have
  \[
  \begin{array}{rcl}
    \zip(\even(t),\even(\tail(t))) &=& \zip(\even(x : y : s), \even(y
    : s)) \\
    &=& \zip(x : \even(s), \even(y : s)) \\
    &=& x : \zip(\even(y : s), \even(s)) \\
    &=& x : y : s \quad\text{ (by~CH) }\\
    &=& t
  \end{array}
  \]
  In the equality marked with~(by~CH) we use the coinductive
  hypothesis, and implicitly a bisimilarity rule from
  Definition~\ref{def_bisimilarity}.
\end{exa}

The above explanation of coinduction is generalised and elaborated in
much more detail in~\cite{Czajka2015}. Also~\cite{KozenSilva2017} may
be helpful as it gives many examples of coinductive proofs written in
a style similar to the one used here. The book~\cite{Sangiorgi2012} is
an elementary introduction to coinduction and bisimulation (but the
proofs there are presented in a different way than here, not referring
to the coinductive hypothesis but explicitly constructing a
backward-closed set). The
chapters~\cite{BertotCasteran2004Chapter13,Chlipala2013Chapter5}
explain coinduction in~Coq from a practical viewpoint. A reader
interested in foundational matters should also
consult~\cite{JacobsRutten2011,Rutten2000} which deal with the
coalgebraic approach to coinduction.

In the rest of this paper we shall freely use coinduction, giving
routine coinductive proofs in as much (or as little) detail as it is
customary with inductive proofs of analogous difficulty.

\section{Definitions and basic properties}\label{sec_inflam_intro}

In this section we define infinitary lambda terms and the various
notions of infinitary reductions.

\newcommand{\free}{\mathtt{free}}

\begin{defi}\label{def_infinitary_lambda_terms}
  The set of \emph{infinitary lambda terms} is defined coinductively:
  \[
  \begin{array}{rcl}
    \Lambda^\infty &::=& C \parallel V \parallel \Lambda^\infty\Lambda^\infty \parallel \lambda V . \Lambda^\infty
  \end{array}
  \]
  where~$V$ is an infinite set of \emph{variables} and~$C$ is a set of
  \emph{constants} such that $V \cap C = \emptyset$. An \emph{atom} is
  a variable or a constant. We use the symbols $x,y,z,\ldots$ for
  variables, and $c,c',c_1,\ldots$ for constants, and
  $a,a',a_1,\ldots$ for atoms, and $t,s,\ldots$ for terms. By~$\FV(t)$
  we denote the set of variables occurring free in~$t$. Formally,
  $\FV(t)$ could be defined using coinduction.

  We define substitution by guarded corecursion.
  \[
  \begin{array}{rcl}
    x[t/x] &=& t \\
    a[t/x] &=& a \quad\text{if } a \ne x \\
    (t_1t_2)[t/x] &=& (t_1[t/x]) (t_2[t/x]) \\
    (\lambda y . s)[t/x] &=& \lambda y . s[t/x] \quad\text{if } y
                             \notin \FV(t,x)
  \end{array}
  \]
\end{defi}

In our formalisation we use a de Bruijn representation of infinitary
lambda terms, defined analogously to the de Bruijn representation of
finite lambda terms~\cite{Bruijn1972}. Hence, infinitary lambda terms
here may be understood as a human-readable presentation of infinitary
lambda terms based on de Bruijn indices. Strictly speaking, also the
definition of substitution above is not completely precise, because it
implicitly treats lambda terms up to renaming of bound variables and
we have not given a precise definition of free variables. The
definition of substitution can be understood as a human-readable
presentation of substitution defined on infinitary lambda terms based
on de Bruijn indices.

Infinitary lambda terms could be precisely defined as the
$\alpha$-equivalence classes of the terms given in
Definition~\ref{def_infinitary_lambda_terms}, with a coinductively
defined $\alpha$-equivalence relation~$=_\alpha$. Such a definition
involves some technical issues. If the set of variables~$V$ is
countable, then it may be impossible to choose a ``fresh'' variable
$x \notin \FV(t)$ for a term $t \in \Lambda^\infty$, because~$t$ may
contain all variables free. This presents a difficulty when trying to
precisely define substitution. See
also~\cite{KurzPetrisanSeveriVries2012,KurzPetrisanSeveriVries2013}.
There are two ways of resolving this situation:
\begin{enumerate}
\item assume that~$V$ is uncountable,
\item consider only terms with finitely many free variables.
\end{enumerate}
Assuming that a fresh variable may always be chosen, one may precisely
define substitution and use coinductive techniques to prove: if $t
=_\alpha t'$ and $s =_\alpha s'$ then $s[t/x] =_\alpha s'[t'/x]$. This
implies that substitution lifts to a function on the
$\alpha$-equivalence classes, which is also trivially true for
application and abstraction. Therefore, all functions defined by
guarded corecursion using only the operations of substitution,
application and abstraction lift to functions on $\alpha$-equivalence
classes (provided the same defining equation is used for all terms
within the same $\alpha$-equivalence class). This justifies the use of
Barendregt's variable convention~\cite[2.1.13]{Barendregt1984} (under
the assumption that we may always choose a fresh variable).

Since our formalisation is based on de Bruijn indices, we omit
explicit treatment of $\alpha$-equivalence in this paper.

We also mention that another principled and precise way of dealing
with the renaming of bound variables is to define the set of
infinitary lambda terms as the final coalgebra of an appropriate
functor in the category of nominal
sets~\cite{KurzPetrisanSeveriVries2012,KurzPetrisanSeveriVries2013}.

\begin{defi}
  Let $R \subseteq \Lambda^\infty \times \Lambda^\infty$ be a binary
  relation on infinitary lambda terms. The \emph{compatible closure}
  of~$R$, denoted~$\to_R$, is defined inductively by the following
  rules.
  \[
  \begin{array}{cccc}
    \infer{s \to_R t}{\la s, t \ra \in R} &\quad
    \infer{s t \to_R s' t}{s \to_R s'} &\quad
    \infer{s t \to_R s t'}{t \to_R t'} &\quad
    \infer{\lambda x . s \to_R \lambda x . s'}{s \to_R s'}
  \end{array}
  \]
  If $\la t, s\ra \in R$ then~$t$ is an \emph{$R$-redex}. A term $t
  \in \Lambda^\infty$ is in \emph{$R$-normal form} if there is no $s
  \in \Lambda^\infty$ with $t \to_R s$, or equivalently if it contains
  no $R$-redexes. The \emph{parallel closure} of~$R$, usually
  denoted~$\To_R$, is defined coinductively by the following rules.
  \[
  \begin{array}{cccc}
    \infer={s \To_{R} t}{\la s,t\ra \in R} &~
    \infer={a \To_{R} a}{} &~
    \infer={s_1s_2 \To_{R} t_1t_2}{s_1 \To_{R} t_1 &
      s_2 \To_{R} t_2} &~
    \infer={\lambda x . s \To_{R} \lambda x . s'}{s
      \To_{R} s'}
  \end{array}
  \]
  Let ${\to} \subseteq \Lambda^\infty \times
  \Lambda^\infty$. By~$\reduces$ we denote the transitive-reflexive
  closure of~$\to$, and by~$\to^\equiv$ the reflexive closure
  of~$\to$. The \emph{infinitary closure} of~$\to$, denoted~$\infred$,
  is defined coinductively by the following rules.
  \[
  \begin{array}{ccc}
    \infer={s \infred a}{s \reduces a} &\quad
    \infer={s \infred t_1't_2'}{s \reduces t_1t_2 & t_1 \infred t_1' & t_2 \infred t_2'} &\quad
    \infer={s \infred \lambda x . r'}{s \reduces
      \lambda x . r & r \infred r'}
  \end{array}
  \]

  Let
  $R_\beta = \{ \la (\lambda x . s) t, s[t/x] \ra \mid
  t,s\in\Lambda^\infty \}$. The relation~$\to_\beta$ of \emph{one-step
    $\beta$-reduction} is defined as the compatible closure
  of~$R_\beta$. The relation $\reduces_\beta$ of
  \emph{$\beta$-reduction} is the transitive-reflexive closure
  of~$\to_\beta$. The relation $\infred_\beta$ of \emph{infinitary
    $\beta$-reduction} is defined as the infinitary closure
  of~$\to_\beta$. This gives the same coinductive definition of
  infinitary $\beta$-reduction as in~\cite{EndrullisPolonsky2011}.

  The relation~$\to_w$ of \emph{one-step weak head reduction} is
  defined inductively by the following rules.
  \[
  \begin{array}{cc}
    \infer{(\lambda x . s) t \to_w s[t/x]}{} &\quad
    \infer{s t \to_w s' t}{s \to_w s'}
  \end{array}
  \]
  The relations~$\reduces_w$, $\to_w^\equiv$ and~$\infred_w$ are
  defined accordingly. In a term $(\lambda x . s) t t_1 \ldots t_m$
  the subterm $(\lambda x . s) t$ is the \emph{weak head
    redex}. So~$\to_w$ may contract only the weak head redex.
\end{defi}

\begin{defi}
  Let $\bot$ be a distinguished constant. A $\Lambda^\infty$-term~$t$
  is in \emph{root normal form} (rnf) if:
  \begin{itemize}
  \item $t \equiv a$ with $a \not\equiv \bot$, or
  \item $t \equiv \lambda x . t'$, or
  \item $t \equiv t_1 t_2$ and there is no~$s$ with $t_1 \infred_\beta
    \lambda x . s$ (equivalently, there is no~$s$ with $t_1
    \reduces_\beta \lambda x . s$).
  \end{itemize}
  In other words, a term~$t$ is in rnf if $t \not\equiv \bot$ and~$t$
  does not infinitarily $\beta$-reduce to a $\beta$-redex. We say
  that~$t$ \emph{has rnf} if $t \infred_\beta t'$ for some~$t'$ in
  rnf. In particular, $\bot$ has no rnf. A term with no rnf is also
  called \emph{root-active}. By~$\Rc$ we denote the set of all
  root-active terms.
\end{defi}

\begin{defi}
  A set $\Uc \subseteq \Lambda^\infty$ is a set of \emph{meaningless
    terms} (see~\cite{KennawayVries2003}) if it satisfies the
  following axioms.
  \begin{itemize}
  \item {\bf Closure:} if $t \in \Uc$ and $t \infred_\beta s$ then
    $s \in \Uc$.
  \item {\bf Substitution:} if $t \in \Uc$ then $t[s/x] \in \Uc$ for
    any term~$s$.
  \item {\bf Overlap:} if $\lambda x . s \in \Uc$ then $(\lambda x
    . s) t \in \Uc$.
  \item {\bf Root-activeness:} $\Rc \subseteq \Uc$.
  \item {\bf Indiscernibility:} if $t \in \Uc$ and $t \sim_\Uc s$
    then $s \in \Uc$, where~$\sim_\Uc$ is the parallel closure
    of~$\Uc\times\Uc$.
  \end{itemize}
  A set~$\Uc$ of meaningless terms is a set of \emph{strongly
    meaningless terms} if it additionally satisfies the following
  expansion axiom.
  \begin{itemize}
  \item {\bf Expansion:} if $t \in \Uc$ and $s \infred_\beta t$ then
    $s \in \Uc$.
  \end{itemize}

  Let $\Uc \subseteq \Lambda^\infty$. Let $R_{\bot_\Uc} = \{ \la t, \bot \ra
  \mid t \in \Uc \text{ and } t \not\equiv \bot \}$. We define the
  relation~$\to_{\beta\bot_\Uc}$ of \emph{one-step $\beta\bot_\Uc$-reduction} as
  the compatible closure of $R_{\beta\bot_\Uc} = R_\beta \cup R_{\bot_\Uc}$. A
  term~$t$ is in \emph{$\beta\bot_\Uc$-normal form} if it is in
  $R_{\beta\bot_\Uc}$-normal form. The relation~$\reduces_{\beta\bot_\Uc}$ of
  \emph{$\beta\bot_\Uc$-reduction} is the transitive-reflexive closure
  of~$\to_{\beta\bot_\Uc}$. The relation $\infred_{\beta\bot_\Uc}$ of
  \emph{infinitary $\beta\bot_\Uc$-reduction}, or \emph{B{\"o}hm reduction}
  (over~$\Uc$), is the infinitary closure of~$\to_{\beta\bot_\Uc}$. The
  relation $\To_{\bot_\Uc}$ of \emph{parallel $\bot_\Uc$-reduction} is the
  parallel closure of~$R_{\bot_\Uc}$.
\end{defi}

In general, relations on infinitary terms need to be defined
coinductively. However, if the relation depends only on finite initial
parts of the terms then it may often be defined inductively. Because
induction is generally better understood than coinduction, we prefer
to give inductive definitions whenever it is possible to give such a
definition in a natural way, like with the definition of compatible
closure or one-step weak head reduction. This is in contrast to
e.g.~the definition of infinitary reduction~$\infred$, which
intuitively may contain infinitely many reduction steps, and thus must
be defined by coinduction.

The idea with the definition of the infinitary closure~$\infred$ of a
one-step reduction relation~$\to$ is that the depth at which a redex
is contracted should tend to infinity. This is achieved by
defining~$\infred$ in such a way that always after finitely many
reduction steps the subsequent contractions may be performed only
under a constructor. So the depth of the contracted redex always
ultimately increases. The idea for the definition of~$\infred$ comes
from~\cite{EndrullisPolonsky2011,EndrullisHansenHendriksPolonskySilva2015,EndrullisHansenHendriksPolonskySilva2018}. For
infinitary $\beta$-reduction~$\infred_\beta$ the definition is the
same as in~\cite{EndrullisPolonsky2011}. To each derivation of $t
\infred s$ corresponds a strongly convergent reduction sequence of
length at most~$\omega$ obtained by concatenating the finite
$\reduces$-reductions in the prefixes. See the proof of
Theorem~\ref{thm_strongly_convergent}.

Our definition of meaningless terms differs
from~\cite{KennawayVries2003} in that it treats terms with the~$\bot$
constant, but it is equivalent to the original definition, in the
following sense. Let~$\Lambda_0^\infty$ be the set of
infinitary-lambda terms without~$\bot$. If~$\Uc$ is a set of
meaningless terms defined as in~\cite{KennawayVries2003}
on~$\Lambda_0^\infty$, then~$\Uc_\bot$ (the set of terms from~$\Uc$
with some subterms in~$\Uc$ replaced by~$\bot$) is a set of
meaningless terms according to our definition. Conversely, if~$\Uc$ is
a set of meaningless terms according to our definition, then $\Uc =
\Uc_\bot'$ where $\Uc' = \Uc \cap \Lambda_0^\infty$ ($\Uc'$ then
satisfies the axioms of~\cite{KennawayVries2003}).

To show confluence of B{\"o}hm reduction over~$\Uc$ we also need the
expansion axiom. The reason is purely technical. In the present
coinductive framework there is no way of talking about infinitary
reductions of arbitrary ordinal length, only about reductions of
length~$\omega$. We need the expansion axiom to show that $t
\infred_{\beta\bot_\Uc} t' \To_{\bot_\Uc} s$ implies $t
\infred_{\beta\bot_\Uc} s$.

The expansion axiom is necessary for this implication. Let~$\Os$ be
the \emph{ogre}~\cite{SeveriVries2005} satisfying $\Os \equiv \lambda
x . \Os$, i.e., $\Os \equiv \lambda x_1 . \lambda x_2 . \lambda x_3
\ldots$. A term~$t$ is \emph{head-active}~\cite{SeveriVries2005} if $t
\equiv \lambda x_1 \ldots x_n . r t_1 \ldots t_m$ with $r \in \Rc$ and
$n,m\ge 0$. Define $\Hc = \{ t \in \Lambda^\infty \mid t
\reduces_\beta t' \text{ with } t' \text{ head-active} \}$, $\Oc = \{
t \in \Lambda^\infty \mid t \reduces_\beta \Os \}$ and $\Uc = \Hc \cup
\Oc$. One can show that~$\Uc$ is a set of meaningless terms (see the
appendix). Consider $\Omega_\Os = (\lambda x y . x x) (\lambda x y . x
x)$. We have $\Omega_\Os \infred_\beta \Os \in \Oc$. But $\Omega_\Os
\notin \Uc$, so~$\Uc$ does not satisfy the expansion axiom. Now,
$\Omega_\Os \infred_{\beta\bot_\Uc} \Os \To_{\bot_\Uc} \bot$, but
$\Omega_\Os \not\infred_{\beta\bot_\Uc} \bot$, because no finite
$\beta$-reduct of~$\Omega_\Os$ is in~$\Uc$.

The expansion axiom could probably be weakened slightly, but the
present formulation is simple and it already appeared in the
literature~\cite{SeveriVries2011a,SeveriVries2005,KennawaySeveriSleepVries2005}. Sets
of meaningless terms which do not satisfy the expansion axiom tend to
be artificial. A notion of a set of strongly meaningless terms
equivalent to ours appears in~\cite{SeveriVries2005}. In the presence
of the expansion axiom, the indiscernibility axiom may be
weakened~\cite{SeveriVries2011a,SeveriVries2005}.

In the setup
of~\cite{EndrullisHansenHendriksPolonskySilva2015,EndrullisHansenHendriksPolonskySilva2018}
it is possible to talk about reductions of arbitrary ordinal length,
but we have not investigated the possibility of adapting the framework
of~\cite{EndrullisHansenHendriksPolonskySilva2015,EndrullisHansenHendriksPolonskySilva2018}
to the needs of the present paper.

The axioms of a set~$\Uc$ of meaningless terms are sufficient for
confluence and normalisation of B{\"o}hm reduction
over~$\Uc$. However, they are not necessary. The
paper~\cite{SeveriVries2011} gives axioms necessary and sufficient for
confluence and normalisation.

The following two simple lemmas will often be used implicitly.

\begin{lem}
  Let $\infred$ be the infinitary and~$\reduces$ the
  transitive-reflexive closure of~$\to$. Then the following conditions
  hold for all $t,s,s' \in \Lambda^\infty$:
  \begin{enumerate}
  \item $t \infred t$,
  \item if $t \reduces s \infred s'$ then $t \infred s'$,
  \item if $t \reduces s$ then $t \infred s$.
  \end{enumerate}
\end{lem}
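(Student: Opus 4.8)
The plan is to prove the three items in order, deriving item~(3) from items~(1) and~(2). Items~(1) and~(2) are essentially independent, but they exploit the coinductive definition of $\infred$ in different ways: item~(1) is a genuine coinductive argument, whereas item~(2) needs only a single inversion of the definition.

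For item~(1), I would argue by coinduction with case analysis on the top-level constructor of~$t$. Since $\reduces$ is reflexive, $t \reduces t$ holds in every case, so it only remains to match~$t$ against one of the three rules for $\infred$. If $t \equiv a$ is an atom, then $t \reduces a$ gives $t \infred a \equiv t$ by the first rule. If $t \equiv t_1 t_2$, then the coinductive hypothesis yields $t_1 \infred t_1$ and $t_2 \infred t_2$, and together with $t \reduces t_1 t_2$ the second rule gives $t \infred t_1 t_2 \equiv t$. The abstraction case $t \equiv \lambda x . r$ is analogous, using $r \infred r$ from the coinductive hypothesis and the third rule. Guardedness is immediate, since each appeal to the coinductive hypothesis sits directly under a constructor in the conclusion.

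For item~(2), no coinduction is needed: I would invert the derivation of $s \infred s'$, i.e.\ unfold the coinductive definition once to determine which of the three rules applies at the root, and in each case prepend $t \reduces s$ to the $\reduces$-prefix using transitivity of $\reduces$. Concretely, if $s \infred a$ with $s \reduces a$, then $t \reduces a$, so $t \infred a$. If $s \infred t_1' t_2'$ with $s \reduces t_1 t_2$, $t_1 \infred t_1'$ and $t_2 \infred t_2'$, then $t \reduces t_1 t_2$, and the same premises give $t \infred t_1' t_2'$. The abstraction case is identical. Thus the infinitary reduction of~$s$ is converted into one of~$t$ simply by lengthening its finite initial segment. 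Item~(3) is then immediate: given $t \reduces s$, apply item~(1) to obtain $s \infred s$, and then item~(2) to $t \reduces s \infred s$ to conclude $t \infred s$.

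I do not expect any real obstacle here, as these are the basic structural facts the paper flags as ``simple''. The only point requiring a little care is the interaction between the coinductive reading of $\infred$ and the finitary reading of $\reduces$: item~(1) truly relies on the coinductive hypothesis and must remain guarded, while item~(2), despite $\infred$ being coinductive, is a purely finitary argument because it performs only one inversion and never recurses.
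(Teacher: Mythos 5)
Your proposal is correct and matches the paper's proof exactly: item~(1) by guarded coinduction on the shape of~$t$, item~(2) by a single inversion of $s \infred s'$ followed by prepending $t \reduces s$ to the finite prefix via transitivity of~$\reduces$, and item~(3) as the composition of the first two. The paper even spells out the coinductive proof of item~(1) in the same three cases you give.
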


\begin{proof}
  The first point follows by coinduction. The second point follows by
  case analysis on $s \infred s'$. The last point follows from the
  previous two.

  The proof of the first point is straightforward, but to illustrate
  the coinductive technique we give this proof in detail. A reader not
  familiar with coinduction is invited to study this proof and insert
  the implicit ordinals as in Section~\ref{sec_coind}.

  Let $t \in \Lambda^\infty$. There are three cases. If $t \equiv a$
  then $a \reduces a$, so $t \infred t$ by the definition
  of~$\infred$. If $t \equiv t_1t_2$ then $t_1 \infred t_1$ and $t_2
  \infred t_2$ by the coinductive hypothesis. Since also $t \reduces
  t_1t_2$, we conclude $t \infred t$. If $t \equiv \lambda x . t'$
  then $t' \infred t'$ by the coinductive hypothesis. Since also $t
  \reduces \lambda x . t'$, we conclude $t \infred t$.
\end{proof}

\begin{lem}
  If $R \subseteq S \subseteq \Lambda^\infty \times \Lambda^\infty$
  then ${\infred_R} \subseteq {\infred_S}$.
\end{lem}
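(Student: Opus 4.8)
The plan is to prove the inclusion by coinduction, after a routine preliminary observation about the finitary closures. First I would record that transitive-reflexive closure is monotone: since $R \subseteq S$, every finite $R$-reduction is also a finite $S$-reduction, so $\reduces_R \subseteq \reduces_S$. This follows by an easy induction on the length of the reduction sequence witnessing $t \reduces_R s$ (the reflexive step is immediate, and each one-step $R$-reduction is an $S$-reduction because $R \subseteq S$).

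With this in hand, the main claim ${\infred_R} \subseteq {\infred_S}$ is proved by coinduction, performing case analysis on the rule used to derive $t \infred_R s$. There are three cases, matching the three rules defining the infinitary closure. If $t \reduces_R a$ and $s \equiv a$ for an atom $a$, then $t \reduces_S a$ by monotonicity of $\reduces$, hence $t \infred_S a$ by the first rule. If $t \reduces_R t_1 t_2$, $s \equiv t_1' t_2'$, with $t_1 \infred_R t_1'$ and $t_2 \infred_R t_2'$, then by the coinductive hypothesis $t_1 \infred_S t_1'$ and $t_2 \infred_S t_2'$, while $t \reduces_S t_1 t_2$ by monotonicity; so $t \infred_S t_1' t_2' \equiv s$ by the second rule. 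The abstraction case $t \reduces_R \lambda x . r$, $s \equiv \lambda x . r'$ with $r \infred_R r'$ is completely analogous, using the coinductive hypothesis on $r$.

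I do not expect any genuine obstacle here: the only point requiring care is to apply the coinductive hypothesis exactly to the subderivations $t_i \infred_R t_i'$ (respectively $r \infred_R r'$), and not to inspect the resulting approximants, in the manner explained in Section~\ref{sec_coind}. The $\reduces$-prefix occurring in each rule is handled purely by the finitary monotonicity lemma and never appeals to the coinductive hypothesis, so guardedness is immediate.
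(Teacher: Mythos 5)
Your proof is correct and is exactly the argument the paper intends with its one-line "By coinduction": monotonicity of $\reduces$ plus case analysis on the three rules for $\infred_R$, applying the coinductive hypothesis only to the immediate subderivations. No differences worth noting.
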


\begin{proof}
  By coinduction.
\end{proof}

The next three lemmas have essentially been shown
in~\cite[Lemma~4.3--4.5]{EndrullisPolonsky2011}.

\begin{lem}\label{lem_beta_subst}
  If $s \infred_\beta s'$ and $t \infred_\beta t'$ then
  $s[t/x] \infred_\beta s'[t'/x]$.
\end{lem}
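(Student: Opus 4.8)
The plan is to prove this by coinduction, using the corecursive-Skolem-function style explained in Example~\ref{ex_skolem}. The statement $s \infred_\beta s'$ and $t \infred_\beta t'$ imply $s[t/x] \infred_\beta s'[t'/x]$ is a universally-quantified implication whose conclusion is a coinductive relation, so it fits the template $\psi(R)$ directly. Let me sketch the structure. We proceed by coinduction with case analysis on the derivation of $s \infred_\beta s'$. Recall that $\infred_\beta$ is the infinitary closure of $\to_\beta$, so a derivation of $s \infred_\beta s'$ begins with a finite reduction $s \reduces_\beta u$ where $u$ is an atom, an application $u_1 u_2$, or an abstraction $\lambda y . r$, and then continues coinductively on the immediate subterms.

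First I would establish the auxiliary fact that finite $\beta$-reduction is preserved by substitution: if $s \reduces_\beta u$ then $s[t/x] \reduces_\beta u[t/x]$. This is a routine induction on the length of the finite reduction $s \reduces_\beta u$, using the standard substitution lemma for a single $\to_\beta$-step (which in turn rests on the finitary fact that $(\lambda y . p) q \to_\beta p[q/x]$ substitutes correctly, i.e. $(p[q/y])[t/x] = (p[t/x])[q[t/x]/y]$ under the variable convention). I would also use that $s \reduces_\beta u$ together with $t \infred_\beta t'$ and the definition of substitution lets me rewrite $s[t/x]$ as $s[t/x] \reduces_\beta u[t/x]$, preparing the prefix of the coinductive conclusion.

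Now for the main coinductive argument, I case-split on the shape of $u$ obtained from the prefix $s \reduces_\beta u$ in the derivation of $s \infred_\beta s'$. If $u \equiv a$ is an atom, then $s' \equiv a$ as well (by the atom rule of $\infred_\beta$), and I split further: if $a$ is the variable $x$, then $s'[t'/x] \equiv t'$ and $s[t/x] \reduces_\beta t \infred_\beta t'$ gives the conclusion using the hypothesis $t \infred_\beta t'$; if $a \ne x$, then $s[t/x] \reduces_\beta a \equiv s'[t'/x]$. If $u \equiv u_1 u_2$, then $s' \equiv s_1' s_2'$ with $u_1 \infred_\beta s_1'$ and $u_2 \infred_\beta s_2'$, and I obtain $u_1[t/x] \infred_\beta s_1'[t'/x]$ and $u_2[t/x] \infred_\beta s_2'[t'/x]$ by the coinductive hypothesis; since $s[t/x] \reduces_\beta u[t/x] \equiv (u_1[t/x])(u_2[t/x])$, the application rule of $\infred_\beta$ closes this case. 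The abstraction case $u \equiv \lambda y . r$ is analogous, pushing the substitution under the binder via $(\lambda y . r)[t/x] = \lambda y . r[t/x]$ (choosing $y$ fresh by the variable convention) and applying the coinductive hypothesis to $r \infred_\beta r'$.

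The main obstacle is the interaction between the substitution $[t/x]$ and the $\beta$-redexes contracted along the finite prefix $s \reduces_\beta u$, i.e. ensuring the finitary substitution lemma holds with the correct handling of bound variables. Concretely, the delicate point is that a $\to_\beta$-step contracting $(\lambda y . p) q$ inside $s$ must commute with the outer substitution $[t/x]$, which requires $y \notin \FV(t) \cup \{x\}$; under the variable convention this is guaranteed, so the finitary lemma goes through, but it is the place where the renaming of bound variables genuinely matters. Once the finite substitution lemma $s \reduces_\beta u \implies s[t/x] \reduces_\beta u[t/x]$ is secured, the coinductive step is a direct application of the coinductive hypothesis to the immediate subterms, exactly as in Example~\ref{ex_skolem}, and guardedness is immediate since each conclusion places the coinductive hypotheses strictly beneath a constructor.
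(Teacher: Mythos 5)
Your proof is correct and follows essentially the same route as the paper's: coinduction with case analysis on the derivation of $s \infred_\beta s'$, relying on the finitary fact that $t_1 \reduces_\beta t_2$ implies $t_1[t/x] \reduces_\beta t_2[t/x]$ to handle the finite prefixes. The paper states this in one line; your write-up just fills in the same details (including the $a \equiv x$ subcase, where using the full hypothesis $t \infred_\beta t'$ rather than the coinductive hypothesis is legitimate, exactly as in the paper's toy substitution example in Section~\ref{sec_coind}).
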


\begin{proof}
  By coinduction, with case analysis on $s \infred_\beta s'$, using
  that $t_1 \reduces_\beta t_2$ implies $t_1[t/x] \reduces_\beta
  t_2[t/x]$.
\end{proof}

\begin{lem}\label{lem_beta_beta_fin_append}
  If $t_1 \infred_\beta t_2 \to_\beta t_3$ then $t_1
  \infred_\beta t_3$.
\end{lem}

\begin{proof}
  Induction on $t_2 \to_\beta t_3$, using Lemma~\ref{lem_beta_subst}.
\end{proof}

\begin{lem}\label{lem_beta_append}
  If $t_1 \infred_\beta t_2 \infred_\beta t_3$ then $t_1
  \infred_\beta t_3$.
\end{lem}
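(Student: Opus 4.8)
The plan is to prove the statement by coinduction, performing case analysis on the derivation of the second reduction $t_2 \infred_\beta t_3$. The coinductive rules for $\infred_\beta$ are \emph{syntax-directed} in the target: the shape of the right-hand side (atom, application, or abstraction) determines which rule produced the judgement, so inversion is available. The proof also relies on an auxiliary \emph{finite append} fact, namely that $t_1 \infred_\beta t_2 \reduces_\beta s$ implies $t_1 \infred_\beta s$. This follows immediately by induction on the length of the finite reduction $t_2 \reduces_\beta s$, applying Lemma~\ref{lem_beta_beta_fin_append} once per step. First I would dispose of the atom case: if $t_2 \infred_\beta a$ arises from $t_2 \reduces_\beta a$, then finite append gives $t_1 \infred_\beta a$ directly.

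The interesting cases are the application and abstraction rules. Suppose $t_2 \infred_\beta t_3$ with $t_3 \equiv r_1 r_2$, coming from $t_2 \reduces_\beta u_1 u_2$ together with $u_1 \infred_\beta r_1$ and $u_2 \infred_\beta r_2$. Using finite append on $t_1 \infred_\beta t_2 \reduces_\beta u_1 u_2$ yields $t_1 \infred_\beta u_1 u_2$. Since the target $u_1 u_2$ is an application, inversion on this judgement provides $v_1, v_2$ with $t_1 \reduces_\beta v_1 v_2$, $v_1 \infred_\beta u_1$ and $v_2 \infred_\beta u_2$. Now the coinductive hypothesis applied to $v_1 \infred_\beta u_1 \infred_\beta r_1$ gives $v_1 \infred_\beta r_1$, and likewise $v_2 \infred_\beta r_2$. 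Together with $t_1 \reduces_\beta v_1 v_2$ these let us conclude $t_1 \infred_\beta r_1 r_2$ by the application rule. The abstraction case $t_3 \equiv \lambda x . r'$ is entirely analogous: inversion recovers $t_1 \reduces_\beta \lambda x . v$ with $v \infred_\beta r$, and the coinductive hypothesis is invoked on $v \infred_\beta r \infred_\beta r'$.

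The main obstacle is the bookkeeping around which reductions are absorbed finitely and which are handled coinductively. The finite prefix $t_2 \reduces_\beta u_1 u_2$ must be folded into $t_1$ using ordinary induction (the finite append fact), \emph{before} any coinductive step is taken; the single legitimate use of the coinductive hypothesis is on the strictly smaller subterms obtained after inversion, where the subsequent application of a constructor rule guards the corecursion. In terms of the approximant reading of Section~\ref{sec_coind}, the coinductive hypothesis supplies the $\gamma$-approximants of $v_i \infred_\beta r_i$, and applying the application rule at stage $\gamma+1$ then yields $t_1 \infred_\beta r_1 r_2$, so guardedness is respected. The one point requiring care is that inversion is genuinely applied to a \emph{full} judgement $t_1 \infred_\beta u_1 u_2$ (not to an approximant obtained from the coinductive hypothesis), which is permissible precisely because that judgement was produced by finite append rather than by the coinductive hypothesis.
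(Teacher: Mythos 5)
Your proof is correct and follows exactly the route the paper takes: its proof of Lemma~\ref{lem_beta_append} is the one-liner ``by coinduction, with case analysis on $t_2 \infred_\beta t_3$, using Lemma~\ref{lem_beta_beta_fin_append}'', and your write-up is precisely the intended expansion of that, including the iterated finite-append step and the guardedness/inversion bookkeeping.
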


\begin{proof}
  By coinduction, with case analysis on $t_2 \infred_\beta t_3$, using
  Lemma~\ref{lem_beta_beta_fin_append}.
\end{proof}

\begin{lem}\label{lem_rnf_fwd}
  If $t$ is in rnf and $t \infred_\beta s$ then $s$ is in rnf.
\end{lem}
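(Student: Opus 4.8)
The plan is to proceed by case analysis on why $t$ is in rnf, and in each case to invert the derivation of $t \infred_\beta s$ by checking which of the three rules defining the infinitary closure could possibly conclude it. No coinduction is needed: being in rnf constrains only the finite prefix behaviour of a term, so the whole argument is by rule inversion plus a single appeal to transitivity.

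The two easy cases come first. If $t \equiv a$ with $a \not\equiv \bot$, then $a$ contains no redex, so $a \reduces_\beta a$ is its only finite reduct; among the three rules for $\infred_\beta$ only the first can apply (its prefix reduct is an atom), forcing $s \equiv a$, which is in rnf. If $t \equiv \lambda x . t'$, then every $\reduces_\beta$-reduct of $t$ is again an abstraction (a lambda is not itself a $\beta$-redex, so reduction can only occur in the body), hence only the third rule applies and $s \equiv \lambda x . r'$ is an abstraction, again in rnf.

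The interesting case is $t \equiv t_1 t_2$ with no $w$ such that $t_1 \reduces_\beta \lambda x . w$. Here I would first record an auxiliary observation, proved by induction on the length of the reduction: since $t_1$ never reduces to an abstraction, no head redex can ever appear, so every $\reduces_\beta$-reduct of $t_1 t_2$ is an application $t_1' t_2'$ with $t_1 \reduces_\beta t_1'$ and $t_2 \reduces_\beta t_2'$. Consequently only the second rule for $\infred_\beta$ can conclude $t \infred_\beta s$, so $s \equiv s_1 s_2$ where $t \reduces_\beta t_1' t_2'$ for some prefix reduct with $t_1' \infred_\beta s_1$ and $t_2' \infred_\beta s_2$. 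It then remains to verify that $s_1$ does not infinitarily reduce to an abstraction.

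This last point is the main obstacle, and I would settle it by contradiction. Suppose $s_1 \infred_\beta \lambda x . w$. Combining $t_1' \infred_\beta s_1$ with this via transitivity of infinitary reduction (Lemma~\ref{lem_beta_append}) gives $t_1' \infred_\beta \lambda x . w$. Inspecting the rules, an infinitary reduct that is an abstraction can only be produced by the third rule, whose premise supplies a finite reduction; hence $t_1' \reduces_\beta \lambda x . w_0$ for some $w_0$. Together with $t_1 \reduces_\beta t_1'$ this yields $t_1 \reduces_\beta \lambda x . w_0$, contradicting that $t \equiv t_1 t_2$ is in rnf. Therefore $s_1$ does not infinitarily reduce to an abstraction, and since $s \equiv s_1 s_2 \not\equiv \bot$, the term $s$ is in rnf, which completes the proof.
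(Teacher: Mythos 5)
Your proof is correct, and it relies on the same essential ingredients as the paper's: inversion of the three rules defining $\infred_\beta$ to extract a finite prefix reduction, plus an append/transitivity lemma to transfer a reduction-to-abstraction from the target back to the source. The organization differs, though. The paper argues by contraposition on $s$: assuming $s$ is not in rnf (so $s \equiv \bot$, or $s \equiv s_1 s_2$ with $s_1 \reduces_\beta \lambda x . u$), it inverts $t \infred_\beta s$ and uses Lemma~\ref{lem_beta_beta_fin_append} to conclude that $t$ reduces to a redex, hence is not in rnf. You instead split directly on the three ways $t$ can be in rnf, which costs you an extra auxiliary induction (that every finite reduct of an application whose head never reaches an abstraction is again such an application) before the rule inversion goes through; the contradiction you then derive in the application case is the paper's contrapositive argument in miniature, invoked with the full transitivity Lemma~\ref{lem_beta_append} where the paper needs only the weaker Lemma~\ref{lem_beta_beta_fin_append}. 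Both routes are sound; contraposition buys the paper a shorter proof by sidestepping the auxiliary induction, while your direct version makes the case structure of the rnf definition more explicit.
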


\begin{proof}
  Suppose~$s$ is not in rnf, i.e., $s \equiv \bot$ or $s \equiv
  s_1s_2$ with $s_1 \reduces_\beta \lambda x . u$. If $s \equiv \bot$
  then $t \reduces_\beta \bot$, and thus either $t \equiv \bot$ or it
  $\beta$-reduces to a redex. So~$t$ is not in rnf. If $s \equiv
  s_1s_2$ with $s_1 \infred_\beta \lambda x . u'$, then $t
  \reduces_\beta t_1t_2$ with $t_i \infred_\beta s_i$. By
  Lemma~\ref{lem_beta_beta_fin_append} we have $t_1 \infred_\beta
  \lambda x . u$. Thus~$t$ reduces to a redex $(\lambda x . u)
  t_2$. Hence~$t$ is not in rnf.
\end{proof}

\section{Confluence and normalisation of B{\"o}hm
  reductions}\label{sec_inflam_confluence}

In this section we use coinductive techniques to prove confluence and
normalisation of B{\"o}hm reduction over an arbitrary set of strongly
meaningless terms~$\Uc$. The infinitary lambda calculus we are
concerned with, including the $\bot_\Uc$-reductions to~$\bot$, shall
be called the $\lambda_{\beta\bot_\Uc}^\infty$-calculus.

More precisely, our aim is to prove the following theorems.

\medskip

{ \renewcommand{\thethm}{\ref{thm_bohm_cr}}
\begin{thm}[Confluence of the $\lambda_{\beta\bot_\Uc}^\infty$-calculus]~

  If $t \infred_{\beta\bot_\Uc} t_1$ and $t \infred_{\beta\bot_\Uc} t_2$ then
  there exists~$t_3$ such that $t_1 \infred_{\beta\bot_\Uc} t_3$ and $t_2
  \infred_{\beta\bot_\Uc} t_3$.
\end{thm}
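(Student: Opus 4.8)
The plan is to prove confluence of Böhm reduction by the standard strategy of routing everything through a normal form. The author's stated approach (in the introduction) is to first establish that every term has a unique normal form reachable by a special standard infinitary reduction, and then derive confluence from this normalisation result. So before I would attempt this confluence theorem directly, I would expect to have proven a normalisation theorem of the form: for every term $t$ there is a unique $\beta\bot_\Uc$-normal form $n$ with $t \infred_{\beta\bot_\Uc} n$, reached by some canonical standard reduction strategy (repeatedly contracting weak head redexes or collapsing meaningless subterms to $\bot$, then recursing under constructors). The key property I would want from that strategy is that it is \emph{cofinal} or \emph{normalising}: every infinitary reduction can be ``completed'' to reach the same unique normal form.

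Granting such a normalisation result, the confluence proof becomes short. First I would show that the unique normal form is invariant under reduction: if $t \infred_{\beta\bot_\Uc} s$, then $t$ and $s$ have the same normal form $n$, and moreover $s \infred_{\beta\bot_\Uc} n$. This uses that a single reduction step (or $\omega$-length reduction) does not change which normal form the standard strategy converges to, which in turn relies on the meaningless-term axioms (Closure, Substitution, Overlap, Indiscernibility) and crucially on Expansion to handle the interaction noted in the paper's discussion, namely that $t \infred_{\beta\bot_\Uc} t' \To_{\bot_\Uc} s$ implies $t \infred_{\beta\bot_\Uc} s$. Given $t \infred_{\beta\bot_\Uc} t_1$ and $t \infred_{\beta\bot_\Uc} t_2$, I would let $n$ be the unique normal form of $t$; then $t_1$ and $t_2$ each reduce to $n$ (since the normal form is reduction-invariant), so taking $t_3 \equiv n$ closes the diamond immediately.

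The main obstacle I anticipate is entirely in the normalisation/unique-normal-form machinery that precedes this theorem, not in the final diamond argument. Specifically, the delicate point is proving that the standard reduction is genuinely normalising and that its limit is unique and reduction-stable, all phrased coinductively within a framework that only permits reductions of length~$\omega$ (not arbitrary ordinals). The coinductive techniques of Section~\ref{sec_coind} must be deployed carefully: the invariance lemma $t \infred_{\beta\bot_\Uc} s \Rightarrow s \infred_{\beta\bot_\Uc} n$ would be proved by coinduction with case analysis on the structure of the reduction, and the Expansion axiom is precisely what rescues the cases where collapsing a subterm to $\bot$ must be ``pulled back'' through an infinitary prefix. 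I expect the genuinely hard work to be establishing commutation/postponement properties for $\To_{\bot_\Uc}$ relative to $\infred_\beta$ (a parallel-moves style argument done coinductively), after which the final confluence statement follows almost formally from uniqueness of normal forms.
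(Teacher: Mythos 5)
Your proposal is correct and follows essentially the same route as the paper: confluence is derived from the existence (Lemma~\ref{lem_N_normalising}) and uniqueness (Lemma~\ref{lem_N_confluent}) of normal forms reached by the standard infinitary $N_\Uc$-reduction, together with the fact that this reduction absorbs any prepended infinitary $\beta\bot_\Uc$-reduction (Theorem~\ref{thm_N_prepend}), so that $t_3$ is taken to be the common normal form. You also correctly locate the real work in the supporting machinery (postponement of $\To_{\bot_\Uc}$, the Expansion axiom, and the standardisation/crnf results) rather than in the final diagram chase.
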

\addtocounter{thm}{-1}
}

{ \renewcommand{\thethm}{\ref{thm_bohm_norm}}
\begin{thm}[Normalisation of the $\lambda_{\beta\bot_\Uc}^\infty$-calculus]~

  For every $t \in \Lambda^\infty$ there exists a unique $s \in
  \Lambda^\infty$ in $\beta\bot_\Uc$-normal form such that $t
  \infred_{\beta\bot_\Uc} s$.
\end{thm}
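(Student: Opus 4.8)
The plan is to obtain existence by corecursively constructing a normal form via a standard normalising strategy, and to derive uniqueness from confluence (Theorem~\ref{thm_bohm_cr}).

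For existence I would define a function $N : \Lambda^\infty \to \Lambda^\infty$ by guarded corecursion. Given $t$, I first use the dichotomy that either $t$ is root-active or $t$ has an rnf, which is immediate from the definitions. If $t$ is root-active then $t \in \Rc \subseteq \Uc$ and I set $N(t) \equiv \bot$. Otherwise, reducing the weak head redex terminates after finitely many steps at a $t'$ in rnf with $t \reduces_\beta t'$; if moreover $t' \in \Uc$ I again set $N(t) \equiv \bot$. In the remaining case $t'$ is in rnf and $t' \notin \Uc$, so $t'$ has one of the shapes $a$ (with $a \not\equiv \bot$), $\lambda x . r$, or $a\,r_1 \cdots r_m$ with $a$ an atom, and I set respectively $N(t) \equiv a$, $N(t) \equiv \lambda x . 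N(r)$, or $N(t) \equiv a\,N(r_1)\cdots N(r_m)$. The corecursion is guarded because the passage to rnf is a finite $\beta$-reduction that fixes the head constructor, and the corecursive calls $N(r_i)$ lie strictly beneath it; Theorem~\ref{thm_corecursion} then yields $N$ uniquely.

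Next I would show by coinduction, following the three clauses defining $\infred$, that $t \infred_{\beta\bot_\Uc} N(t)$ for every $t$: in each case $t$ first reduces by the finite $\reduces_\beta$ to its rnf $t'$ (hence a $\reduces_{\beta\bot_\Uc}$), after which either a single $\bot_\Uc$-step collapses a meaningless head to $\bot$, or the head constructor is exposed and the coinductive hypothesis supplies $r_i \infred_{\beta\bot_\Uc} N(r_i)$ beneath it. I would then prove, again by coinduction, that $N(t)$ is in $\beta\bot_\Uc$-normal form: it has no $\beta$-redex because every exposed head is an rnf head, and no $\bot_\Uc$-redex because every proper subterm of $N(t)$ is again of the form $N(\cdot)$, so is either $\bot$ or an rnf not in $\Uc$; here the Closure and Indiscernibility axioms are needed to ensure that collapsing meaningless subterms to $\bot$ does not produce a fresh term in $\Uc$ elsewhere.

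Uniqueness then follows from confluence: if $t \infred_{\beta\bot_\Uc} s_1$ and $t \infred_{\beta\bot_\Uc} s_2$ with $s_1, s_2$ in normal form, Theorem~\ref{thm_bohm_cr} provides a common reduct $s_3$ with $s_i \infred_{\beta\bot_\Uc} s_3$, and since a normal form has no redexes an easy coinduction gives $s_i \equiv s_3$, whence $s_1 \equiv s_2$. The main obstacle I anticipate is verifying that the corecursively built $N(t)$ is genuinely normal — in particular ruling out $\bot_\Uc$-redexes at arbitrary depth — which is precisely where the meaningless-term axioms must be deployed and where the interaction between the finite rnf-reductions and the coinductive descent has to be handled with care.
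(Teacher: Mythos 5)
Your overall strategy coincides with the paper's: your corecursive $N$ is precisely the (implicit Skolem function of the) relation $\leadsto_{N_\Uc}$ of Definition~\ref{def_leadsto}, the three facts you establish about it are Lemmas~\ref{lem_N_normalising}, \ref{lem_N_to_bohm} and~\ref{lem_N_nf}, and uniqueness via confluence is exactly the paper's last step. There are, however, some genuine gaps in the details. The most concrete one is your case analysis of root normal forms: an rnf that is an application need \emph{not} have the shape $a\,r_1\cdots r_m$ with $a$ an atom. In the infinitary calculus the left spine may be infinite --- the term $t$ with $t \equiv t\,x$ is a normal form, hence an rnf not in $\Rc$, with no head atom at all --- and even a finite head may be, e.g., $\bot$ or an unreduced non-abstraction, since being an rnf only constrains the whole application, not its left subterm. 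So your definition of $N$ is not total. The fix is the paper's: case only on the top constructor of the canonical rnf ($a$, $t_1t_2$, or $\lambda x.t'$) and corecurse on the two immediate subterms of an application, letting each acquire its own crnf at the next corecursive step.

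Two further points. The claim that weak head reduction of a non-root-active term reaches an rnf ``after finitely many steps'' is not immediate from the definitions; it is Lemma~\ref{lem_beta_to_rnf}, and it rests on the standardisation theorem of Endrullis and Polonsky (Theorem~\ref{thm_polonsky}), a substantial external ingredient you should cite rather than assume; you also need a canonical choice of that rnf (Lemma~\ref{lem_w_unique}, Definition~\ref{def_crnf}) for $N$ to be well defined. Finally, in arguing that $N(t)$ contains no $\bot_\Uc$-redex you appeal to Closure and Indiscernibility, but the axiom actually doing the work is \textbf{Expansion}: you must rule out that $N(t)$ (or a subterm $N(r)$) lands in $\Uc$ even though $t$ (resp.\ $r$) does not, and since $t \infred_{\beta\bot_\Uc} N(t)$ this is the contrapositive of Corollary~\ref{cor_back_active}, which is exactly why Lemma~\ref{lem_N_nf} holds only for \emph{strongly} meaningless $\Uc$. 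With these repairs your argument becomes the paper's proof.
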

\addtocounter{thm}{-1}
}

In what follows we assume that~$\Uc$ is an arbitrary fixed set of
strongly meaningless terms, unless specified otherwise. Actually,
almost all lemmas are valid for~$\Uc$ being a set of meaningless
terms, without the expansion axiom. Unless explicitly mentioned before
the statement of a lemma, the proofs do not use the expansion
axiom. To show confluence modulo~$\sim_\Uc$
(Theorem~\ref{thm_cr_modulo}), it suffices that~$\Uc$ is a set of
meaningless terms. Confluence and normalisation of the
$\lambda_{\beta\bot_\Uc}$-calculus (Theorem~\ref{thm_bohm_cr} and
Theorem~\ref{thm_bohm_norm}), however, require the expansion
axiom. But this is only because in the present coinductive framework
we are not able to talk about infinite reductions of arbitrary ordinal
length. Essentially, we need the expansion axiom to compress the
B{\"o}hm reductions to length~$\omega$.

The idea of the proof is to show that for every term there exists a
certain standard infinitary $\beta\bot_\Uc$-reduction to normal
form. This reduction is called an infinitary $N_\Uc$-reduction
(Definition~\ref{def_leadsto} and Lemma~\ref{lem_N_normalising}). We
show that the normal forms obtained through infinitary
$N_\Uc$-reductions are unique (Lemma~\ref{lem_N_confluent}). Then we
prove that prepending infinitary $\beta\bot_\Uc$-reduction to an
$N_\Uc$-reduction results in an $N_\Uc$-reduction
(Theorem~\ref{thm_N_prepend}). Since an $N_\Uc$-reduction is an
infinitary $\beta\bot_\Uc$-reduction of a special form
(Lemma~\ref{lem_N_to_bohm}), these results immediately imply
confluence (Theorem~\ref{thm_bohm_cr}) and normalisation
(Theorem~\ref{thm_bohm_norm}) of infinitary
$\beta\bot_\Uc$-reduction. Hence, in essence we derive confluence from
a strengthened normalisation result. See Figure~\ref{fig_cr}.

\begin{figure}[ht]
  \centerline{
    \xymatrix{
      {t_1} \ar@{~>}[d]_>>{N_\Uc} & {t} \ar@{->}[l]_>>{\infty}^>>{\beta\bot_\Uc}
      \ar@{~>}[dl]^>>{N_\Uc} \ar@{->}[r]^>>{\infty}_>>{\beta\bot_\Uc}
      \ar@{~>}[dr]_>>{N_\Uc} & {t_2} \ar@{~>}[d]^>>{N_\Uc} \\
      {t_1'} \ar@{}[rr]|{{\displaystyle\equiv}} & & {t_2'}
    }
  }
  \caption{Confluence of infinitary B{\"o}hm reduction.}\label{fig_cr}
\end{figure}
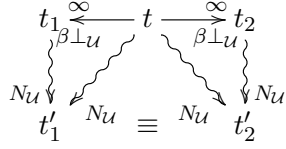

In our proof we use a standardisation result for infinitary
$\beta$-reductions from~\cite{EndrullisPolonsky2011}
(Theorem~\ref{thm_polonsky}). In particular, this theorem is needed to
show uniqueness of canonical root normal forms
(Definition~\ref{def_crnf}). Theorem~\ref{thm_N_prepend} depends on
this. Even when counting in the results
of~\cite{EndrullisPolonsky2011} only referenced here, our confluence
proof may be considered simpler than previous proofs of related
results. In particular, it is much easier for formalise.

We also show that the set of root-active terms is strongly
meaningless. Together with the previous theorems this implies
confluence and normalisation of the
$\lambda_{\beta\bot_\Rc}^\infty$-calculus. Confluence of the
$\lambda_{\beta\bot_\Rc}^\infty$-calculus in turn implies confluence
of~$\infred_\beta$ modulo equivalence of meaningless terms. The
following theorem does not require the expansion axiom.

\medskip

{ \renewcommand{\thethm}{\ref{thm_cr_modulo}}
\begin{thm}[Confluence modulo equivalence of meaningless terms]~

  If $t \sim_\Uc t'$, $t \infred_{\beta} s$ and $t' \infred_{\beta}
  s'$ then there exist~$r,r'$ such that $r \sim_\Uc r'$, $s
  \infred_{\beta} r$ and $s' \infred_{\beta} r'$.
\end{thm}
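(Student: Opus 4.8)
The plan is to derive confluence modulo $\sim_\Uc$ from confluence (Theorem~\ref{thm_bohm_cr}) and normalisation (Theorem~\ref{thm_bohm_norm}) of $\beta\bot_\Uc$-reduction, viewing the common $\beta\bot_\Uc$-normal form as a B{\"o}hm tree in which every meaningless subterm has been collapsed to~$\bot$. Since $R_\beta \subseteq R_{\beta\bot_\Uc}$, monotonicity of the infinitary closure gives $\infred_\beta \subseteq \infred_{\beta\bot_\Uc}$, so from the hypotheses we already have $t \infred_{\beta\bot_\Uc} s$ and $t' \infred_{\beta\bot_\Uc} s'$. The entire argument then takes place inside the $\beta\bot_\Uc$-calculus and is transported back to $\infred_\beta$ and $\sim_\Uc$ only at the very end.

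First I would record the properties of $\sim_\Uc$ that the final step needs. Since $\Uc \times \Uc$ is symmetric, $\sim_\Uc$ is symmetric; reflexivity and the congruence property are immediate from the defining rules; and, crucially, $\sim_\Uc$ is \emph{transitive}. Transitivity is where the \textbf{Indiscernibility} axiom enters: in its coinductive proof, whenever two of the three terms are related at the root by the redex rule one must know that a term $\sim_\Uc$-related to a member of~$\Uc$ again lies in~$\Uc$. Next comes the \emph{collapsing lemma}: if $t \sim_\Uc t'$ then $t \To_{\bot_\Uc} \hat{t}$ and $t' \To_{\bot_\Uc} \hat{t}$ for one and the same~$\hat{t}$, obtained by following the derivation of $t \sim_\Uc t'$ and contracting to~$\bot$ both subterms at every node where the derivation used the redex rule. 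As parallel $\bot_\Uc$-reduction is contained in infinitary $\beta\bot_\Uc$-reduction, this gives $t \infred_{\beta\bot_\Uc} \hat{t}$ and $t' \infred_{\beta\bot_\Uc} \hat{t}$. By Theorem~\ref{thm_bohm_norm} the term~$\hat{t}$ has a $\beta\bot_\Uc$-normal form~$n$, and by uniqueness of normal forms~$n$ is then the common normal form of both $t$ and~$t'$. Applying confluence (Theorem~\ref{thm_bohm_cr}) to the peak $t \infred_{\beta\bot_\Uc} s$, $t \infred_{\beta\bot_\Uc} n$ and using that~$n$ is normal yields $s \infred_{\beta\bot_\Uc} n$, and symmetrically $s' \infred_{\beta\bot_\Uc} n$.

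The crux, and the step I expect to be hardest, is the \emph{projection lemma}: whenever $a \infred_{\beta\bot_\Uc} q$ there is an~$r$ with $a \infred_\beta r$ and $r \sim_\Uc q$. Intuitively, performing only the $\beta$-steps of a $\beta\bot_\Uc$-reduction leaves, at each position where a~$\bot$ was introduced, a subterm that is still meaningless, so the $\beta$-reduct and the $\beta\bot_\Uc$-reduct are $\sim_\Uc$-related. I would prove it in two stages. The finite stage is a commutation lemma: if $w \sim_\Uc v$ and $v \reduces_{\beta\bot_\Uc} v'$ then $w \reduces_\beta w'$ for some $w' \sim_\Uc v'$; its one-step version is a short case analysis, a $\bot_\Uc$-step and a $\beta$-step falling inside a meaningless part both being absorbed into the $\sim_\Uc$-slack by taking $w' \equiv w$ (using \textbf{Closure} to see that the contracted meaningless subterm is still in~$\Uc$), while a $\beta$-step in the common part is copied into~$w$. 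The infinitary stage strengthens the claim to carry $\sim_\Uc$-slack in the source --- if $a \sim_\Uc a_0$ and $a_0 \infred_{\beta\bot_\Uc} q$ then there is~$r$ with $a \infred_\beta r$ and $r \sim_\Uc q$ --- and is proved by guarded coinduction on $a_0 \infred_{\beta\bot_\Uc} q$, projecting the finite prefix by the commutation lemma and then recursing under the exposed constructor. The one delicate sub-case occurs when the projected prefix relates the head to~$a$ by the redex rule, so that~$a$ is meaningless and need not even be an application; here one is forced to take $r \equiv a$ and must know that~$q$ too is meaningless, which is precisely \emph{closure of~$\Uc$ under $\infred_{\beta\bot_\Uc}$}. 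Establishing this closure from \textbf{Closure} and \textbf{Indiscernibility} compatibly with the coinduction, so as not to reason circularly, is the fiddly bookkeeping point of the whole proof.

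Finally I would assemble the diagram. Applying the projection lemma to $s \infred_{\beta\bot_\Uc} n$ and to $s' \infred_{\beta\bot_\Uc} n$ produces $r$ and $r'$ with $s \infred_\beta r$, $s' \infred_\beta r'$, $r \sim_\Uc n$ and $r' \sim_\Uc n$; by symmetry and transitivity of $\sim_\Uc$ we obtain $r \sim_\Uc r'$, which is exactly the required span. Only the meaningless-term axioms are used in the transport, so --- as announced just before the statement --- with some additional care the appeals to Theorems~\ref{thm_bohm_cr} and~\ref{thm_bohm_norm} can be replaced by the expansion-free ingredients underlying them, yielding the theorem for an arbitrary set of meaningless~$\Uc$.
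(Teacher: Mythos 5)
Your proof is correct under the section's standing assumption that $\Uc$ is strongly meaningless, but it takes a genuinely different --- and strictly weaker --- route than the paper, and your closing claim that the appeals to Theorems~\ref{thm_bohm_cr} and~\ref{thm_bohm_norm} can be replaced by ``expansion-free ingredients'' does not hold up for the argument as you have structured it. You run the entire diagram inside the $\beta\bot_\Uc$-calculus and invoke confluence and normalisation for $\Uc$ itself; both genuinely require the expansion axiom. In particular the normal form~$n$ of~$\hat t$ on which your argument pivots need not be reachable when $\Uc$ is merely meaningless: for the paper's example $\Uc = \Hc \cup \Oc$, the term $\Omega_\Os$ has no reachable $\beta\bot_\Uc$-normal form, because no finite $\beta\bot_\Uc$-reduct of $\Omega_\Os$ lies in~$\Uc$ and hence a $\bot$ can never be introduced along any $\infred_{\beta\bot_\Uc}$-reduction, while every $\bot$-free reduct still contains $\Os$ or a $\beta$-redex. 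So there is no expansion-free version of normalisation for arbitrary meaningless~$\Uc$; the statement itself fails there. The paper's trick, which you miss, is to change calculus: it routes the peak through the $\beta\bot_\Rc$-calculus, where $\Rc$ is the set of root-active terms, proved strongly meaningless in Theorem~\ref{thm_root_active_meaningless}. Confluence of $\infred_{\beta\bot_\Rc}$ (Corollary~\ref{cor_bohm_ra_cr}) plus postponement (Lemma~\ref{lem_bohm_decompose}) give $s \infred_\beta r \sim_\Rc u$ and $s' \infred_\beta r' \sim_\Rc u$, and then $\Rc \subseteq \Uc$ gives ${\sim_\Rc} \subseteq {\sim_\Uc}$, so transitivity and symmetry close the diagram --- with no expansion axiom for~$\Uc$ anywhere. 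The remaining differences are minor: the paper first reduces to the case $t \equiv t'$ via Lemma~\ref{lem_sim_beta_2} and Lemma~\ref{lem_sim_trans} rather than collapsing $t$ and $t'$ to a common $\hat t$; and your ``projection lemma'' is already available as Lemma~\ref{lem_bohm_decompose} combined with the observation that ${\To_{\bot_\Uc}} \subseteq {\sim_\Uc}$ (implicit in Lemma~\ref{lem_omega_indisc}), so the two-stage commutation argument you sketch, while plausible, is unnecessary work.
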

\addtocounter{thm}{-1}
}

Note that our overall proof strategy is different
from~\cite{Czajka2014,KennawayVries2003,KennawayKlopSleepVries1997}. We
first derive a strengthened normalisation result for B{\"o}hm
reduction, from this we derive confluence of B{\"ohm} reduction, then
we show that root-active terms are strongly meaningless thus
specialising the confluence result, and only using that we show
confluence modulo equivalence of meaningless
terms. In~\cite{Czajka2014,KennawayVries2003,KennawayKlopSleepVries1997}
first confluence modulo equivalence of meaningless terms is shown, and
from that confluence of B{\"o}hm reduction is derived. Of course, some
intermediate lemmas we prove have analogons
in~\cite{Czajka2014,KennawayVries2003,KennawayKlopSleepVries1997}, but
we believe the general proof strategy to be fundamentally different.

\subsection{Properties of~$\sim_\Uc$}

In this subsection~$\Uc$ is an arbitrary fixed set of meaningless
terms, and~$\sim_\Uc$ is the parallel closure of~$\Uc \times \Uc$. The
expansion axiom is not used in this subsection.

\begin{lem}\label{lem_sim_subst_2}
  If $t \sim_\Uc t'$ and $s \sim_\Uc s'$ then $t[s/x] \sim_\Uc
  t'[s'/x]$.
\end{lem}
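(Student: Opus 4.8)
The plan is to prove the statement by coinduction, performing case analysis on the rule used to derive $t \sim_\Uc t'$. Recall that $\sim_\Uc$ is the parallel closure of $\Uc \times \Uc$, so there are four cases, one for each rule defining the parallel closure, and I would treat each in turn, building the parallel-closure derivation of $t[s/x] \sim_\Uc t'[s'/x]$ by the matching rule. The overall shape is identical to the substitution example worked out in Section~\ref{sec_coind}; only the base case is genuinely new.

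First, in the base case $\la t, t'\ra \in \Uc \times \Uc$, i.e.\ $t, t' \in \Uc$, I would invoke the Substitution axiom of meaningless terms to conclude $t[s/x] \in \Uc$ and $t'[s'/x] \in \Uc$. Then $\la t[s/x], t'[s'/x]\ra \in \Uc \times \Uc$, so $t[s/x] \sim_\Uc t'[s'/x]$ by the first rule of the parallel closure. Note this rule is an axiom, so the case is not coinductive and uses no coinductive hypothesis; it is precisely here that the structure of~$\Uc$ enters. In the atom case $t \equiv t' \equiv a$, I would split on whether $a \equiv x$: if so, then $t[s/x] \equiv s$ and $t'[s'/x] \equiv s'$, and $s \sim_\Uc s'$ holds by assumption; otherwise $t[s/x] \equiv a \equiv t'[s'/x]$ and the atom rule applies.

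For the two compatible cases I would use the coinductive hypothesis on the immediate subterms. If $t \equiv t_1 t_2$ and $t' \equiv t_1' t_2'$ with $t_i \sim_\Uc t_i'$, the coinductive hypothesis gives $t_i[s/x] \sim_\Uc t_i'[s'/x]$, and the application rule yields $t[s/x] \equiv (t_1[s/x])(t_2[s/x]) \sim_\Uc (t_1'[s'/x])(t_2'[s'/x]) \equiv t'[s'/x]$. The abstraction case $t \equiv \lambda y . t_1$, $t' \equiv \lambda y . t_1'$ with $t_1 \sim_\Uc t_1'$ is analogous, using Barendregt's variable convention to assume $y \notin \FV(s, s', x)$ so that substitution commutes with the binder; the coinductive hypothesis gives $t_1[s/x] \sim_\Uc t_1'[s'/x]$ and the abstraction rule closes the case.

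The proof is largely routine coinductive bookkeeping, and I expect no serious obstacle. The only content-bearing step is the base case, where the Substitution axiom is exactly what is required; the two points demanding care are remembering that this base case is non-coinductive and must close by a direct application of the axiom rule rather than by the coinductive hypothesis, and handling the freshness side-condition in the abstraction case via the variable convention.
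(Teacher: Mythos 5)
Your proof is correct and matches the paper's argument, which is stated there in one line as ``by coinduction, using the substitution axiom'': the base case is discharged by the Substitution axiom exactly as you do, the variable case uses the hypothesis $s \sim_\Uc s'$ directly (legitimately, since it is a full-strength hypothesis of the lemma and hence contained in every approximant), and the congruence cases follow from the coinductive hypothesis. No gaps.
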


\begin{proof}
  By coinduction, using the substitution axiom.
\end{proof}

\begin{lem}\label{lem_sim_fin_beta_2}
  If $t \to_{\beta} s$ and $t \sim_{\Uc} t'$ then there is~$s'$ with
  $t' \to_{\beta}^\equiv s'$ and $s \sim_{\Uc} s'$.
\end{lem}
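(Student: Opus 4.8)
The plan is to prove the lemma by induction on the derivation of the one-step reduction $t \to_\beta s$, analysing in each case how the parallel-closure relation $t \sim_\Uc t'$ was derived. Since $\sim_\Uc$ is the parallel closure of $\Uc \times \Uc$, the assumption $t \sim_\Uc t'$ entails that one of the following holds: both $t, t' \in \Uc$; or $t \equiv t' \equiv a$ for an atom $a$; or $t$ and $t'$ are applications (resp.\ abstractions) whose corresponding immediate subterms are related by $\sim_\Uc$. The atom case never occurs, since an atom admits no one-step $\beta$-reduction. One case is common to all derivations: whenever $t \in \Uc$, I take $s' \equiv t'$. Indeed $t \to_\beta s$ gives $t \infred_\beta s$, so the closure axiom yields $s \in \Uc$, while the indiscernibility axiom applied to $t \in \Uc$ and $t \sim_\Uc t'$ yields $t' \in \Uc$; hence $s \sim_\Uc t'$ by the base rule of the parallel closure, and $t' \to_\beta^\equiv t'$ by reflexivity.

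The interesting base case of the induction is when $t$ is itself the contracted redex, i.e.\ $t \equiv (\lambda x . u) v$ and $s \equiv u[v/x]$. If $t \in \Uc$ we are done by the previous paragraph, so assume $t \sim_\Uc t'$ was derived by the application rule, giving $t' \equiv t_1' t_2'$ with $\lambda x . u \sim_\Uc t_1'$ and $v \sim_\Uc t_2'$. I then inspect how $\lambda x . u \sim_\Uc t_1'$ holds. If it holds because $\lambda x . u \in \Uc$, the overlap axiom gives $(\lambda x . u) v \equiv t \in \Uc$, returning us to the common case. Otherwise it holds by the abstraction rule, so $t_1' \equiv \lambda x . u'$ with $u \sim_\Uc u'$. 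Here I take $s' \equiv u'[t_2'/x]$: the reduction $t' \equiv (\lambda x . u') t_2' \to_\beta u'[t_2'/x] \equiv s'$ is a genuine $\beta$-step, and $s \equiv u[v/x] \sim_\Uc u'[t_2'/x] \equiv s'$ follows from $u \sim_\Uc u'$ and $v \sim_\Uc t_2'$ by Lemma~\ref{lem_sim_subst_2}.

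The remaining (inductive) cases are the compatible-closure congruences. For instance, if $t \equiv p q \to_\beta p' q \equiv s$ because $p \to_\beta p'$, and $t \notin \Uc$, then $t \sim_\Uc t'$ must come from the application rule, so $t' \equiv p'' q''$ with $p \sim_\Uc p''$ and $q \sim_\Uc q''$. The induction hypothesis applied to $p \to_\beta p'$ and $p \sim_\Uc p''$ produces some $p'''$ with $p'' \to_\beta^\equiv p'''$ and $p' \sim_\Uc p'''$; taking $s' \equiv p''' q''$ gives $t' \to_\beta^\equiv s'$ by congruence and $s \equiv p' q \sim_\Uc p''' q'' \equiv s'$ by the application rule of the parallel closure. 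Reduction under the right argument and under a $\lambda$ are handled identically. The only genuine subtlety, and the step I expect to require the most care, is the redex case above: one must split the hypothesis $\lambda x . u \sim_\Uc t_1'$ and invoke the overlap axiom precisely when the abstraction is meaningless, since this is the single place where all four relevant ingredients (closure, overlap, indiscernibility, and the substitution lemma) are forced to interact.
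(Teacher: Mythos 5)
Your proof is correct and follows essentially the same route as the paper's: induction on $t \to_\beta s$, dispatching the meaningless case via the closure and indiscernibility axioms, using the overlap axiom to rule out a meaningless abstraction in the redex case, and concluding there with Lemma~\ref{lem_sim_subst_2}. The only difference is cosmetic — you handle $t \in \Uc$ uniformly up front, whereas the paper first treats the base rule of $\sim_\Uc$ and revisits $t \in \Uc$ inside the redex case.
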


\begin{proof}
  Induction on $t \to_\beta s$. If the case $t,t' \in \Uc$ in the
  definition of $t \sim_\Uc t'$ holds then $s \in \Uc$ by the closure
  axiom, so $t' \sim_\Uc s$ and we may take $s' \equiv t'$. Thus
  assume otherwise. Then all cases follow directly from the inductive
  hypothesis, except when~$t$ is the contracted $\beta$-redex. Then $t
  \equiv (\lambda x . t_1) t_2$ and $s \equiv t_1[t_2/x]$. First
  assume $t \in \Uc$. Then also $t' \in \Uc$ by the indiscernibility
  axiom (note this does not imply that the first case in the
  definition of $t \sim_\Uc t'$ holds). Also $s \in \Uc$ by the
  closure axiom, so $t' \sim_\Uc s$ and we may take $s' \equiv t'$. So
  assume $t \notin \Uc$. Then $\lambda x . t_1 \notin \Uc$ by the
  overlap axiom. Hence $t' \equiv (\lambda x . t_1') t_2'$ with $t_i
  \sim_\Uc t_i'$. Thus $t_1[t_2/x] \sim_\Uc t_1'[t_2'/x]$ by
  Lemma~\ref{lem_sim_subst_2}. So we may take $s' \equiv
  t_1'[t_2'/x]$.
\end{proof}

\begin{lem}\label{lem_sim_beta_2}
  If $t \infred_{\beta} s$ and $t \sim_{\Uc} t'$ then there is~$s'$
  with $t' \infred_{\beta} s'$ and $s \sim_{\Uc} s'$.
\end{lem}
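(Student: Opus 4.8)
The statement generalises Lemma~\ref{lem_sim_fin_beta_2} from a single $\beta$-step to a full infinitary reduction $t \infred_\beta s$, so the natural approach is a coinductive proof with case analysis on the derivation of $t \infred_\beta s$, using Lemma~\ref{lem_sim_fin_beta_2} to handle the finite prefix and the coinductive hypothesis to handle the subterms below it. Following the informal style of Example~\ref{ex_skolem}, I would intertwine the corecursive definition of the witness $s'$ with the coinductive verification that $t' \infred_\beta s'$ and $s \sim_\Uc s'$.

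The plan is as follows. By the definition of the infinitary closure, $t \infred_\beta s$ falls into one of three cases according to the shape of $s$, and each begins with a finite reduction $t \reduces_\beta u$ where $u$ has the corresponding root shape. First I would lift this finite reduction across $\sim_\Uc$: since $t \sim_\Uc t'$, iterating Lemma~\ref{lem_sim_fin_beta_2} along $t \reduces_\beta u$ yields some $u'$ with $t' \reduces_\beta u'$ and $u \sim_\Uc u'$. (Strictly, one argues by induction on the length of the finite $\beta$-reduction $t \reduces_\beta u$, each step invoking Lemma~\ref{lem_sim_fin_beta_2} and the fact that $\to_\beta^\equiv$ composes into $\reduces_\beta$.) Now I would inspect the shape of $u$. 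If $s \equiv a$ is an atom with $u \reduces_\beta a$, then from $u \sim_\Uc u'$ I need $u'$ to reduce to something $\sim_\Uc$-related to $a$; if $s \equiv t_1 t_2$ with $u \reduces_\beta v_1 v_2$ and $t_i \infred_\beta t_i'$, then I extend $u \sim_\Uc u'$ down to $u' \reduces_\beta v_1' v_2'$ with $v_i \sim_\Uc v_i'$, apply the coinductive hypothesis to each pair $t_i \infred_\beta t_i'$ together with $v_i \sim_\Uc v_i'$ to obtain witnesses $s_i'$ with $t_i' \infred_\beta s_i'$ and $t_i' \sim_\Uc s_i'$, and then set $s' \equiv s_1' s_2'$; the abstraction case is analogous, reducing to one subterm.

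The main obstacle is the interaction between the case $u, u' \in \Uc$ of $\sim_\Uc$ and the further reduction of $u$ into a term with a committed root shape. When the root of $u$ is exposed as, say, an application $v_1 v_2$ but $u \in \Uc$, the relation $u \sim_\Uc u'$ does \emph{not} immediately give $u'$ an application shape; instead one must use the meaningless-term axioms (the \textbf{Closure} and \textbf{Indiscernibility} axioms, exactly as in the proof of Lemma~\ref{lem_sim_fin_beta_2}) to argue that $u' \in \Uc$ as well, so that $u'$ is itself $\sim_\Uc$-related to the residual and we may absorb the whole subterm into $\Uc$ rather than matching its structure. Concretely, I would first dispatch the case $u \in \Uc$ uniformly: then $s \in \Uc$ by \textbf{Closure}, $u' \in \Uc$ by \textbf{Indiscernibility}, and taking $s' \equiv u'$ (which $t'$ reaches by the finite reduction $t' \reduces_\beta u'$, hence $t' \infred_\beta u'$) gives $s \sim_\Uc s'$ directly since both lie in $\Uc$. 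Only in the remaining case $u \notin \Uc$ does one commit to matching the root constructor and recurse, and there the decomposition of $u \sim_\Uc u'$ into structurally matching subterms is forced by the parallel-closure rules. Care must be taken, as usual, that the witness $s'$ is produced by applying at least one constructor (or, in the absorbed case, is supplied directly without inspecting an element obtained from the coinductive hypothesis), so that the implicit Skolem function is guarded.
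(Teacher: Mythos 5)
Your proposal is correct and follows essentially the same route as the paper: coinduction on the shape of $s$, iterating Lemma~\ref{lem_sim_fin_beta_2} along the finite prefix to transport it across $\sim_\Uc$, absorbing the case where the root reduct lies in $\Uc$ via the closure axiom, and otherwise decomposing the parallel closure structurally and invoking the coinductive hypothesis on the subterms. The only cosmetic difference is that you dispatch the $\Uc$ case by appealing to indiscernibility to place $u'$ in $\Uc$, whereas the paper case-splits on which rule derived $t_1t_2 \sim_\Uc u$; both yield the same witness.
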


\begin{proof}
  By coinduction. If $s \equiv a$ then $t \reduces_\beta s$ and the
  claim follows from Lemma~\ref{lem_sim_fin_beta_2}. If $s \equiv
  s_1s_2$ then $t \reduces_\beta t_1t_2$ with $t_i \infred_\beta
  s_i$. By Lemma~\ref{lem_sim_fin_beta_2} there is~$u$ with $t_1t_2
  \sim_\Uc u$ and $t' \reduces_\beta u$. If $t_1t_2, u \in \Uc$ then
  $s \in \Uc$ by the closure axiom, and thus we may take $s' \equiv
  u$. Otherwise $u \equiv u_1u_2$ with $t_i \sim_\Uc u_i$. By the
  coinductive hypothesis we obtain $s_1',s_2'$ with $u_i \infred_\beta
  s_i'$ and $s_i \sim_\Uc s_i'$. Take $s' \equiv s_1's_2'$. Then $t'
  \infred_\beta s'$ and $s \sim_\Uc s'$. If $s \equiv \lambda x . s'$
  then the argument is analogous to the previous case.
\end{proof}

\begin{lem}\label{lem_sim_trans}
  If $t \sim_\Uc s$ and $s \sim_\Uc u$ then $t \sim_\Uc s$.
\end{lem}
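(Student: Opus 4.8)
The conclusion as stated should read $t \sim_\Uc u$; that is, the lemma asserts transitivity of $\sim_\Uc$. The plan is to prove this by coinduction, after first recording that $\sim_\Uc$ is symmetric. Symmetry is immediate by coinduction: since $\Uc \times \Uc$ is symmetric, case analysis on $t \sim_\Uc s$ yields $s \sim_\Uc t$ in each case, the base ($\Uc\times\Uc$) rule and the atom rule being symmetric, and the application and abstraction cases following from the coinductive hypothesis applied to the corresponding subterms. Combined with the indiscernibility axiom, symmetry gives the \emph{reverse} form we shall need: if $r \sim_\Uc r'$ and $r' \in \Uc$ then $r \in \Uc$ (apply indiscernibility to $r' \sim_\Uc r$).

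For transitivity, assume $t \sim_\Uc s$ and $s \sim_\Uc u$ and proceed by coinduction with case analysis on $t \sim_\Uc s$ and on $s \sim_\Uc u$. Both hypotheses refer to the full relation $\sim_\Uc$, so they may be freely inspected; the coinductive hypothesis is used only to supply the premises of a structural rule building the next approximant of the conclusion. If $t,s \in \Uc$ (the base rule for $t \sim_\Uc s$), then $s \in \Uc$ together with $s \sim_\Uc u$ gives $u \in \Uc$ by indiscernibility, so $t \sim_\Uc u$ by the base rule. If instead $s,u \in \Uc$ (the base rule for $s \sim_\Uc u$), then $s \in \Uc$ together with $t \sim_\Uc s$ gives $t \in \Uc$ by the reverse form above, so again $t \sim_\Uc u$ by the base rule. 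In the remaining cases both derivations are structural, and since the root constructor of $s$ fixes which structural rule each uses, they match: if $t \equiv t_1 t_2$, $s \equiv s_1 s_2$, $u \equiv u_1 u_2$ with $t_i \sim_\Uc s_i$ and $s_i \sim_\Uc u_i$, then the coinductive hypothesis gives $t_i \sim_\Uc u_i$, whence $t \sim_\Uc u$ by the application rule; the abstraction case is identical, and the atom case $t \equiv s \equiv u \equiv a$ is trivial.

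The main obstacle is the interaction of the base rule with the structural rules in the \emph{mixed} cases, for example when $t \sim_\Uc s$ is derived structurally while $s \sim_\Uc u$ comes from $s,u \in \Uc$. There the coinductive hypothesis cannot be applied and one must close directly by the base rule, which forces transporting $\Uc$-membership backwards from $s$ to $t$ along $t \sim_\Uc s$. This is exactly what the reverse form of indiscernibility provides, and it is the only place where symmetry of $\sim_\Uc$ is used. Each such closing step applies a rule immediately rather than invoking the coinductive hypothesis, so guardedness is never in question; note also that the expansion axiom is not needed.
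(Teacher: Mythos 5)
Your proof is correct and follows essentially the same route as the paper, whose entire proof reads ``By coinduction, using the indiscernibility axiom''; you have simply elaborated the case analysis, correctly identified the typo in the conclusion (it should be $t \sim_\Uc u$), and made explicit the one genuinely non-obvious point, namely that the mixed case with $s,u \in \Uc$ needs the symmetric form of indiscernibility, which follows from the (easily proved) symmetry of $\sim_\Uc$.
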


\begin{proof}
  By coinduction, using the indiscernibility axiom.
\end{proof}

\begin{lem}\label{lem_sim_to_bot}
  If $t \sim_\Uc s$ then there is~$r$ with $t \To_{\bot_\Uc} r$ and $s
  \To_{\bot_\Uc} r$.
\end{lem}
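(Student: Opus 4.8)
The plan is to prove this by coinduction, with case analysis on the derivation of $t \sim_\Uc s$. Since the statement is existential, following the style of Example~\ref{ex_skolem} the common reduct $r$ is produced by an implicit Skolem function: intuitively, $r$ is obtained from $t$ (equivalently from $s$) by collapsing to $\bot$ exactly those subterms matched by the $\Uc \times \Uc$ base rule in the derivation of $t \sim_\Uc s$.

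Recall that $\sim_\Uc$ is the parallel closure of $\Uc \times \Uc$, so a derivation of $t \sim_\Uc s$ falls into one of four cases. First I would handle the base case $\la t, s \ra \in \Uc \times \Uc$, i.e.\ $t, s \in \Uc$: here I take $r \equiv \bot$. If $t \not\equiv \bot$ then $t \To_{\bot_\Uc} \bot$ directly by the defining rule of $\To_{\bot_\Uc}$ (since $t \in \Uc$ and $t \not\equiv \bot$), while if $t \equiv \bot$ then $t \To_{\bot_\Uc} \bot$ by the atom rule, as $\bot$ is a constant; either way $t \To_{\bot_\Uc} \bot$, and symmetrically $s \To_{\bot_\Uc} \bot$. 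The atom case $t \equiv s \equiv a$ is immediate by taking $r \equiv a$ and using the atom rule of $\To_{\bot_\Uc}$.

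The two remaining cases are the compatible ones and are dispatched by the coinductive hypothesis. If $t \equiv t_1 t_2$ and $s \equiv s_1 s_2$ with $t_i \sim_\Uc s_i$, then the coinductive hypothesis supplies $r_1, r_2$ with $t_i \To_{\bot_\Uc} r_i$ and $s_i \To_{\bot_\Uc} r_i$; taking $r \equiv r_1 r_2$ and applying the application rule of $\To_{\bot_\Uc}$ to both conjuncts finishes this case. The abstraction case $t \equiv \lambda x . t'$, $s \equiv \lambda x . s'$ with $t' \sim_\Uc s'$ is analogous, taking $r \equiv \lambda x . r'$ for the $r'$ obtained from the coinductive hypothesis.

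I do not expect any genuine obstacle here; the argument is a routine structural coinduction. The only point requiring a moment's care is the base case, where one must remember that $R_{\bot_\Uc}$ excludes the pair $\la \bot, \bot \ra$, so the reduction $\bot \To_{\bot_\Uc} \bot$ must be justified by the atom rule rather than the base rule. It is also worth noting that guardedness of the implicit Skolem function holds because in every case $r$ is produced with a constructor at its root (namely $\bot$, an atom, an application, or an abstraction), and that no axiom of $\Uc$ beyond what is already built into the definition of $\sim_\Uc$ is needed; in particular the expansion axiom is not used.
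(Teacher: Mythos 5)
Your proof is correct and is exactly the routine structural coinduction the paper intends — its own proof of Lemma~\ref{lem_sim_to_bot} is simply ``By coinduction,'' and your case analysis (collapsing the $\Uc\times\Uc$ base case to $\bot$, handling $\bot\To_{\bot_\Uc}\bot$ via the atom rule, and dispatching application and abstraction by the coinductive hypothesis) fills in the details faithfully, with guardedness of the implicit Skolem function correctly noted.
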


\begin{proof}
  By coinduction.
\end{proof}

\subsection{Properties of parallel $\bot_\Uc$-reduction}

Recall that~$\Uc$ is an arbitrary fixed set of strongly meaningless
terms. The expansion axiom is not used in this subsection except for
Corollary~\ref{cor_back_active}, Lemma~\ref{lem_omega_merge},
Corollary~\ref{cor_bohm_beta_append} and
Lemma~\ref{lem_par_bot_preserves_rnf_rev}.

\begin{lem}\label{lem_subst_omega_1}
  If $s \To_{\bot_\Uc} s'$ and $t \To_{\bot_\Uc} t'$ then $s[t/x]
  \To_{\bot_\Uc} s'[t'/x]$.
\end{lem}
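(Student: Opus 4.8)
The plan is to prove the lemma by coinduction, performing case analysis on the derivation of $s \To_{\bot_\Uc} s'$ while treating the hypothesis $t \To_{\bot_\Uc} t'$ as fixed data carried through the whole argument. This statement has exactly the shape discussed in Section~\ref{sec_coind}: the conclusion $s[t/x] \To_{\bot_\Uc} s'[t'/x]$ lives in the coinductively defined relation $\To_{\bot_\Uc}$, the premises do not mention any approximant, and each time I apply a parallel-closure rule to build the conclusion I am implicitly moving from the $\gamma$-approximant supplied by the coinductive hypothesis to the $\gamma+1$-approximant. Since $\To_{\bot_\Uc}$ is the parallel closure of $R_{\bot_\Uc}$, there are four cases, one for each rule.

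Three of the cases I expect to be routine. If $s \equiv a \equiv s'$ is an atom, I split on whether $a \equiv x$: when $a \equiv x$ the two substituted terms are $t$ and $t'$ and I conclude directly from the hypothesis $t \To_{\bot_\Uc} t'$ (using ${\To_{\bot_\Uc}} \subseteq {\To_{\bot_\Uc}^{\gamma+1}}$, as in the examples of Section~\ref{sec_coind}), while when $a \not\equiv x$ both sides are $a$ and I apply the atom rule. If $s \equiv s_1 s_2$ and $s' \equiv s_1' s_2'$ with $s_i \To_{\bot_\Uc} s_i'$, the coinductive hypothesis gives $s_i[t/x] \To_{\bot_\Uc} s_i'[t'/x]$ and the application rule assembles the conclusion. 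The abstraction case $s \equiv \lambda y . s_0$, $s' \equiv \lambda y . s_0'$ is analogous, using the variable convention to assume $y \notin \FV(t,x)$ so that substitution commutes with the binder, and then applying the abstraction rule to the coinductive hypothesis.

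The main obstacle, and the only place where an axiom of meaningless terms is used, is the redex case $\la s, s' \ra \in R_{\bot_\Uc}$, i.e.\ $s \in \Uc$ with $s \not\equiv \bot$ and $s' \equiv \bot$, so that $s'[t'/x] \equiv \bot$. Here I would invoke the \textbf{Substitution} axiom to obtain $s[t/x] \in \Uc$ and then contract to $\bot$. The subtle point to watch is that $R_{\bot_\Uc}$ requires its left-hand side to differ from $\bot$: if $s[t/x] \not\equiv \bot$ I simply use $\la s[t/x], \bot \ra \in R_{\bot_\Uc}$, but if $s[t/x] \equiv \bot$ (which can only happen when $s \equiv x$ and $t \equiv \bot$) I must instead conclude $\bot \To_{\bot_\Uc} \bot$ via the atom rule. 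Note this case needs neither the coinductive hypothesis nor any guardedness check, since the rule applied has no premises; the entire difficulty is the bookkeeping around the degenerate $\bot$ subterm together with the appeal to the substitution axiom.
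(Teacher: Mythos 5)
Your proof is correct and follows essentially the same route as the paper, whose entire argument is ``coinduction with case analysis on $s \To_{\bot_\Uc} s'$, using the substitution axiom.'' Your extra attention to the corner case $s[t/x] \equiv \bot$ (handled by the atom rule rather than an $R_{\bot_\Uc}$-step) is a detail the paper leaves implicit, but it does not change the structure of the argument.
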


\begin{proof}
  Coinduction with case analysis on $s \To_{\bot_\Uc} s'$, using the
  substitution axiom.
\end{proof}

\begin{lem}\label{lem_omega_to_bohm}
  If $t \To_{\bot_\Uc} s$ then $t \infred_{\beta\bot_\Uc} s$.
\end{lem}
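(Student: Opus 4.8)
The plan is to prove this by coinduction with a case analysis on the derivation of $t \To_{\bot_\Uc} s$, matching each of the four rules for the parallel closure against the appropriate rule for the infinitary closure $\infred_{\beta\bot_\Uc}$. The key observation that makes every case go through uniformly is that $\reduces_{\beta\bot_\Uc}$ is reflexive: in each structural case I may take the empty reduction prefix $t \reduces_{\beta\bot_\Uc} t$ and then let the coinductive hypothesis supply the infinitary reductions on the immediate subterms.

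First I would dispatch the two non-recursive cases. If $s \equiv \bot$ with $t \in \Uc$ and $t \not\equiv \bot$ (the redex rule), then $\la t, \bot \ra \in R_{\bot_\Uc} \subseteq R_{\beta\bot_\Uc}$, so $t \to_{\beta\bot_\Uc} \bot$ and hence $t \reduces_{\beta\bot_\Uc} \bot$; since $\bot$ is a constant, and therefore an atom, the first rule for $\infred$ yields $t \infred_{\beta\bot_\Uc} \bot$. The case $t \equiv a \To_{\bot_\Uc} a$ for an atom $a$ is handled identically, using the reflexive reduction $a \reduces_{\beta\bot_\Uc} a$ together with the same first rule.

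Next I would treat the two structural cases. If $t \equiv s_1 s_2 \To_{\bot_\Uc} t_1 t_2 \equiv s$ with $s_1 \To_{\bot_\Uc} t_1$ and $s_2 \To_{\bot_\Uc} t_2$, then by the coinductive hypothesis $s_1 \infred_{\beta\bot_\Uc} t_1$ and $s_2 \infred_{\beta\bot_\Uc} t_2$; combining these with the reflexive prefix $t \reduces_{\beta\bot_\Uc} s_1 s_2$ via the application rule for $\infred$ gives $t \infred_{\beta\bot_\Uc} t_1 t_2$. The abstraction case $\lambda x . s_0 \To_{\bot_\Uc} \lambda x . s_0'$, with $s_0 \To_{\bot_\Uc} s_0'$, is analogous: the coinductive hypothesis yields $s_0 \infred_{\beta\bot_\Uc} s_0'$, and the lambda rule for $\infred$ concludes $t \infred_{\beta\bot_\Uc} \lambda x . s_0'$.

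I do not expect any genuine obstacle here; the proof is a routine coinduction. The one point deserving attention is \emph{guardedness}: each appeal to the coinductive hypothesis returns an $\infred_{\beta\bot_\Uc}$ on a proper subterm, and in every such case this is immediately placed as a premise beneath a constructor (application or abstraction) in the conclusion, so the implicitly constructed derivation of $\infred_{\beta\bot_\Uc}$ is productive. In the atom and redex cases no coinductive appeal is made, so guardedness is trivially satisfied there.
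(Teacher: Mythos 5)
Your proof is correct and is exactly the argument the paper intends by its one-line ``By coinduction'': a case analysis on the four rules for $\To_{\bot_\Uc}$, using the reflexive prefix $t \reduces_{\beta\bot_\Uc} t$ (or the single step $t \to_{\beta\bot_\Uc} \bot$ in the redex case) and feeding the coinductive hypothesis into the matching rule for $\infred_{\beta\bot_\Uc}$. The guardedness check you include is the right thing to verify and holds as you describe.
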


\begin{proof}
  By coinduction.
\end{proof}

\begin{lem}\label{lem_omega_indisc}
  If $t \in \Uc$ and $t \To_{\bot_\Uc} s$ or $s \To_{\bot_\Uc} t$ then
  $s \in \Uc$.
\end{lem}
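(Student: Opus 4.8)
The plan is to reduce the statement to the indiscernibility axiom by showing that every $\To_{\bot_\Uc}$-step is in particular a $\sim_\Uc$-step. First I would record the auxiliary fact that $\bot \in \Uc$: since $\bot$ has no rnf it is root-active, so $\bot \in \Rc$, and $\Rc \subseteq \Uc$ by the root-activeness axiom.

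This fact immediately gives $R_{\bot_\Uc} \subseteq \Uc \times \Uc$. Indeed, any pair in $R_{\bot_\Uc}$ has the form $\langle u, \bot\rangle$ with $u \in \Uc$, and we have just seen $\bot \in \Uc$, so $\langle u, \bot\rangle \in \Uc \times \Uc$. Since $\To_{\bot_\Uc}$ and $\sim_\Uc$ are the parallel closures of $R_{\bot_\Uc}$ and $\Uc\times\Uc$ respectively, and parallel closure is monotone in its base relation (a routine coinduction, analogous to the monotonicity lemma for $\infred$), we conclude $\To_{\bot_\Uc} \subseteq \sim_\Uc$.

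Now both disjuncts follow. If $t \in \Uc$ and $t \To_{\bot_\Uc} s$, then $t \sim_\Uc s$ by the inclusion, and indiscernibility yields $s \in \Uc$. If instead $s \To_{\bot_\Uc} t$ with $t \in \Uc$, then $s \sim_\Uc t$, and since $\sim_\Uc$ is symmetric — its base relation $\Uc \times \Uc$ is symmetric, and symmetry propagates through the parallel-closure rules by a routine coinduction — we obtain $t \sim_\Uc s$, whence $s \in \Uc$ again by indiscernibility.

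The only point requiring care is the handling of directions: the indiscernibility axiom is stated with the known $\Uc$-member on the left of $\sim_\Uc$, so for the second disjunct one must first flip the relation using its symmetry rather than appealing to indiscernibility directly. Everything else — the inclusion of the parallel closures and the symmetry of $\sim_\Uc$ — is a standard coinduction of the kind already carried out for analogous relations in this section, and no use of the expansion axiom is needed.
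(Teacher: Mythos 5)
Your proposal is correct and matches the paper's proof in essence: the paper also establishes $t \sim_\Uc s$ from $t \To_{\bot_\Uc} s$ (or $s \To_{\bot_\Uc} t$) by a routine coinduction using that $\bot$ is root-active and hence in $\Uc$, and then concludes by the indiscernibility axiom. Your packaging of the coinduction as monotonicity of the parallel closure, and your explicit note about symmetry for the second disjunct, are just slightly more detailed renderings of the same argument.
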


\begin{proof}
  Using the root-activeness axiom and that $\bot$ is root-active, show
  by coinduction that $t \sim_\Uc s$. Then use the indiscernibility
  axiom.
\end{proof}

\begin{lem}\label{lem_omega_1_collapse}
  If $t_1 \To_{\bot_\Uc} t_2 \To_{\bot_\Uc} t_3$ then $t_1 \To_{\bot_\Uc}
  t_3$.
\end{lem}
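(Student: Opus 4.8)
The plan is to prove the statement by coinduction, performing a case analysis on the first reduction $t_1 \To_{\bot_\Uc} t_2$ and, within each case, a sub-case analysis on the second reduction $t_2 \To_{\bot_\Uc} t_3$. Recall that $\To_{\bot_\Uc}$ is the parallel closure of $R_{\bot_\Uc}$, so its derivations are generated by four rules: the base rule sending any $s \in \Uc$ with $s \not\equiv \bot$ to $\bot$, the atom rule $a \To_{\bot_\Uc} a$, the application rule, and the abstraction rule. The goal is, in each case, to exhibit the outermost rule witnessing $t_1 \To_{\bot_\Uc} t_3$, invoking the coinductive hypothesis only underneath a constructor so that the implicit construction stays guarded.

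First I would dispose of the cases where the first reduction is a leaf of the derivation. If $t_1 \To_{\bot_\Uc} t_2$ is by the base rule, then $t_2 \equiv \bot$; since $\bot$ is an atom, the only reduction out of it is $\bot \To_{\bot_\Uc} \bot$, forcing $t_3 \equiv \bot$, and then $t_1 \To_{\bot_\Uc} \bot$ holds by the base rule because $t_1 \in \Uc$ and $t_1 \not\equiv \bot$. If $t_1 \To_{\bot_\Uc} t_2$ is by the atom rule, then $t_1 \equiv t_2 \equiv a$, and the second reduction is either the atom rule (so $t_3 \equiv a$ and we reuse the atom rule) or the base rule (so $a \in \Uc$ and $t_3 \equiv \bot$, and we conclude by the base rule applied to $t_1 \equiv a$).

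The structural cases are where the coinduction does its work. If $t_1 \To_{\bot_\Uc} t_2$ is by the application rule, write $t_1 \equiv s_1 s_2$ and $t_2 \equiv u_1 u_2$ with $s_i \To_{\bot_\Uc} u_i$. When the second reduction is also by the application rule we have $t_3 \equiv v_1 v_2$ with $u_i \To_{\bot_\Uc} v_i$, so the coinductive hypothesis applied to $s_i \To_{\bot_\Uc} u_i \To_{\bot_\Uc} v_i$ gives $s_i \To_{\bot_\Uc} v_i$, and the application rule yields $t_1 \To_{\bot_\Uc} t_3$. The abstraction case is entirely analogous, recursing under the binder. In both cases the use of the coinductive hypothesis sits beneath a constructor, so it is guarded.

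The main obstacle is the remaining sub-case: the second reduction is by the base rule, i.e.\ $t_2 \in \Uc$ and $t_3 \equiv \bot$, while the first reduction left an application or abstraction at the root of $t_1$. Here I cannot recurse; instead I must show $t_1 \in \Uc$ in order to fire the base rule for $t_1 \To_{\bot_\Uc} \bot$. This is precisely backward closure of $\Uc$ under parallel $\bot_\Uc$-reduction, supplied by Lemma~\ref{lem_omega_indisc}: from $t_1 \To_{\bot_\Uc} t_2$ and $t_2 \in \Uc$ we obtain $t_1 \in \Uc$, and since $t_1$ is an application or abstraction we have $t_1 \not\equiv \bot$, so the base rule applies. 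This is the only point at which a property of $\Uc$ beyond the bare shapes of the rules is needed.
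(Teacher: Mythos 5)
Your proof is correct and follows essentially the same route as the paper, which proves the lemma by coinduction with case analysis on the two derivations and invokes Lemma~\ref{lem_omega_indisc} in exactly the same place — to push membership in $\Uc$ backward along $t_1 \To_{\bot_\Uc} t_2$ when $t_2$ collapses to $\bot$ by the base rule. The only cosmetic difference is that the paper organises the case analysis around $t_2 \To_{\bot_\Uc} t_3$ first, whereas you lead with $t_1 \To_{\bot_\Uc} t_2$; the content is the same.
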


\begin{proof}
  Coinduction with case analysis on $t_2 \To_{\bot_\Uc} t_3$, using
  Lemma~\ref{lem_omega_indisc}.
\end{proof}

\begin{lem}\label{lem_omega_1_postpone}
  If $t_1 \To_{\bot_\Uc} t_2 \to_\beta t_3$ then there exists~$t_1'$
  such that $t_1 \to_\beta t_1' \To_{\bot_\Uc} t_3$.
\end{lem}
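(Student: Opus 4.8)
The plan is to argue by induction on the derivation of the single $\beta$-step $t_2 \to_\beta t_3$ (recall that $\to_\beta$, being a compatible closure, is defined inductively), inverting the parallel reduction $t_1 \To_{\bot_\Uc} t_2$ at its root according to the shape of $t_2$. The essential observation is that the base rule of $\To_{\bot_\Uc}$ (coming from $R_{\bot_\Uc}$) can only produce $\bot$ on the right-hand side; hence whenever $t_2$ is an application or an abstraction, the root rule of $\To_{\bot_\Uc}$ must be the corresponding structural rule, so that $t_1$ mirrors the top-level shape of $t_2$. Such root inversion of a coinductive relation is legitimate, since membership in the greatest fixpoint entails that one of the defining rule instances holds at the root.

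First I would treat the base case, where $t_2$ is the contracted redex, say $t_2 \equiv (\lambda x . s) u$ and $t_3 \equiv s[u/x]$. Since $t_2$ is an application and not $\bot$, inverting $t_1 \To_{\bot_\Uc} t_2$ forces $t_1 \equiv p\, q$ with $p \To_{\bot_\Uc} \lambda x . s$ and $q \To_{\bot_\Uc} u$. Inverting $p \To_{\bot_\Uc} \lambda x . s$ in the same way (the right-hand side is an abstraction, not $\bot$) forces $p \equiv \lambda x . s'$ with $s' \To_{\bot_\Uc} s$. Thus $t_1 \equiv (\lambda x . s')\, q \to_\beta s'[q/x]$, and I take $t_1' \equiv s'[q/x]$. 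By Lemma~\ref{lem_subst_omega_1}, from $s' \To_{\bot_\Uc} s$ and $q \To_{\bot_\Uc} u$ we obtain $s'[q/x] \To_{\bot_\Uc} s[u/x] \equiv t_3$, as required.

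The remaining cases are the compatible-closure contexts and are handled uniformly by the inductive hypothesis. For instance, if $t_2 \equiv v\, w \to_\beta v'\, w \equiv t_3$ with $v \to_\beta v'$, then inversion gives $t_1 \equiv v_0 w_0$ with $v_0 \To_{\bot_\Uc} v$ and $w_0 \To_{\bot_\Uc} w$; applying the inductive hypothesis to $v_0 \To_{\bot_\Uc} v \to_\beta v'$ yields $v_0'$ with $v_0 \to_\beta v_0' \To_{\bot_\Uc} v'$, and I take $t_1' \equiv v_0' w_0$, so that $t_1 \to_\beta t_1' \To_{\bot_\Uc} t_3$ by the application rules for $\to_\beta$ and $\To_{\bot_\Uc}$. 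The case of a redex in the argument and the case of a redex under an abstraction (where $t_1 \equiv \lambda x . r_0$ with $r_0 \To_{\bot_\Uc} r$) are entirely analogous.

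I expect the main obstacle to be the base case, specifically the double inversion showing that $t_1$ retains the redex shape $(\lambda x . s')\, q$. This relies crucially on the fact that $\To_{\bot_\Uc}$ never rewrites a subterm into an abstraction or an application: its only non-structural effect is collapsing a $\Uc$-term to $\bot$. Consequently the abstraction at the head of the redex in $t_2$ cannot have been created by the parallel reduction and must already be present in $t_1$, up to a further parallel reduction of its body, which is exactly what allows the single $\beta$-step to be performed first.
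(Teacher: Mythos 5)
Your proposal is correct and follows essentially the same route as the paper: induction on the derivation of $t_2 \to_\beta t_3$, with the redex case handled by inverting $\To_{\bot_\Uc}$ (using that its base rule only produces $\bot$ on the right) to recover the shape $(\lambda x . s')\,q$ in $t_1$, and then applying Lemma~\ref{lem_subst_omega_1}. The paper states the inversion more tersely, but the argument is the same.
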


\begin{proof}
  Induction on $t_2 \to_\beta t_3$. The only interesting case is when
  $t_2 \equiv (\lambda x . s_1) s_2$ and $t_3 \equiv s_1[s_2/x]$. Then
  $t_1 \equiv (\lambda x . u_1) u_2$ with $u_i \To_{\bot_\Uc} s_i$. By
  Lemma~\ref{lem_subst_omega_1}, $u_1[u_2/x] \To_{\bot_\Uc}
  s_1[s_2/x]$. Thus take $t_1' \equiv u_1[u_2/x]$.
\end{proof}

\begin{lem}\label{lem_fin_bohm_decompose}
  If $s \reduces_{\beta\bot_\Uc} t$ then there exists~$r$ such that $s
  \reduces_\beta r \To_{\bot_\Uc} t$.
\end{lem}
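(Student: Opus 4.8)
The plan is to postpone all the collapsing $\bot_\Uc$-steps past the ordinary $\beta$-steps, so that every finite $\beta\bot_\Uc$-reduction is reorganised into a $\beta$-phase followed by a single parallel $\bot_\Uc$-step. The two facts that make this work are already available: Lemma~\ref{lem_omega_1_postpone} lets me swap a parallel $\bot_\Uc$-step to the left of a single $\beta$-step, and Lemma~\ref{lem_omega_1_collapse} lets me merge two consecutive parallel $\bot_\Uc$-steps into one. I would also first record the routine embedding $\to_{\bot_\Uc} \subseteq \To_{\bot_\Uc}$: a single compatible-closure collapse is a special case of a parallel collapse, using that $\To_{\bot_\Uc}$ is reflexive (proved by coinduction exactly as $t \infred t$ was).

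The proof is then a finitary induction on the length~$n$ of the reduction $s \reduces_{\beta\bot_\Uc} t$, peeling off the \emph{last} step. For $n = 0$ we have $s \equiv t$, and $r \equiv s$ works since $s \reduces_\beta s$ and $s \To_{\bot_\Uc} s$. For the inductive step write $s \reduces_{\beta\bot_\Uc} s_1 \to_{\beta\bot_\Uc} t$ and apply the induction hypothesis to obtain~$r_1$ with $s \reduces_\beta r_1 \To_{\bot_\Uc} s_1$. I then split on the kind of the last step. If $s_1 \to_\beta t$, then from $r_1 \To_{\bot_\Uc} s_1 \to_\beta t$ Lemma~\ref{lem_omega_1_postpone} produces~$r'$ with $r_1 \to_\beta r' \To_{\bot_\Uc} t$; concatenating gives $s \reduces_\beta r_1 \to_\beta r'$, so $r \equiv r'$ satisfies $s \reduces_\beta r \To_{\bot_\Uc} t$. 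If instead $s_1 \to_{\bot_\Uc} t$, then $s_1 \To_{\bot_\Uc} t$ by the embedding, so $r_1 \To_{\bot_\Uc} s_1 \To_{\bot_\Uc} t$ and Lemma~\ref{lem_omega_1_collapse} yields $r_1 \To_{\bot_\Uc} t$; here $r \equiv r_1$ works.

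There is no real obstacle once the two auxiliary lemmas are in place; the content of the argument is entirely in Lemma~\ref{lem_omega_1_postpone}, which does the genuine work of commuting a collapse past a $\beta$-contraction (including the case where the $\beta$-redex sits inside a subterm that the collapse is about to erase). Everything here is inductive on finite reductions, so no coinduction is needed: splitting off the last step keeps me working with single $\beta$- and $\bot_\Uc$-steps, which is exactly the shape consumed by Lemmas~\ref{lem_omega_1_postpone} and~\ref{lem_omega_1_collapse}, and avoids having to iterate the postponement lemma over a whole $\beta$-reduction.
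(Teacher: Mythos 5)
Your proof is correct and follows essentially the same route as the paper, whose proof of this lemma is exactly an induction on the length of the reduction using Lemma~\ref{lem_omega_1_postpone} and Lemma~\ref{lem_omega_1_collapse}. The only detail you add beyond the paper's one-line argument is the (correct and routine) observation that a single compatible-closure $\bot_\Uc$-step embeds into $\To_{\bot_\Uc}$ via its coinductively proved reflexivity.
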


\begin{proof}
  Induction on the length of $s \reduces_{\beta\bot_\Uc} t$, using
  Lemma~\ref{lem_omega_1_postpone} and Lemma~\ref{lem_omega_1_collapse}.
\end{proof}

\begin{cor}\label{cor_omega_fin_postpone}
  If $t_1 \To_{\bot_\Uc} t_2 \reduces_{\beta\bot_\Uc} t_3$ then there
  is~$s$ with $t_1 \reduces_\beta s \To_{\bot_\Uc} t_3$.
\end{cor}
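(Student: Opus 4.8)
The plan is to first split the mixed $\beta\bot_\Uc$-reduction into a pure $\beta$-phase followed by a single parallel $\bot_\Uc$-step, then to commute the initial $\To_{\bot_\Uc}$ past that $\beta$-phase, and finally to fuse the two resulting parallel $\bot_\Uc$-reductions into one.

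First I would apply Lemma~\ref{lem_fin_bohm_decompose} to the hypothesis $t_2 \reduces_{\beta\bot_\Uc} t_3$, obtaining a term~$r$ with $t_2 \reduces_\beta r \To_{\bot_\Uc} t_3$. This reduces the problem to handling a genuine finite $\beta$-reduction preceded by $t_1 \To_{\bot_\Uc} t_2$, which is precisely the situation that a postponement argument can address.

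Next I would establish the postponement claim: if $t_1 \To_{\bot_\Uc} t_2 \reduces_\beta r$ then there is~$s$ with $t_1 \reduces_\beta s \To_{\bot_\Uc} r$. I would prove this by induction on the length of $t_2 \reduces_\beta r$. The base case $t_2 \equiv r$ is immediate, taking $s \equiv t_1$. For the inductive step, writing $t_2 \reduces_\beta r' \to_\beta r$, the induction hypothesis on $t_2 \reduces_\beta r'$ yields an~$s'$ with $t_1 \reduces_\beta s' \To_{\bot_\Uc} r'$; then Lemma~\ref{lem_omega_1_postpone} applied to $s' \To_{\bot_\Uc} r' \to_\beta r$ produces an~$s$ with $s' \to_\beta s \To_{\bot_\Uc} r$, and concatenating $t_1 \reduces_\beta s' \to_\beta s$ gives the required~$s$.

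Finally, combining this postponement result $t_1 \reduces_\beta s \To_{\bot_\Uc} r$ with $r \To_{\bot_\Uc} t_3$ from the first step yields $s \To_{\bot_\Uc} r \To_{\bot_\Uc} t_3$, so Lemma~\ref{lem_omega_1_collapse} gives $s \To_{\bot_\Uc} t_3$, and hence $t_1 \reduces_\beta s \To_{\bot_\Uc} t_3$ as desired. I do not anticipate a genuine obstacle: each ingredient is already available, and the only real work is the single-step-at-a-time postponement, whose crux—moving a parallel $\bot_\Uc$-step past one $\beta$-step—is exactly Lemma~\ref{lem_omega_1_postpone}. The points to watch are the direction of the postponement (the $\beta$-step is pulled forward while the $\bot_\Uc$-reduction is pushed back) and threading the transitive-reflexive closures correctly through the induction.
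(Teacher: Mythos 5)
Your proposal is correct and follows essentially the same route as the paper, whose proof is exactly the one-line combination of Lemmas~\ref{lem_fin_bohm_decompose}, \ref{lem_omega_1_postpone} and~\ref{lem_omega_1_collapse} that you spell out. The intermediate postponement claim you prove by induction on the length of the $\beta$-reduction is the expected elaboration of that citation.
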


\begin{proof}
  Follows from
  Lemmas~\ref{lem_fin_bohm_decompose},~\ref{lem_omega_1_postpone},~\ref{lem_omega_1_collapse}.
\end{proof}

\begin{lem}\label{lem_omega_1_beta_to_beta}
  If $t_1 \To_{\bot_\Uc} t_2 \infred_{\beta\bot_\Uc} t_3$ then $t_1
  \infred_{\beta\bot_\Uc} t_3$.
\end{lem}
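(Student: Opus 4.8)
The plan is to prove the statement by coinduction with case analysis on the derivation of $t_2 \infred_{\beta\bot_\Uc} t_3$, using Corollary~\ref{cor_omega_fin_postpone} as the central tool to commute the parallel $\bot_\Uc$-step past the finite reduction prefix hidden in each rule defining $\infred_{\beta\bot_\Uc}$. Every derivation of $t_2 \infred_{\beta\bot_\Uc} t_3$ begins with a finite reduction $t_2 \reduces_{\beta\bot_\Uc} u$ whose target $u$ is an atom, an application $u_1u_2$, or an abstraction $\lambda x . u_0$. In each case I first apply Corollary~\ref{cor_omega_fin_postpone} to $t_1 \To_{\bot_\Uc} t_2 \reduces_{\beta\bot_\Uc} u$, obtaining a term $s$ with $t_1 \reduces_\beta s \To_{\bot_\Uc} u$, and then analyse the shape of $s \To_{\bot_\Uc} u$ according to the structure of $u$.

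For the base case, suppose $t_3 \equiv a$ is an atom with $t_2 \reduces_{\beta\bot_\Uc} a$. Then $s \To_{\bot_\Uc} a$, and since $a$ is an atom the only parallel-closure rules producing it on the right are the atom rule (forcing $s \equiv a$) and the $\bot_\Uc$-redex rule (forcing $a \equiv \bot$ and $\la s, \bot \ra \in R_{\bot_\Uc}$, so $s \to_{\beta\bot_\Uc} \bot$). In either case $s \reduces_{\beta\bot_\Uc} a$, hence $t_1 \reduces_{\beta\bot_\Uc} a$, and $t_1 \infred_{\beta\bot_\Uc} a$ follows directly by the atom rule for $\infred$, with no appeal to the coinductive hypothesis.

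For the application case, $t_3 \equiv t_1't_2'$ with $t_2 \reduces_{\beta\bot_\Uc} u_1u_2$, $u_1 \infred_{\beta\bot_\Uc} t_1'$ and $u_2 \infred_{\beta\bot_\Uc} t_2'$. Now $s \To_{\bot_\Uc} u_1u_2$; since $u_1u_2$ is neither an atom nor $\bot$, only the application rule of the parallel closure can apply, so $s \equiv s_1s_2$ with $s_i \To_{\bot_\Uc} u_i$. Applying the coinductive hypothesis to $s_i \To_{\bot_\Uc} u_i \infred_{\beta\bot_\Uc} t_i'$ yields $s_i \infred_{\beta\bot_\Uc} t_i'$, and since $t_1 \reduces_{\beta\bot_\Uc} s_1s_2$ the application rule for $\infred$ gives $t_1 \infred_{\beta\bot_\Uc} t_1't_2'$. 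The abstraction case is entirely analogous, using the abstraction rules of the parallel closure and of $\infred$.

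The only delicate point is the base case: one must notice that a parallel $\bot_\Uc$-step producing an atom is either trivial or a single $\bot_\Uc$-contraction to $\bot$, so it can be absorbed into a finite $\beta\bot_\Uc$-reduction; the application and abstraction cases become routine once Corollary~\ref{cor_omega_fin_postpone} has exposed the matching top constructor of $s$. Note that the expansion axiom is not required here, since Corollary~\ref{cor_omega_fin_postpone} already delivers the needed postponement.
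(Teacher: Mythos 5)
Your proof is correct and follows essentially the same route as the paper: coinduction on $t_2 \infred_{\beta\bot_\Uc} t_3$ with Corollary~\ref{cor_omega_fin_postpone} used to postpone the parallel $\bot_\Uc$-step past the finite prefix, then matching the top constructor of the resulting term. The only (harmless) difference is in the atom case, where the paper simply invokes Lemma~\ref{lem_omega_to_bohm} to turn $s \To_{\bot_\Uc} a$ into $s \infred_{\beta\bot_\Uc} a$, while you analyse the two possible shapes of that parallel step directly.
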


\begin{proof}
  By coinduction. There are three cases.
  \begin{itemize}
  \item $t_3 \equiv a$. Then $t_1 \To_{\bot_\Uc} t_2
    \reduces_{\beta\bot_\Uc} a$. By
    Corollary~\ref{cor_omega_fin_postpone} there is~$s$ with $t_1
    \reduces_\beta s \To_{\bot_\Uc} a$. By
    Lemma~\ref{lem_omega_to_bohm} we have $s \infred_{\beta\bot_\Uc}
    a$. Thus $t_1 \infred_{\beta\bot_\Uc} a$.
  \item $t_3 \equiv s_1s_2$. Then $t_1 \To_{\bot_\Uc} t_2
    \reduces_{\beta\bot_\Uc} s_1's_2'$ with $s_i'
    \infred_{\beta\bot_\Uc} s_i$. By
    Corollary~\ref{cor_omega_fin_postpone} there is~$u$ with $t_1
    \reduces_\beta u \To_{\bot_\Uc} s_1's_2'$. Then $u \equiv u_1u_2$
    with $u_i \To_{\bot_\Uc} s_i' \infred_{\beta\bot_\Uc} s_i$. By the
    coinductive hypothesis $u_i \infred_{\beta\bot_\Uc} s_i$. Thus $t_1
    \infred_{\beta\bot_\Uc} s_1s_2 \equiv t_3$.
  \item $t_3 \equiv \lambda x . r$. The argument is analogous to the
    previous case.\qedhere
  \end{itemize}
\end{proof}

The following lemma is an analogon
of~\cite[Lemma~12.9.22]{KennawayVries2003}.

\begin{lem}[Postponement of parallel
  $\bot_\Uc$-reduction]\label{lem_bohm_decompose}~

  If $t \infred_{\beta\bot_\Uc} s$ then there exists~$r$ such that
  $t \infred_\beta r \To_{\bot_\Uc} s$.
\end{lem}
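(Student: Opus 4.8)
The plan is to prove this by coinduction, treating the existential witness~$r$ as an implicit Skolem function defined corecursively and intertwined with the proof, exactly in the style of Example~\ref{ex_skolem}. The proof proceeds by case analysis on the last rule used to derive $t \infred_{\beta\bot_\Uc} s$, equivalently on the root structure of~$s$, since $\infred_{\beta\bot_\Uc}$ is the infinitary closure of $\to_{\beta\bot_\Uc}$.

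In each case the derivation opens with a finite prefix $t \reduces_{\beta\bot_\Uc} s_0$, where $s_0$ shares the root constructor of~$s$. The first move is to apply Lemma~\ref{lem_fin_bohm_decompose} to this finite prefix, obtaining~$u$ with $t \reduces_\beta u \To_{\bot_\Uc} s_0$; this is where the postponement of $\bot_\Uc$-steps happens at the finite level. Next I inspect the single parallel step $u \To_{\bot_\Uc} s_0$: since $R_{\bot_\Uc}$ only ever produces~$\bot$ on the right, while $s_0$ is a non-$\bot$ atom, an application, or an abstraction, the contraction-at-the-root rule for $\To_{\bot_\Uc}$ is impossible. Hence~$u$ must have the same root constructor as~$s_0$, with $\To_{\bot_\Uc}$ descending into the immediate subterms.

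Concretely, for the application case $s \equiv s_1 s_2$ the original derivation gives $t \reduces_{\beta\bot_\Uc} t_1 t_2$ with $t_i \infred_{\beta\bot_\Uc} s_i$; Lemma~\ref{lem_fin_bohm_decompose} then yields $t \reduces_\beta u \To_{\bot_\Uc} t_1 t_2$, and by the analysis above $u \equiv u_1 u_2$ with $u_i \To_{\bot_\Uc} t_i$. I then use Lemma~\ref{lem_omega_1_beta_to_beta} to absorb each parallel step into the infinitary tail, giving $u_i \infred_{\beta\bot_\Uc} s_i$, and feed this to the coinductive hypothesis to obtain~$r_i$ with $u_i \infred_\beta r_i \To_{\bot_\Uc} s_i$. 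Setting $r \equiv r_1 r_2$ then yields $t \infred_\beta r$ (from $t \reduces_\beta u_1 u_2$ and $u_i \infred_\beta r_i$, via the application rule for $\infred_\beta$) and $r \To_{\bot_\Uc} s$ (via the application rule for $\To_{\bot_\Uc}$). The abstraction case $s \equiv \lambda x . r'$ is identical modulo notation, and the atom case $s \equiv a$ is a genuine base case: Lemma~\ref{lem_fin_bohm_decompose} directly hands back~$r$ with $t \reduces_\beta r \To_{\bot_\Uc} a$, whence $t \infred_\beta r \To_{\bot_\Uc} s$, with no corecursive call needed.

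The main obstacle is conceptual rather than computational: one has to see that the ``remaining'' parallel $\bot_\Uc$-step left behind by the finite decomposition can legitimately be pushed past the infinitary tail. This is precisely the content of Lemma~\ref{lem_omega_1_beta_to_beta}, and invoking it is what lets the coinductive hypothesis apply to $u_i \infred_{\beta\bot_\Uc} s_i$ rather than to the mismatched $t_i \infred_{\beta\bot_\Uc} s_i$. The remaining point to keep in mind is guardedness of the implicit Skolem function: in the application and abstraction cases the corecursive invocations producing~$r_i$ (respectively~$w$) are placed under a constructor before being exhibited, so the definition is guarded, while the atom case introduces no corecursive call at all.
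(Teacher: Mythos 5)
Your proposal is correct and follows essentially the same route as the paper's own proof: the same case analysis on the root of~$s$, the same use of Lemma~\ref{lem_fin_bohm_decompose} to postpone the parallel $\bot_\Uc$-step at the finite level, the same use of Lemma~\ref{lem_omega_1_beta_to_beta} to absorb the leftover $\To_{\bot_\Uc}$-step before invoking the coinductive hypothesis, and the same guardedness considerations for the implicit Skolem function. The only nitpick is your blanket claim that $u$ must share the root constructor of~$s_0$, which is not literally true in the atom case when $a \equiv \bot$; but since you treat that case directly via Lemma~\ref{lem_fin_bohm_decompose} without using this claim, nothing is affected.
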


\begin{proof}
  By coinduction with case analysis on $t \infred_{\beta\bot_\Uc} s$,
  using
  Lemmas~\ref{lem_fin_bohm_decompose},~\ref{lem_omega_1_beta_to_beta}.

  Since this is the first of our coinductive proofs involving an
  implicit Skolem function (see Example~\ref{ex_skolem}), we give it
  in detail. The reader is invited to extract from this proof an
  explicit corecursive definition of the Skolem function.

  Assume $t \infred_{\beta\bot_\Uc} s$. There are three cases.
  \begin{itemize}
  \item $s \equiv a$ and $t \reduces_{\beta\bot_\Uc} a$. Then the
    claim follows from Lemma~\ref{lem_fin_bohm_decompose}.
  \item $s \equiv s_1 s_2$ and $t \reduces_{\beta\bot_\Uc} t_1 t_2$
    and $t_i \infred_{\beta\bot_\Uc} s_i$. By
    Lemma~\ref{lem_fin_bohm_decompose} there is $t'$ with $t
    \reduces_\beta t' \To_{\bot_\Uc} t_1t_2$. Because $t_1t_2
    \not\equiv \bot$, we must have $t' \equiv t_1't_2'$ with $t_i'
    \To_{\bot_\Uc} t_i$. By Lemma~\ref{lem_omega_1_beta_to_beta} we
    have $t_i' \infred_{\beta\bot_\Uc} s_i$. By the coinductive
    hypothesis we obtain $s_1',s_2'$ such that $t_i' \infred_\beta
    s_i' \To_{\bot_\Uc} s_i$. Hence $t \infred_\beta s_1's_2'
    \To_{\bot_\Uc} s_1s_2 \equiv s$.
  \item $s \equiv \lambda x . s'$ and $t \reduces_{\beta\bot_\Uc}
    \lambda x . t'$ and $t' \infred_{\beta\bot_\Uc} s'$. By
    Lemma~\ref{lem_fin_bohm_decompose} there is~$u$ with $t
    \reduces_\beta u \To_{\bot_\Uc} \lambda x . t'$. Then $u \equiv
    \lambda x . u'$ with $u' \To_{\bot_\Uc} t'$. By
    Lemma~\ref{lem_omega_1_beta_to_beta} we have $u'
    \infred_{\beta\bot_\Uc} s'$. By the coinductive hypothesis we
    obtain~$w$ such that $u' \infred_\beta w \To_{\bot_\Uc} s'$. Hence
    $t \infred_\beta \lambda x . w \To_{\bot_\Uc} \lambda x . s'
    \equiv s$.\qedhere
  \end{itemize}
\end{proof}

\begin{cor}\label{cor_bohm_preserves_U}
  If $t \in \Uc$ and $t \infred_{\beta\bot_\Uc} s$ then $s \in \Uc$.
\end{cor}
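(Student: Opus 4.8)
The plan is to read this off directly from the postponement lemma together with two of the preceding facts, so the proof should be essentially immediate. The key observation is that B{\"o}hm reduction $\infred_{\beta\bot_\Uc}$ can be decomposed into a ``pure'' infinitary $\beta$-phase followed by a single parallel collapse to $\bot$, and that $\Uc$ is stable under each phase separately.

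Concretely, suppose $t \in \Uc$ and $t \infred_{\beta\bot_\Uc} s$. First I would invoke Lemma~\ref{lem_bohm_decompose} to obtain an intermediate term~$r$ with $t \infred_\beta r \To_{\bot_\Uc} s$. This is the only nontrivial ingredient, and it has already been established by coinduction just above; the present corollary merely exploits it. The point of the decomposition is that it separates the two kinds of reduction so that the two relevant axioms can each be applied to the appropriate phase.

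Next I would handle the $\beta$-phase: since $t \in \Uc$ and $t \infred_\beta r$, the \textbf{Closure} axiom in the definition of a set of meaningless terms gives $r \in \Uc$. Then I would handle the $\bot_\Uc$-phase: since $r \in \Uc$ and $r \To_{\bot_\Uc} s$, Lemma~\ref{lem_omega_indisc} (which derives closure of~$\Uc$ under parallel $\bot_\Uc$-reduction from root-activeness and indiscernibility) yields $s \in \Uc$, as required.

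I do not expect any genuine obstacle here: the entire difficulty has been absorbed into the postponement lemma, and once that is available the corollary is a two-line chain $t \infred_\beta r \To_{\bot_\Uc} s$ to which the Closure axiom and Lemma~\ref{lem_omega_indisc} are applied in turn. The only thing worth double-checking is that the direction of Lemma~\ref{lem_omega_indisc} is the one needed, namely that a $\bot_\Uc$-\emph{reduct} of a meaningless term is again meaningless (which is exactly the first disjunct of that lemma's hypothesis, with $r$ in the role of~$t$).
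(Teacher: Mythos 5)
Your proof is correct and matches the paper's exactly: the paper also derives this corollary from Lemma~\ref{lem_bohm_decompose}, the closure axiom, and Lemma~\ref{lem_omega_indisc}, applied in precisely the order you describe. Your final check on the direction of Lemma~\ref{lem_omega_indisc} is right — the first disjunct is the one needed.
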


\begin{proof}
  Follows from Lemma~\ref{lem_bohm_decompose}, the closure axiom and
  Lemma~\ref{lem_omega_indisc}.
\end{proof}

The following depend on the expansion axiom.

\begin{cor}\label{cor_back_active}
  If $s \in \Uc$ and $t \infred_{\beta\bot_\Uc} s$ then $t \in \Uc$.
\end{cor}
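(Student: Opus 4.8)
The plan is to reduce the statement to a straightforward chaining of the postponement lemma, the indiscernibility lemma, and the expansion axiom, without any new coinductive argument. First I would apply Lemma~\ref{lem_bohm_decompose} to the hypothesis $t \infred_{\beta\bot_\Uc} s$, obtaining some~$r$ with $t \infred_\beta r \To_{\bot_\Uc} s$. This splits the Böhm reduction into a pure infinitary $\beta$-reduction followed by a single parallel $\bot_\Uc$-step; the point of doing so is that each of the two remaining ingredients acts on a reduction of exactly the right shape.

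Next, from $s \in \Uc$ together with $r \To_{\bot_\Uc} s$ I would invoke Lemma~\ref{lem_omega_indisc} in its second alternative (the one reading ``$s \To_{\bot_\Uc} t$'', i.e.\ where $\Uc$-membership propagates backwards along $\To_{\bot_\Uc}$, taking the lemma's $t$ to be~$s$ and its~$s$ to be~$r$) to conclude $r \in \Uc$. Finally I would apply the expansion axiom to $t \infred_\beta r$ with $r \in \Uc$, which immediately yields $t \in \Uc$, as required.

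There is essentially no obstacle here: the corollary is an immediate consequence of three already-established facts, and all of the genuine work has been carried out in Lemma~\ref{lem_bohm_decompose} and Lemma~\ref{lem_omega_indisc}. The only point requiring a moment's care is checking that the expansion axiom is stated precisely for $\infred_\beta$ (it is), so that the $\beta$-prefix produced by postponement has exactly the form needed to feed into it; this is why the corollary is grouped among the results depending on the expansion axiom.
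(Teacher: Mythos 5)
Your proof is correct and matches the paper's own argument exactly: the paper's proof reads ``Follows from Lemma~\ref{lem_bohm_decompose}, Lemma~\ref{lem_omega_indisc} and the expansion axiom,'' and your elaboration of that chain, including the backwards reading of Lemma~\ref{lem_omega_indisc}, is precisely the intended one.
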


\begin{proof}
  Follows from Lemma~\ref{lem_bohm_decompose},
  Lemma~\ref{lem_omega_indisc} and the expansion axiom.
\end{proof}

\begin{lem}\label{lem_omega_merge}
  If $t \infred_{\beta\bot_\Uc} t' \To_{\bot_\Uc} s$ then $t
  \infred_{\beta\bot_\Uc} s$.
\end{lem}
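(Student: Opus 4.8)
The plan is to prove this by coinduction, doing a case analysis on the derivation of $t' \To_{\bot_\Uc} s$. The point is that the rule used at the root of $t' \To_{\bot_\Uc} s$ fixes the top-level shape of~$s$, which is precisely what the coinductive definition of $\infred_{\beta\bot_\Uc}$ needs in order to produce $t \infred_{\beta\bot_\Uc} s$. There are four cases, one for each rule defining $\To_{\bot_\Uc}$.

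First I would treat the collapsing case, where $t' \in \Uc$, $t' \not\equiv \bot$ and $s \equiv \bot$. Here we do not inspect $t \infred_{\beta\bot_\Uc} t'$ structurally at all; instead, since $t \infred_{\beta\bot_\Uc} t'$ with $t' \in \Uc$, Corollary~\ref{cor_back_active} gives $t \in \Uc$. If $t \equiv \bot$ then trivially $t \reduces_{\beta\bot_\Uc} \bot$; otherwise $\la t, \bot \ra \in R_{\bot_\Uc}$, so $t \to_{\beta\bot_\Uc} \bot$ and again $t \reduces_{\beta\bot_\Uc} \bot$. In either case $t \infred_{\beta\bot_\Uc} \bot \equiv s$ by the atom rule for $\infred_{\beta\bot_\Uc}$. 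This is the only case that uses the expansion axiom, entering through Corollary~\ref{cor_back_active}.

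The remaining cases are structural. If $t' \equiv a$ and $s \equiv a$, then $s \equiv t'$ and $t \infred_{\beta\bot_\Uc} s$ holds by assumption. If $t' \equiv t_1' t_2'$ and $s \equiv s_1 s_2$ with $t_i' \To_{\bot_\Uc} s_i$, then, since $t'$ is an application, inverting the coinductive definition of $\infred_{\beta\bot_\Uc}$ on $t \infred_{\beta\bot_\Uc} t'$ forces the application rule: $t \reduces_{\beta\bot_\Uc} u_1 u_2$ with $u_i \infred_{\beta\bot_\Uc} t_i'$. Now $u_i \infred_{\beta\bot_\Uc} t_i' \To_{\bot_\Uc} s_i$, so the coinductive hypothesis yields $u_i \infred_{\beta\bot_\Uc} s_i$, and reassembling with the application rule of $\infred_{\beta\bot_\Uc}$ gives $t \infred_{\beta\bot_\Uc} s_1 s_2 \equiv s$. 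The case $t' \equiv \lambda x . t''$, $s \equiv \lambda x . s''$ is entirely analogous, using the abstraction rule. Guardedness is immediate, since each appeal to the coinductive hypothesis is placed under one application of a rule of $\infred_{\beta\bot_\Uc}$.

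The main obstacle is the collapsing case, and more precisely why the expansion axiom cannot be dispensed with there: the partial reduct $t'$ may be meaningless even though $t$ reaches it only after infinitely many steps, so without expansion we would not be entitled to conclude $t \in \Uc$, and $t$ might have no finite $\beta$-reduct in $\Uc$ to collapse. This is exactly the phenomenon illustrated by the $\Omega_\Os$ example in the introduction. The structural cases, by contrast, are routine once one notes that inversion of $\infred_{\beta\bot_\Uc}$ is legitimate because the conclusion of each of its rules is determined by the root constructor of the right-hand term.
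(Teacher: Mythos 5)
Your proof is correct and follows essentially the same route as the paper: coinduction with case analysis on $t' \To_{\bot_\Uc} s$, where the only non-trivial case is the collapse to $\bot$, handled via Corollary~\ref{cor_back_active} and the expansion axiom. The only (immaterial) difference is that in the collapsing case the paper concludes via $t \To_{\bot_\Uc} \bot$ and Lemma~\ref{lem_omega_to_bohm}, whereas you build the finite reduction $t \reduces_{\beta\bot_\Uc} \bot$ directly and apply the atom rule; both are fine.
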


\begin{proof}
  By coinduction, analysing $t' \To_{\bot_\Uc} s$. All cases follow
  directly from the coinductive hypothesis, except when $s \equiv
  \bot$ and~$t' \in \Uc$. But then~$t \in \Uc$ by
  Corollary~\ref{cor_back_active}, so $t \To_{\bot_\Uc} s$, and thus $t
  \infred_{\beta\bot_\Uc} s$ by Lemma~\ref{lem_omega_to_bohm}.
\end{proof}

\begin{cor}\label{cor_bohm_beta_append}
  If $t \infred_{\beta\bot_\Uc} s \reduces_{\beta} r$ then $t
  \infred_{\beta\bot_\Uc} r$.
\end{cor}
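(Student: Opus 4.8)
The plan is to \emph{separate} the $\bot_\Uc$-steps from the $\beta$-steps, handle the trailing finite $\beta$-reduction on the $\beta$-only part, and then re-merge. First I would apply the postponement lemma (Lemma~\ref{lem_bohm_decompose}) to $t \infred_{\beta\bot_\Uc} s$, obtaining a term~$u$ with $t \infred_\beta u \To_{\bot_\Uc} s$. This isolates a genuine infinitary $\beta$-reduction followed by a single parallel $\bot_\Uc$-step, which is far easier to manipulate than a mixed B\"ohm reduction.

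Next I would commute the given finite reduction $s \reduces_\beta r$ past the parallel $\bot_\Uc$-step. Since $s \reduces_\beta r$ consists of finitely many $\to_\beta$-steps, I would iterate Lemma~\ref{lem_omega_1_postpone} along this sequence: starting from $u \To_{\bot_\Uc} s$ and feeding in one $\beta$-step at a time, each application slides a $\beta$-step before the $\To_{\bot_\Uc}$. By induction on the length of $s \reduces_\beta r$ this yields a term~$u'$ with $u \reduces_\beta u' \To_{\bot_\Uc} r$.

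Now the $\beta$-part combines cleanly. From $t \infred_\beta u \reduces_\beta u'$ I would absorb the trailing finite $\beta$-reduction into the infinitary one by iterating Lemma~\ref{lem_beta_beta_fin_append}, obtaining $t \infred_\beta u'$. Since $\infred_\beta \subseteq \infred_{\beta\bot_\Uc}$ by monotonicity of the infinitary closure (as $R_\beta \subseteq R_{\beta\bot_\Uc}$), this gives $t \infred_{\beta\bot_\Uc} u'$, and together with $u' \To_{\bot_\Uc} r$ the merging Lemma~\ref{lem_omega_merge} yields $t \infred_{\beta\bot_\Uc} r$, as required.

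As for the main obstacle: there is essentially no analytic difficulty here, since all the genuinely coinductive work has already been packaged into Lemmas~\ref{lem_bohm_decompose} and~\ref{lem_omega_merge}. The one point that requires care is that we cannot simply concatenate two infinitary B\"ohm reductions---no such appending lemma is available at this stage, and establishing one would itself require the expansion axiom together with a compression argument. Routing through postponement, and in particular using the single-step postponement Lemma~\ref{lem_omega_1_postpone} to slide the finite $\beta$-reduction to the left of the parallel $\bot_\Uc$-step, is precisely what sidesteps the need for such a concatenation result.
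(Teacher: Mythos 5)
Your proposal is correct and follows essentially the same route as the paper: decompose via Lemma~\ref{lem_bohm_decompose}, postpone the parallel $\bot_\Uc$-step past the finite $\beta$-reduction by (iterating) Lemma~\ref{lem_omega_1_postpone}, absorb the resulting finite $\beta$-tail with Lemma~\ref{lem_beta_beta_fin_append}, and conclude with Lemma~\ref{lem_omega_merge}. Your explicit remark that the single-step postponement and appending lemmas must be iterated along the finite reduction is a point the paper leaves implicit, but the argument is identical.
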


\begin{proof}
  By Lemma~\ref{lem_bohm_decompose} we have $t \infred_\beta t'
  \To_{\bot_\Uc} s \reduces_{\beta} r$. By
  Lemma~\ref{lem_omega_1_postpone} there is~$s'$ with $t'
  \reduces_\beta s' \To_{\bot_\Uc} r$. By
  Lemma~\ref{lem_beta_beta_fin_append} we have $t \infred_\beta s'$,
  and thus $t \infred_{\beta\bot_\Uc} s'$. By
  Lemma~\ref{lem_omega_merge} we finally obtain $t
  \infred_{\beta\bot_\Uc} r$.
\end{proof}

\begin{lem}\label{lem_par_bot_preserves_rnf_rev}
  If $t \notin U$ and $t \To_{\bot_\Uc} s$ and $s$ is in rnf, then~$t$
  is in rnf.
\end{lem}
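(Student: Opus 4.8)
The plan is to do case analysis on the derivation of $t \To_{\bot_\Uc} s$. Since $t \notin \Uc$ and $\bot \in \Rc \subseteq \Uc$, the root $\bot$-rule cannot be the last rule applied and $t \not\equiv \bot$, so $\To_{\bot_\Uc}$ is structural at the root. If $t \equiv a$ then $s \equiv a$, and since $s$ is in rnf we get $a \not\equiv \bot$, so $t$ is in rnf. If $t \equiv \lambda x . t'$ then $t$ is in rnf outright. The only real work is the application case $t \equiv t_1 t_2$, $s \equiv s_1 s_2$ with $t_1 \To_{\bot_\Uc} s_1$ and $t_2 \To_{\bot_\Uc} s_2$: here I must show that $t_1$ does not $\beta$-reduce to an abstraction, knowing that $s_1$ does not (since $s$ is in rnf) and that $t \notin \Uc$.

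First I would record an auxiliary fact obtained from the closure, overlap and expansion axioms: if $r \in \Uc$ and $r\,c_1 \ldots c_k \reduces_\beta \lambda z . w$ for some $w$, then $r\,c_1 \ldots c_k \in \Uc$. This follows by induction on $k$, using that every prefix of the spine of an applicative term reducing to an abstraction itself reduces to an abstraction (a finitary fact about $\beta$-reduction), so that at each stage closure turns the head abstraction into a member of $\Uc$, overlap absorbs the next argument, and expansion pulls membership back along the finite reduction.

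The core is then the following statement, proved by induction on the length of a weak head reduction of $p$ to an abstraction (recall that $p \reduces_\beta \lambda x . v$ implies $p \reduces_w \lambda y . v'$ for some $v'$ by finitary standardisation of head reduction; the atom case is vacuous since atoms are normal): if $p \To_{\bot_\Uc} p'$ and $p \reduces_\beta \lambda x . v$, then either $p \in \Uc$ or $p' \reduces_\beta \lambda z . w$ for some $w$. If $p$ is already an abstraction the structural rule makes $p'$ an abstraction, unless $p \in \Uc$. Otherwise $p$ has a weak head step $p \to_w p''$ contracting the head redex, and I inspect the structural derivation of $p \To_{\bot_\Uc} p'$ along the head spine. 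If some prefix of the spine is sent to $\bot$ (in particular the weak head redex, or the abstraction at its head, is collapsed), then that prefix lies in $\Uc$, so by the auxiliary fact $p \in \Uc$ and we are done. Otherwise the whole head spine is mapped structurally, so $p'$ still contains the corresponding weak head redex, $p' \to_w p'''$, and Lemma~\ref{lem_subst_omega_1} gives $p'' \To_{\bot_\Uc} p'''$ (substituting the $\To_{\bot_\Uc}$-related argument into the $\To_{\bot_\Uc}$-related body and carrying the remaining arguments structurally). Applying the induction hypothesis to $p'' \To_{\bot_\Uc} p'''$ and the shorter reduction $p'' \reduces_w \lambda x . v$ yields either $p'' \in \Uc$, whence $p \in \Uc$ by the expansion axiom, or that $p'''$ reduces to an abstraction, whence $p' \to_w p''' \reduces_\beta \lambda z . w$.

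Finally I would assemble the lemma. Applying the core statement to $t_1 \To_{\bot_\Uc} s_1$ and the assumed reduction $t_1 \reduces_\beta \lambda x . v$ gives two possibilities, both contradictory. If $s_1$ reduces to an abstraction then $s \equiv s_1 s_2$ is not in rnf, against the hypothesis. If $t_1 \in \Uc$ then, since $t_1 \reduces_\beta \lambda x . v$, closure gives $\lambda x . v \in \Uc$, overlap gives $(\lambda x . v) t_2 \in \Uc$, and as $t \equiv t_1 t_2 \reduces_\beta (\lambda x . v) t_2$ the expansion axiom yields $t \in \Uc$, against $t \notin \Uc$. Hence $t_1$ does not reduce to an abstraction and $t$ is in rnf. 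The hard part is exactly the point where one would like $\To_{\bot_\Uc}$ to commute with $\beta$-reduction: it does not, because contracting a redex whose head is a meaningless abstraction $\lambda y . a \in \Uc$ cannot be matched on the $\To_{\bot_\Uc}$-image, where that head may already have collapsed to $\bot$. I overcome this by working one weak head step at a time and splitting on whether the head redex survives the collapse: if it is collapsed, the overlap and expansion axioms force $p$ itself into $\Uc$, and otherwise the single contraction is transported by Lemma~\ref{lem_subst_omega_1}. This dichotomy is also where the expansion axiom is genuinely needed.
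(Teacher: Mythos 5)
Your proof is correct, but it takes a genuinely different and considerably more laborious route than the paper's. The paper observes that ${\To_{\bot_\Uc}} \subseteq {\sim_\Uc}$ (every pair $\la u,\bot\ra$ with $u\in\Uc$ lies in $\Uc\times\Uc$, as $\bot$ is root-active), so in the application case $t_1\sim_\Uc s_1$, and the already-established commutation Lemma~\ref{lem_sim_beta_2} transports a hypothetical $t_1\infred_\beta\lambda x.r$ to $s_1\infred_\beta r'\sim_\Uc\lambda x.r$; the two resulting subcases ($r'$ an abstraction, contradicting that $s$ is in rnf, or $r',\lambda x.r\in\Uc$, whence $t\in\Uc$ by overlap and expansion) finish the proof in a few lines. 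You instead re-derive from scratch a commutation of $\To_{\bot_\Uc}$ with finite reduction-to-abstraction, by standardising to weak head reduction (Theorem~\ref{thm_polonsky}, Lemma~\ref{lem_w_unique}) and doing a spine analysis supported by an ``absorption'' fact built from the closure, overlap and expansion axioms, with Lemma~\ref{lem_subst_omega_1} handling the surviving head contraction. This is essentially a hand-rolled, expansion-based replacement for the indiscernibility-based Lemma~\ref{lem_sim_fin_beta_2} that underlies Lemma~\ref{lem_sim_beta_2}. Your version correctly isolates where expansion is genuinely needed and is independent of the $\sim_\Uc$ machinery, but it duplicates work the paper has already done; given that Lemma~\ref{lem_sim_beta_2} is available, the short argument via $\sim_\Uc$ is the one to prefer.
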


\begin{proof}
  We consider possible forms of~$s$.
  \begin{itemize}
  \item $s \equiv a$ with $a \not\equiv \bot$. Then~$t \equiv a$ and~$t$ is
    in rnf.
  \item $s \equiv \lambda x . s'$. Then $t \equiv \lambda x . t'$ with
    $t' \To_{\bot_\Uc} s'$, so~$t$ is in rnf.
  \item $s \equiv s_1 s_2$ and there is no~$r$ with
    $s_1 \infred_\beta \lambda x . r$. Then $t \equiv t_1 t_2$ with
    $t_i \To_{\bot_\Uc} s_i$. Then also $t_1 \sim_\Uc s_1$. Suppose
    $t_1 \infred_\beta \lambda x . r$. By Lemma~\ref{lem_sim_beta_2}
    there is~$r'$ with $s_1 \infred_\beta r' \sim_\Uc \lambda x
    . r$. There are two cases.
    \begin{itemize}
    \item $r', \lambda x . r \in \Uc$. Then
      $(\lambda x . r) t_2 \in \Uc$ by the overlap axiom, and thus
      $t \in \Uc$ by the expansion axiom. Contradiction.
    \item $r' \equiv \lambda x . r''$ with $r \sim_\Uc r''$. But then
      $s_1 \infred_\beta \lambda x . r''$. Contradiction.\qedhere
    \end{itemize}
  \end{itemize}
\end{proof}

\subsection{Weak head reduction}

\begin{thm}[Endrullis,
  Polonsky~\cite{EndrullisPolonsky2011}]\label{thm_polonsky}~
  $t \infred_{\beta} s$ iff $t \infred_{w} s$.
\end{thm}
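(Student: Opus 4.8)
The plan is to prove the two inclusions $\infred_w \subseteq \infred_\beta$ and $\infred_\beta \subseteq \infred_w$ separately, each by coinduction on the corresponding infinitary-closure derivation. The first inclusion is routine: every weak head step is a $\beta$-step, so $\to_w \subseteq \to_\beta$ and hence $\reduces_w \subseteq \reduces_\beta$. A coinductive case analysis on $t \infred_w s$ then simply replays each rule. In the atom case $t \reduces_w a$ gives $t \reduces_\beta a$; in the application case $t \reduces_w u_1 u_2$ gives $t \reduces_\beta u_1 u_2$, and the coinductive hypothesis turns $u_i \infred_w s_i$ into $u_i \infred_\beta s_i$; the abstraction case is analogous. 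The real content is the reverse inclusion $\infred_\beta \subseteq \infred_w$, which is a standardisation statement: the $\beta$-steps contracted in each finite prefix of an infinitary reduction can be reorganised so that the weak head steps come first.

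The key tool is a finitary standardisation lemma: if $t \reduces_\beta u$, then there is $v$ with $t \reduces_w v$ such that $v$ and $u$ have the same root constructor and each immediate subterm of $v$ $\beta$-reduces to the corresponding immediate subterm of $u$; in particular, if $u$ is an atom then $v \equiv u$. I would prove this by appealing to the finitary standardisation theorem, which applies verbatim to finite reductions between infinitary terms since only the finite list of contracted redexes and their residuals matters. Starting from a standard reduction $t \reduces_\beta u$, and noting that the weak head redex, when it exists, is exactly the leftmost--outermost redex, standardness forces every weak head contraction to precede every internal one; moreover an internal contraction can never create a weak head redex (contracting a redex off the spine does not turn an atom head into an abstraction). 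Hence the reduction factors as $t \reduces_w v$ followed by purely internal steps $v \reduces_\beta u$, where $v$ is either equal to $u$ or in weak head normal form, i.e.\ an unapplied abstraction or of the form $a\,r_1\ldots r_m$. Since the internal suffix preserves the root constructor, $v$ has the same root shape as $u$ with the stated component reductions.

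With the lemma in hand the reverse inclusion follows by coinduction with case analysis on $t \infred_\beta s$. In the atom case $t \reduces_\beta a$, so the lemma gives $t \reduces_w a$ and hence $t \infred_w a$. In the application case $t \reduces_\beta t_1 t_2$ with $t_i \infred_\beta t_i'$; the lemma yields $t \reduces_w v_1 v_2$ with $v_i \reduces_\beta t_i$, so from $v_i \reduces_\beta t_i \infred_\beta t_i'$ we obtain $v_i \infred_\beta t_i'$ (using that $\reduces_\beta$ prepends to $\infred_\beta$), and the coinductive hypothesis upgrades this to $v_i \infred_w t_i'$; therefore $t \infred_w t_1' t_2'$. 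The abstraction case is analogous, using $v \equiv \lambda x . r_0$ with $r_0 \reduces_\beta r$.

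The main obstacle is the finitary standardisation lemma, and within it the justification that weak head steps can always be pulled to the front: the identification of the weak head redex with the leftmost--outermost redex, the fact that internal steps never create a new weak head redex, and the observation that finitary standardisation remains valid for finite reductions on infinite terms. Once these finitary facts are secured, the passage to the infinitary relations is a mechanical coinductive replay that keeps the prefix reductions weak-head and defers everything else to the recursive premises.
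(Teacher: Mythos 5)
First, a point of comparison: the paper deliberately does \emph{not} prove this theorem --- it is imported from the cited work of Endrullis and Polonsky and only re-proved inside the Coq formalisation --- so the relevant benchmark is that source rather than a proof in the text. Your overall architecture is the right one and is essentially the one used there: the inclusion ${\infred_w}\subseteq{\infred_\beta}$ is a routine coinduction from ${\to_w}\subseteq{\to_\beta}$, and the converse reduces, by a coinductive replay of the three rules of the infinitary closure, to a purely finitary factorisation lemma (if $t\reduces_\beta u$ then $t\reduces_w v$ for some $v$ with the same root constructor as $u$ and with componentwise $\reduces_\beta$ to the immediate subterms of $u$). Your use of the coinductive hypothesis is also sound in the paper's framework: you apply it to $v_i\infred_\beta t_i'$, a statement about the fixed relation $\infred_\beta$, and feed the result into a rule for the next approximant of $\infred_w$.

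The gap is in the finitary lemma, which is the entire mathematical content of the theorem. You dispose of it by asserting that finitary standardisation ``applies verbatim'' to finite reductions between infinitary terms. It does not apply verbatim: the standard proofs (Takahashi's via parallel reduction, Barendregt's via finite developments) proceed at crucial points by induction on the structure of the term, which is unavailable for elements of $\Lambda^\infty$. Nor does the obvious fallback of locally permuting an internal step past a later weak-head step close the argument: if the internal step contracts a redex $S$ sitting in the \emph{argument} of the weak head redex $R$, then contracting $R$ first can place residuals of $S$ on the spine. For example, with $t\equiv(\lambda x.x)((\lambda y.y)c)$ the internal-then-head sequence $t\to_\beta(\lambda x.x)c\to_w c$ permutes to $t\to_w(\lambda y.y)c\to_w c$, where the second step is again a weak-head step; so the local diagram ``internal followed by weak-head equals weak-head followed by internal'' is false, and a naive postponement induction does not go through. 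The lemma is true, and it can be proved for infinitary terms --- e.g.\ by induction on the length of $t\reduces_\beta u$ using a coinductively defined parallel reduction and a finite-developments-style measure, which is in essence what the cited proof and the paper's formalisation do --- but this is precisely the work that the theorem encapsulates, and your proposal leaves it unestablished. (A minor additional slip: the $v$ produced by the factorisation need not be in weak head normal form nor equal to $u$ --- take $t\equiv\Omega((\lambda y.y)c)$ with $\Omega$ the usual looping term --- though this does not affect the part of the lemma you actually use.)
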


Strictly speaking, in~\cite{EndrullisPolonsky2011} the above theorem
is shown for a different set of infinitary lambda terms which do not
contain constants. However, it is clear that for the purposes
of~\cite{EndrullisPolonsky2011} constants may be treated as variables
not occuring bound, and thus the proof of the above theorem may be
used in our setting. We omit the proof of this theorem here, but we
included the proof in our formalisation.

\begin{lem}\label{lem_w_fin_commute}
  If $t \infred_w t_1$ and $t \to_w t_2$ then there is~$t_3$ with $t_2
  \infred_w t_3$ and $t_1 \to_w^\equiv t_3$.
\end{lem}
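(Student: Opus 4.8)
The plan is to argue by induction on the (finite) derivation of $t \to_w t_2$, using two ingredients: the determinism of $\to_w$ and a substitution property for $\infred_w$. Since a term has at most one weak head redex, $\to_w$ is deterministic; consequently the finite $\reduces_w$-prefix of any derivation of $t \infred_w t_1$ is either empty or begins with the forced step $t \to_w t_2$. I will also use the following substitution property: if $s \infred_w s'$ and $q \infred_w q'$ then $s[q/x] \infred_w s'[q'/x]$. A direct structural proof of this is awkward, because substituting into a head position can create fresh weak head redexes; instead it follows immediately by transporting through Theorem~\ref{thm_polonsky}: from $s \infred_\beta s'$ and $q \infred_\beta q'$ we obtain $s[q/x] \infred_\beta s'[q'/x]$ by Lemma~\ref{lem_beta_subst}, and then convert back to $\infred_w$ by Theorem~\ref{thm_polonsky}.

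Fixing a derivation of $t \to_w t_2$, I first invert the derivation of $t \infred_w t_1$, which consists of a finite prefix $t \reduces_w u$ (with $u$ an atom, an application, or an abstraction) followed by recursion into the immediate subterms of $u$. If this prefix is nonempty, then by determinism its first step is $t \to_w t_2$, so $t_2 \reduces_w u$; replacing $t \reduces_w u$ by $t_2 \reduces_w u$ while keeping the same recursive sub-derivations yields $t_2 \infred_w t_1$, and I take $t_3 \equiv t_1$, using reflexivity of $\to_w^\equiv$. This handles the prefix uniformly, regardless of the shape of $u$.

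It remains to treat the empty-prefix case $u \equiv t$. As $t$ is $\to_w$-reducible it is neither an atom nor an abstraction, so $t \equiv p\,q$ with $p \infred_w p'$, $q \infred_w q'$ and $t_1 \equiv p'q'$. I case on the last rule of $t \to_w t_2$. If $p \equiv \lambda x . s$ and $t_2 \equiv s[q/x]$ (root redex), then, because an abstraction is not weak-head reducible, the derivation of $p \infred_w p'$ is forced to have empty prefix and abstraction shape, so $p' \equiv \lambda x . s'$ with $s \infred_w s'$; hence $t_1 \equiv (\lambda x . s')q' \to_w s'[q'/x]$, and the substitution property gives $t_2 \equiv s[q/x] \infred_w s'[q'/x]$, so I take $t_3 \equiv s'[q'/x]$. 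If instead $p \to_w p''$ and $t_2 \equiv p''q$ (a step in the spine), I apply the induction hypothesis to the smaller derivation $p \to_w p''$ together with $p \infred_w p'$, obtaining $p'''$ with $p'' \infred_w p'''$ and $p' \to_w^\equiv p'''$; then $t_2 \equiv p''q \infred_w p'''q'$ by the application rule of $\infred_w$ and $t_1 \equiv p'q' \to_w^\equiv p'''q'$ by congruence, so I take $t_3 \equiv p'''q'$.

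The main obstacle is the root-redex case: producing $t_3$ there hinges on the substitution property for $\infred_w$, which I do not expect to prove directly but rather to obtain by passing through the equivalence with $\infred_\beta$ (Theorem~\ref{thm_polonsky}) and invoking Lemma~\ref{lem_beta_subst}; a secondary delicate point is the inversion step showing that an abstraction head $p$ forces $p' \equiv \lambda x . s'$ with $s \infred_w s'$. Everything else is routine, and the induction on $t \to_w t_2$ terminates because a weak head redex always sits at finite depth along the spine.
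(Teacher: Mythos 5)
Your proof is correct and follows essentially the same route as the paper: both arguments split on whether the weak head redex of $t$ is contracted in the derivation of $t \infred_w t_1$ (your empty/nonempty prefix dichotomy, threaded through an induction on $t \to_w t_2$ rather than writing $t$ directly in spine form $(\lambda x.s)u\,u_1\ldots u_m$), and both obtain the crucial substitution property for $\infred_w$ by passing through Theorem~\ref{thm_polonsky} and Lemma~\ref{lem_beta_subst}, exactly as the paper does.
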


\begin{proof}
  If the weak head redex in~$t$ is contracted in $t \infred_w t_1$
  then $t \to_w t_2 \infred_w t_1$ and we may take $t_3 \equiv
  t_1$. Otherwise $t \equiv (\lambda x . s) u u_1 \ldots u_m$, $t_2
  \equiv s[u/x] u_1 \ldots u_m$ and $t_1 \equiv (\lambda x . s') u'
  u_1' \ldots u_m'$ with $s \infred_w s'$, $u \infred_w u'$ and $u_i
  \infred_w u_i'$ for $i=1,\ldots,m$. By Theorem~\ref{thm_polonsky}
  and Lemma~\ref{lem_beta_subst} we obtain $s[u/x] \infred_w
  s'[u'/x]$. Take $t_3 \equiv s'[u'/x] u_1' \ldots u_m'$. Then $t_2
  \infred_w t_3$ and $t_1 \to_w t_3$.
\end{proof}

\begin{lem}\label{lem_beta_to_rnf}
  If $t \infred_\beta s$ with~$s$ in rnf, then there is~$s'$ in rnf
  with $t \reduces_w s' \infred_w s$.
\end{lem}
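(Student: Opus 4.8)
The plan is to reduce the statement to a question about weak head reduction and then exploit the commutation lemma. First I would apply Theorem~\ref{thm_polonsky} to turn the infinitary $\beta$-reduction into infinitary weak head reduction: from $t \infred_\beta s$ we get $t \infred_w s$. Since $s$ is in rnf, its root is an atom, an abstraction, or an application $s_1 s_2$ whose left component does not $\beta$-reduce to an abstraction, and in each case the first (topmost) rule in the derivation of $t \infred_w s$ is forced by the shape of $s$. I would then read off that rule and use the prefix $\reduces_w$ it provides as the candidate for $s'$.

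The atom and abstraction cases are routine. If $s \equiv a$, the atom rule gives $t \reduces_w a$, so I take $s' \equiv a$; this is in rnf and $a \infred_w a$. If $s \equiv \lambda x . s_0$, the abstraction rule gives $t \reduces_w \lambda x . r$ with $r \infred_w s_0$, so I take $s' \equiv \lambda x . r$, which is automatically in rnf and satisfies $s' \infred_w s$ by the abstraction rule. The genuinely interesting case is $s \equiv s_1 s_2$: here the derivation yields $t \reduces_w t_1 t_2$ with $t_1 \infred_w s_1$ and $t_2 \infred_w s_2$, and I set $s' \equiv t_1 t_2$. The reductions $t \reduces_w s'$ and $s' \infred_w s$ then hold immediately (the latter by the application rule, reflexivity on the prefix), so the only remaining obligation is that $s'$ is in rnf, i.e.\ that $t_1$ does not $\beta$-reduce to an abstraction.

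This verification is the hard part. I would argue by contradiction: suppose $t_1 \reduces_\beta \lambda x . u$. Then $t_1 \infred_\beta \lambda x . u$, so by Theorem~\ref{thm_polonsky} $t_1 \infred_w \lambda x . u$, and reading off the abstraction rule produces a \emph{finite} weak head reduction $t_1 \reduces_w \lambda x . v$. The key move is to propagate this finite reduction along the other reduction $t_1 \infred_w s_1$ by induction on the length of $t_1 \reduces_w \lambda x . v$, applying Lemma~\ref{lem_w_fin_commute} once per $\to_w$ step. This yields a term $r$ with $s_1 \reduces_w r$ (composing the resulting $\to_w^\equiv$ steps) and $\lambda x . v \infred_w r$. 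Since an abstraction is weak-head-normal, $\infred_w$ out of $\lambda x . v$ can only use the abstraction rule, forcing $r$ to be an abstraction as well. Hence $s_1 \reduces_w r$ exhibits $s_1 \reduces_\beta$ an abstraction (using $\reduces_w \subseteq \reduces_\beta$), which contradicts that $s \equiv s_1 s_2$ is in rnf. Therefore $t_1$ has no abstraction $\beta$-reduct, so $s' \equiv t_1 t_2$ is in rnf, finishing the case.

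I expect the iterated use of Lemma~\ref{lem_w_fin_commute}---commuting a finite weak head reduction past an infinitary one and observing that the common reduct of an abstraction remains an abstraction---to be the crux; once that is in place, everything else is bookkeeping dictated by the three possible shapes of the rnf~$s$. The one subtlety to keep in mind is to invoke Theorem~\ref{thm_polonsky} in both directions (to obtain $t \infred_w s$ at the start, and to convert $t_1 \reduces_\beta \lambda x . u$ into a $\reduces_w$ to an abstraction inside the contradiction argument).
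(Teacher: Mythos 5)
Your proof is correct and follows essentially the same route as the paper: convert to $\infred_w$ via Theorem~\ref{thm_polonsky}, case on the shape of the rnf~$s$, and in the application case derive a contradiction by converting a hypothetical abstraction reduct of $t_1$ into a finite weak head reduction and commuting it past $t_1 \infred_w s_1$ with Lemma~\ref{lem_w_fin_commute}. Your explicit remark that the commutation must be iterated by induction on the length of $t_1 \reduces_w \lambda x . v$ is a detail the paper leaves implicit, but otherwise the arguments coincide.
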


\begin{proof}
  By Theorem~\ref{thm_polonsky} we have $t \infred_w s$. Because~$s$
  is in rnf, there are three cases.
  \begin{itemize}
  \item $s \equiv a$ with $a \not\equiv \bot$. Then $t \reduces_w s$
    and we may take $s' \equiv s$.
  \item $s \equiv \lambda x . s_0$. Then $t \reduces_w \lambda x
    . t_0$ with $t_0 \infred_w s_0$. So take $s' \equiv \lambda x
    . t_0$.
  \item $s \equiv s_1 s_2$ and there is no~$r$ with $s_1 \infred_\beta
    \lambda x . r$. Then $t \reduces_w t_1 t_2$ with $t_i \infred_w
    s_i$. Suppose $t_1 \infred_\beta \lambda x . u$. Then $t_1
    \reduces_w \lambda x . u'$ for some~$u'$, by
    Theorem~\ref{thm_polonsky}. By Lemma~\ref{lem_w_fin_commute} there
    is~$r$ with $\lambda x . u' \infred_w r$ and $s_1 \reduces_w
    r$. But then $r \equiv \lambda x . r'$, so~$s_1$ reduces to an
    abstraction. Contradiction. Hence~$t_1t_2$ is in rnf, so we may
    take $s' \equiv t_1t_2$.\qedhere
  \end{itemize}
\end{proof}

\begin{lem}\label{lem_w_unique}
  If $t \reduces_w s_1$, $t \reduces_w s_2$ and these reductions have
  the same length, then $s_1 \equiv s_2$.
\end{lem}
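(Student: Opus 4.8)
The plan is to reduce the statement to the \emph{determinism} of one-step weak head reduction and then conclude by a routine induction on the common length. The key observation is that $\to_w$ can contract only the weak head redex, so from any term at most one $\to_w$-step is possible; in particular neither an atom nor an abstraction admits a $\to_w$-step, since the conclusion of each of the two rules defining $\to_w$ is an application.

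First I would prove the auxiliary claim that $\to_w$ is a partial function: for all $t,u_1,u_2 \in \Lambda^\infty$, if $t \to_w u_1$ and $t \to_w u_2$ then $u_1 \equiv u_2$. This goes by induction on the derivation of $t \to_w u_1$. If this derivation ends with the head-contraction rule, then $t \equiv (\lambda x . s) r$ and $u_1 \equiv s[r/x]$; since an abstraction admits no $\to_w$-step, the derivation of $t \to_w u_2$ cannot end with the application rule (which would require $\lambda x . s \to_w s'$), so it too must be a head contraction, giving $u_2 \equiv s[r/x] \equiv u_1$. If instead the derivation of $t \to_w u_1$ ends with the application rule, then $t \equiv s\,r$ with $s \to_w s'$ and $u_1 \equiv s'\,r$; here $s$ cannot be an abstraction (as it reduces under $\to_w$), so the derivation of $t \to_w u_2$ cannot be a head contraction and must again use the application rule, yielding $t \to_w s''\,r$ with $s \to_w s''$. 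The inductive hypothesis gives $s' \equiv s''$, whence $u_1 \equiv u_2$.

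With determinism in hand, I would prove the lemma by induction on the common length $n$ of the two reductions. For $n=0$ both $s_1$ and $s_2$ equal $t$, so $s_1 \equiv s_2$. For the successor case, write $t \to_w t' \reduces_w s_1$ and $t \to_w t'' \reduces_w s_2$, where the two tails each have length $n$. The determinism claim gives $t' \equiv t''$, and the inductive hypothesis applied to the two reductions $t' \reduces_w s_1$ and $t' \reduces_w s_2$ (both of length $n$, from the same source $t' \equiv t''$) yields $s_1 \equiv s_2$.

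The only slightly delicate point---and the crux of the whole argument---is the determinism claim, and within it the fact that an abstraction has no weak head redex. This is exactly what rules out any clash between the two rules for $\to_w$, forcing both derivations to proceed identically step by step. Everything downstream is bookkeeping induction on length. Note that because $\to_w$ and $\reduces_w$ are defined \emph{inductively} (they depend only on finite initial parts of the terms), these are ordinary finitary inductions and no coinduction is needed, despite the terms themselves being infinite.
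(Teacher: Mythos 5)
Your proposal is correct and follows the same route as the paper: the paper's proof is precisely an induction on the length of the reduction using the uniqueness of weak head redexes, which is the determinism claim you prove in detail. Your elaboration of why the two rules for $\to_w$ cannot clash (an abstraction admits no weak head step) is exactly the fact the paper invokes implicitly.
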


\begin{proof}
  By induction on the length of the reduction, using the fact that
  weak head redexes are unique if they exist.
\end{proof}

\newcommand{\crnf}{\mathrm{crnf}}

\begin{defi}\label{def_crnf}
  The \emph{canonical root normal form} (crnf) of a term~$t$ is an
  rnf~$s$ such that $t \reduces_w s$ and this reduction is shortest
  among all finitary weak head reductions of~$t$ to root normal form.
\end{defi}

It follows from Lemma~\ref{lem_beta_to_rnf} and
Lemma~\ref{lem_w_unique} that if~$t$ has a rnf then it has a unique
crnf. We shall denote this crnf by~$\crnf(t)$.

\begin{lem}\label{lem_beta_to_crnf}
  If $t \infred_\beta s$ with~$s$ in rnf, then $t \reduces_w \crnf(t)
  \infred_w s$.
\end{lem}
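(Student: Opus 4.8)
The plan is to combine Lemma~\ref{lem_beta_to_rnf} with the determinism of weak head reduction. First I would apply Lemma~\ref{lem_beta_to_rnf} to the hypothesis $t \infred_\beta s$ (with $s$ in rnf) to obtain a term $s'$ in rnf with $t \reduces_w s' \infred_w s$. In particular $t$ has a root normal form, so $\crnf(t)$ is well defined and $t \reduces_w \crnf(t)$ along a shortest weak head reduction to rnf.

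Next I would locate $\crnf(t)$ on the reduction $t \reduces_w s'$. Since $s'$ is a root normal form reached from $t$ by weak head reduction and $\crnf(t)$ is reached by the \emph{shortest} such reduction, the length $n$ of $t \reduces_w \crnf(t)$ is at most the length $m$ of $t \reduces_w s'$. Truncating the latter to its first $n$ steps gives a weak head reduction $t \reduces_w u$ of length $n$, with $u \reduces_w s'$ the remaining $m-n$ steps. By Lemma~\ref{lem_w_unique} (weak head reductions of equal length from the same term have the same result) I get $u \equiv \crnf(t)$, and hence $\crnf(t) \reduces_w s'$.

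Finally, chaining $\crnf(t) \reduces_w s' \infred_w s$ and invoking the basic fact that a finite weak head reduction followed by an infinitary one is again an infinitary weak head reduction (point~(2) of the first auxiliary lemma after the definitions), I conclude $\crnf(t) \infred_w s$, whence $t \reduces_w \crnf(t) \infred_w s$ as required.

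The main subtlety, and the step I would be most careful about, is resisting the temptation to identify $\crnf(t)$ with the term $s'$ produced by Lemma~\ref{lem_beta_to_rnf}. These need not coincide: a root normal form may still contain a weak head redex (for instance $(\lambda x . c)\, y\, z$ with $c$ a constant is in rnf yet has the weak head redex $(\lambda x . c)\, y$), so $s'$ may lie strictly further along the deterministic weak head path than $\crnf(t)$. The correct move is therefore to use only that $\crnf(t)$ is reached no later than $s'$, together with determinism via Lemma~\ref{lem_w_unique}, so that $\crnf(t)$ sits \emph{on} the reduction to $s'$ rather than being equal to it; the shortest-length clause in the definition of $\crnf$ is exactly what guarantees $n \le m$.
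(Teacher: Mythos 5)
Your proof is correct and takes essentially the same route as the paper, whose proof is just the one-line citation ``Follows from Lemma~\ref{lem_beta_to_rnf} and Lemma~\ref{lem_w_unique}''; you have filled in exactly the intended details, including the genuinely important observation that the $s'$ from Lemma~\ref{lem_beta_to_rnf} need not coincide with $\crnf(t)$ and must instead be located further along the deterministic weak head path via the shortest-length clause and Lemma~\ref{lem_w_unique}.
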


\begin{proof}
  Follows from Lemma~\ref{lem_beta_to_rnf} and
  Lemma~\ref{lem_w_unique}.
\end{proof}

\subsection{Infinitary $N_\Uc$-reduction}\label{sec_inf_n_red}

In the $\lambda_{\beta\bot_\Uc}^\infty$-calculus every term has a
unique normal form. This normal form may be obtained through an
infinitary $N_\Uc$-reduction, defined below.

\begin{defi}\label{def_leadsto}
  The relation~$\leadsto_{N_\Uc}$ is defined coinductively.
  \[
  \begin{array}{c}
    \infer={t \leadsto_{N_\Uc} a}{t \notin \Uc & \crnf(t) \equiv a} \quad\quad\,
    \infer={t \leadsto_{N_\Uc} s_1s_2}{t \notin \Uc & \crnf(t) \equiv t_1t_2 & t_1
      \leadsto_{N_\Uc} s_1 & t_2 \leadsto_{N_\Uc} s_2} \\ \\
    \infer={t \leadsto_{N_\Uc} \lambda x . s}{t \notin \Uc & \crnf(t) \equiv \lambda x . t'
      & t' \leadsto_{N_\Uc} s} \quad\quad\,
    \infer={t \leadsto_{N_\Uc} \bot}{t \in \Uc}
  \end{array}
  \]
\end{defi}

Note that because $\Rc \subseteq \Uc$, every term $t \notin \Uc$ has a
rnf, so~$\crnf(t)$ is defined for $t \notin \Uc$. Also note that
$\leadsto_{N_\Uc}$ is not closed under contexts, e.g.,
$t \leadsto_{N_\Uc} t'$ does \emph{not} imply
$t s \leadsto_{N_\Uc} t' s$.

The infinitary $N_\Uc$-reduction~$\leadsto_{N_\Uc}$ reduces a term to
its normal form --- its B{\"o}hm-like tree. It is a ``standard''
reduction with a specifically regular structure, which allows us to
prove Theorem~\ref{thm_N_prepend}: if
$t \infred_{\beta\bot_\Uc} t' \leadsto_{N_\Uc} s$ then
$t \leadsto_{N_\Uc} s$. This property allows us to derive confluence
from the fact that every term has a unique normal form reachable by an
infinitary $N_\Uc$-reduction. See Figure~\ref{fig_cr}. It is crucial
here that canonical root normal forms are unique, and that
Lemma~\ref{lem_beta_to_crnf} holds. This depends on
Theorem~\ref{thm_polonsky} --- the standardisation result shown by
Endrullis and Polonsky.

\begin{lem}\label{lem_N_to_bohm}
  If $t \leadsto_{N_\Uc} s$ then $t \infred_{\beta\bot_\Uc} s$.
\end{lem}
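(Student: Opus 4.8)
The plan is to prove this by coinduction, with a case analysis on the derivation of $t \leadsto_{N_\Uc} s$ that matches the four defining rules of $\leadsto_{N_\Uc}$ against the three rules defining the infinitary closure $\infred_{\beta\bot_\Uc}$. The key preliminary observation is that weak head reduction is a special case of $\beta\bot_\Uc$-reduction: since $\to_w$ only ever contracts a weak head $\beta$-redex, we have ${\reduces_w} \subseteq {\reduces_\beta} \subseteq {\reduces_{\beta\bot_\Uc}}$. Hence whenever $t \notin \Uc$, the reduction $t \reduces_w \crnf(t)$ witnessing the canonical root normal form gives at once $t \reduces_{\beta\bot_\Uc} \crnf(t)$. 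This is exactly what is needed to supply the $\reduces$-prefix demanded by each rule of $\infred_{\beta\bot_\Uc}$.

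For the three cases where $t \notin \Uc$ the argument is then direct. If $\crnf(t) \equiv a$, then $t \reduces_{\beta\bot_\Uc} a$, and since $a$ is an atom the first rule of $\infred_{\beta\bot_\Uc}$ yields $t \infred_{\beta\bot_\Uc} a$. If $\crnf(t) \equiv t_1 t_2$ with $t_i \leadsto_{N_\Uc} s_i$, then $t \reduces_{\beta\bot_\Uc} t_1 t_2$; by the coinductive hypothesis $t_i \infred_{\beta\bot_\Uc} s_i$, and the second rule gives $t \infred_{\beta\bot_\Uc} s_1 s_2$. The abstraction case $\crnf(t) \equiv \lambda x . t'$ with $t' \leadsto_{N_\Uc} s$ is entirely analogous, using $t \reduces_{\beta\bot_\Uc} \lambda x . t'$, the coinductive hypothesis on $t'$, and the third rule. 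Note that the four rules are mutually exclusive on the hypotheses $t \notin \Uc$ versus $t \in \Uc$, so this case split is clean.

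The only case deserving a second look is the collapsing rule, where $t \in \Uc$ and $s \equiv \bot$: here I cannot route through $\crnf(t)$, since such a $t$ need not have a root normal form. Instead I appeal to the $\bot_\Uc$-rule directly. If $t \equiv \bot$ then $t \reduces_{\beta\bot_\Uc} \bot$ by reflexivity, and otherwise $t \in \Uc$ with $t \not\equiv \bot$ puts $\la t, \bot \ra \in R_{\bot_\Uc}$, so $t \to_{\beta\bot_\Uc} \bot$; in either case $t \reduces_{\beta\bot_\Uc} \bot$, and the first rule of $\infred_{\beta\bot_\Uc}$ gives $t \infred_{\beta\bot_\Uc} \bot$. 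There is no genuine obstacle beyond this bookkeeping: the implicit Skolem function simply emits the root constructor read off from $\crnf(t)$ (or $\bot$ when $t \in \Uc$), and it is guarded because in the application and abstraction cases the corecursive invocations sit strictly beneath that constructor, and the coinductive hypothesis is applied only to the immediate subterms $t_1, t_2, t'$ extracted from $\crnf(t)$, as required by the criteria of Section~\ref{sec_coind}.
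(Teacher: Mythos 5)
Your proof is correct and is exactly the argument the paper intends: the paper's proof is simply ``By coinduction,'' and your elaboration---using $t \reduces_w \crnf(t)$, hence $t \reduces_{\beta\bot_\Uc} \crnf(t)$, as the finite prefix required by each rule of the infinitary closure, applying the coinductive hypothesis to the immediate subterms of $\crnf(t)$, and handling the $t \in \Uc$ case via the $\bot_\Uc$-rule (or reflexivity when $t \equiv \bot$)---is the routine filling-in of that coinduction. The only cosmetic remark is that no Skolem function is involved here, since the statement has no existential quantifier; the guardedness check you perform is for the derivation of $\infred_{\beta\bot_\Uc}$ itself, not for an implicit witness function.
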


\begin{proof}
  By coinduction.
\end{proof}

\begin{lem}\label{lem_N_normalising}
  For every term $t \in \Lambda^\infty$ there is~$s$ with $t
  \leadsto_{N_\Uc} s$.
\end{lem}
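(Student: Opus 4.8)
The plan is to prove this by coinduction, in the style of the existential coinductive proofs of Example~\ref{ex_skolem}: the implicit Skolem function is precisely the map sending a term to its normal form (its B{\"o}hm-like tree). First I would fix $t \in \Lambda^\infty$ and perform a case analysis on whether $t \in \Uc$.

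If $t \in \Uc$, I simply take $s \equiv \bot$ and conclude $t \leadsto_{N_\Uc} \bot$ by the last rule of Definition~\ref{def_leadsto}; no corecursive invocation is involved in this branch. If $t \notin \Uc$, then since $\Rc \subseteq \Uc$ the term $t$ is not root-active, hence has a rnf, so $\crnf(t)$ is defined (exactly as noted right after Definition~\ref{def_leadsto}). Because $\crnf(t)$ is in root normal form it has one of three shapes, and I would branch accordingly. If $\crnf(t) \equiv a$ is an atom, take $s \equiv a$ and apply the first rule. If $\crnf(t) \equiv \lambda x . t'$, apply the coinductive hypothesis to $t'$ to obtain some $s'$ with $t' \leadsto_{N_\Uc} s'$, then take $s \equiv \lambda x . s'$ and apply the third rule. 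If $\crnf(t) \equiv t_1 t_2$, apply the coinductive hypothesis to $t_1$ and to $t_2$ to obtain $s_1, s_2$ with $t_i \leadsto_{N_\Uc} s_i$, then take $s \equiv s_1 s_2$ and apply the second rule.

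The two points that must be checked, rather than any deep difficulty, are: (i) that $\crnf(t)$ actually exists in the non-meaningless case, which is precisely the content of the remark following Definition~\ref{def_leadsto} combined with the root-activeness axiom $\Rc \subseteq \Uc$; and (ii) that the implicit Skolem function is guarded, i.e.\ that every corecursive invocation sits underneath at least one constructor. The hard part, insofar as there is one, is (ii), but it holds in each branch by inspection: the outputs $\bot$ and $a$ are nullary constructors with no recursion, while in the abstraction and application branches the recursive results are placed under $\lambda x . -$ and under application, respectively. Hence the coinductive definition of the Skolem function is productive, and the proof goes through with no essential obstacle.
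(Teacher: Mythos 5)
Your proof is correct and is essentially identical to the paper's: both argue by coinduction, taking $s \equiv \bot$ when $t \in \Uc$ and otherwise casing on the shape of $\crnf(t)$ (which exists since $\Rc \subseteq \Uc$), applying the coinductive hypothesis to the immediate subterms and the corresponding rule of Definition~\ref{def_leadsto}. Your explicit check of guardedness of the implicit Skolem function is a welcome addition that the paper leaves implicit.
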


\begin{proof}
  By coinduction. If~$t \in \Uc$ then $t \leadsto_{N_\Uc} \bot$ and we
  may take $s \equiv \bot$. Otherwise there are three cases depending
  on the form of~$\crnf(t)$.
  \begin{itemize}
  \item $\crnf(t) \equiv a$. Then $t \leadsto_{N_\Uc} a$ by the first
    rule, so we may take $s \equiv a$.
  \item $\crnf(t) \equiv t_1t_2$. Then by the coinductive hypothesis
    we obtain $s_1,s_2$ with $t_i \leadsto_{N_\Uc} s_i$. Take
    $s \equiv s_1s_2$. Then $t \leadsto_{N_\Uc} s$.
  \item $\crnf(t) \equiv \lambda x . t'$. Analogous to the previous
    case.\qedhere
  \end{itemize}
\end{proof}

\begin{lem}\label{lem_N_confluent}
  If $t \leadsto_{N_\Uc} s_1$ and $t \leadsto_{N_\Uc} s_2$ then
  $s_1 \equiv s_2$.
\end{lem}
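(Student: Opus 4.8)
The plan is to prove $s_1 \equiv s_2$ by coinduction, using the coinductive characterisation of identity via bisimilarity (Definition~\ref{def_bisimilarity} and Proposition~\ref{prop_bisimilarity}), exactly as anticipated in the remark preceding this lemma. The guiding observation is that for a fixed~$t$ the derivation of $t \leadsto_{N_\Uc} (-)$ is essentially forced: whether $t \in \Uc$ decides between the last rule and the first three, and when $t \notin \Uc$ the shape of the \emph{unique} canonical root normal form~$\crnf(t)$ decides which of the first three rules applies and fixes the subterms occurring in the premises.

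First I would case-split on whether $t \in \Uc$. If $t \in \Uc$, then neither $t \leadsto_{N_\Uc} s_1$ nor $t \leadsto_{N_\Uc} s_2$ can be justified by any of the first three rules, since each carries the side condition $t \notin \Uc$; hence both are instances of the fourth rule and $s_1 \equiv \bot \equiv s_2$. If $t \notin \Uc$, then by the remark following Definition~\ref{def_crnf} the term $\crnf(t)$ is well-defined and unique, and I would case-split on its form.
\begin{itemize}
\item If $\crnf(t) \equiv a$, only the first rule can justify $t \leadsto_{N_\Uc} s_i$, so $s_1 \equiv a \equiv s_2$.
\item If $\crnf(t) \equiv t_1 t_2$, only the second rule applies in each derivation, giving $s_i \equiv s_i^{(1)} s_i^{(2)}$ with $t_1 \leadsto_{N_\Uc} s_i^{(1)}$ and $t_2 \leadsto_{N_\Uc} s_i^{(2)}$ for $i=1,2$. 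Because $\crnf(t)$ is unique, the subterms $t_1,t_2$ are literally the same in both derivations, so the coinductive hypothesis applied to~$t_1$ and to~$t_2$ yields $s_1^{(1)} \equiv s_2^{(1)}$ and $s_1^{(2)} \equiv s_2^{(2)}$, whence $s_1 \equiv s_2$ by the bisimilarity rule for application.
\item If $\crnf(t) \equiv \lambda x . t'$, the argument is analogous, using the coinductive hypothesis on~$t'$ and the bisimilarity rule for abstraction.
\end{itemize}

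The step I expect to carry the weight is the uniqueness of $\crnf(t)$ for $t \notin \Uc$: it is what guarantees both that exactly one rule applies and that the two derivations branch on identical subterms, so that the coinductive hypothesis can be invoked on matching arguments. This uniqueness is not immediate from the definition of $\leadsto_{N_\Uc}$; it rests on Lemma~\ref{lem_beta_to_crnf} and Lemma~\ref{lem_w_unique}, and ultimately on the Endrullis--Polonsky standardisation result (Theorem~\ref{thm_polonsky}). The remaining work—threading the bisimilarity approximants through the appropriate constructor rule in each case—is precisely the routine coinductive pattern described in Section~\ref{sec_coind}, and requires no further ideas.
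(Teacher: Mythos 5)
Your proposal is correct and follows essentially the same route as the paper: a coinductive proof of bisimilarity, with the uniqueness of $\crnf(t)$ forcing both derivations to use the same rule on the same subterms so that the coinductive hypothesis applies to matching arguments. The only cosmetic difference is that you case-split first on $t \in \Uc$ whereas the paper splits on whether $s_1 \equiv \bot$; these are interchangeable since the fourth rule is the only one producing $\bot$ and the only one applicable when $t \in \Uc$.
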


\begin{proof}
  By coinduction. If $s_1 \equiv \bot$ then~$t \in \Uc$, so we must
  also have $s_2 \equiv \bot$. Otherwise there are three cases,
  depending on the form of~$\crnf(t)$. Suppose
  $\crnf(t) \equiv t_1t_2$, other cases being similar. Then
  $s_1 \equiv u_1u_2$ with $t_i \leadsto_{N_\Uc} u_i$ and
  $s_2 \equiv w_1w_2$ with $t_i \leadsto_{N_\Uc} w_i$. By the
  coinductive hypothesis $u_i \equiv w_i$. Thus
  $s_1 \equiv u_1u_2 \equiv w_1w_2 \equiv s_2$.
\end{proof}

The next two lemmas and the theorem depend on the expansion axiom.

\begin{lem}\label{lem_N_nf}
  If $t \leadsto_{N_\Uc} s$ then $s$ is in $\beta\bot_\Uc$-normal
  form.
\end{lem}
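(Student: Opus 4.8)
The plan is to prove this by coinduction on the derivation of $t \leadsto_{N_\Uc} s$, reading the conclusion ``$s$ is in $\beta\bot_\Uc$-normal form'' through its coinductive unfolding: since the compatible closure is defined at finite depth, $s$ is in normal form iff it contains no $R_{\beta\bot_\Uc}$-redex, i.e.\ iff the root of~$s$ is not an $R_{\beta\bot_\Uc}$-redex and each immediate subterm of~$s$ is again in normal form. With this reading the proof is a case analysis on the four rules defining $\leadsto_{N_\Uc}$, where in each case I must (i)~check that the root of~$s$ is neither a $\beta$-redex nor a $\bot_\Uc$-redex, and (ii)~discharge the normal-form obligations for the immediate subterms. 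Part~(ii) is immediate from the coinductive hypothesis: in the two compound cases the subterms~$s_i$ arise via premises $t_i \leadsto_{N_\Uc} s_i$, so the hypothesis applies to each of them; the atom and~$\bot$ cases have no subterms.

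For the $\bot_\Uc$-redex part of~(i) I would argue uniformly. In the three rules carrying the premise $t \notin \Uc$ (producing $s \equiv a$, $s \equiv s_1 s_2$, or $s \equiv \lambda x . s'$) I claim $s \notin \Uc$. Indeed $t \leadsto_{N_\Uc} s$ gives $t \infred_{\beta\bot_\Uc} s$ by Lemma~\ref{lem_N_to_bohm}, so if $s \in \Uc$ then $t \in \Uc$ by Corollary~\ref{cor_back_active}, contradicting $t \notin \Uc$; hence~$s$ is not a $\bot_\Uc$-redex. This is precisely the step that invokes the expansion axiom, through Corollary~\ref{cor_back_active}. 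In the fourth rule we have $s \equiv \bot$, which is not a $\bot_\Uc$-redex by definition (such a redex must be $\not\equiv \bot$). Since atoms, abstractions and~$\bot$ are never $\beta$-redexes, the only remaining obligation is to exclude a $\beta$-redex at the root in the application case.

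The main obstacle is this application case $s \equiv s_1 s_2$ (second rule), where I must show that $s_1$ is not an abstraction. Here $\crnf(t) \equiv t_1 t_2$ is in rnf, so by the definition of rnf there is no~$r$ with $t_1 \reduces_\beta \lambda x . r$. Suppose $s_1 \equiv \lambda x . u$; inspecting the rules for $t_1 \leadsto_{N_\Uc} s_1$, an abstraction on the right can only be produced by the abstraction rule, forcing $t_1 \notin \Uc$ and $\crnf(t_1) \equiv \lambda x . t_1'$. But $\crnf(t_1) \equiv \lambda x . t_1'$ means $t_1 \reduces_w \lambda x . t_1'$ and hence $t_1 \reduces_\beta \lambda x . t_1'$, contradicting the rnf condition. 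Thus $s_1$ is not an abstraction and $s_1 s_2$ is not a $\beta$-redex. Combining the two parts of~(i) with the coinductive hypothesis for~(ii) closes every case; the only genuinely delicate point is tying the rnf condition on $\crnf(t)$ to the shape of the $N_\Uc$-normal form~$s_1$, everything else being a routine application of the earlier lemmas.
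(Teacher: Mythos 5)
Your proof is correct and rests on exactly the same key ingredients as the paper's: Lemma~\ref{lem_N_to_bohm} combined with Corollary~\ref{cor_back_active} (hence the expansion axiom) to exclude a $\bot_\Uc$-redex, and inspection of the rules of $\leadsto_{N_\Uc}$ against the rnf condition on $\crnf(t)$ to exclude a $\beta$-redex at an application node. The only difference is presentational: the paper argues by contradiction and reduces ``without loss of generality'' to a redex at the root, whereas you make that descent explicit by coinduction on the normal-form predicate --- the two come to the same thing.
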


\begin{proof}
  Suppose~$s$ contains a $\beta\bot_\Uc$-redex. Without loss of
  generality, assume the redex is at the root. First assume that~$s$
  is a $\bot_\Uc$-redex, i.e., $s \in \Uc$ and $s \not\equiv
  \bot$. Using Lemma~\ref{lem_N_to_bohm} we conclude
  $t \infred_{\beta\bot_\Uc} s$. Then $t \in \Uc$ by
  Corollary~\ref{cor_back_active}. Thus $s \equiv \bot$, giving a
  contradiction. So assume~$s$ is a $\beta$-redex, i.e.,
  $s \equiv (\lambda x . s_1) s_2$. But by inspecting the definition
  of $t \leadsto_{N_\Uc} s$ one sees that this is only possible
  when~$\crnf(t)$ is a $\beta$-redex. Contradiction.
\end{proof}

\begin{lem}\label{lem_omega_rnf}
  Suppose $t \infred_{\beta\bot_\Uc} s$ and $t,s$ are in rnf.
  \begin{itemize}
  \item If $s \equiv a$ then $t \equiv s$.
  \item If $s \equiv s_1s_2$ then $t \equiv t_1t_2$ with $t_i
    \infred_{\beta\bot_\Uc} s_i$.
  \item If $s \equiv \lambda x . s'$ then $t \equiv \lambda x . t'$
    with $t' \infred_{\beta\bot_\Uc} s'$.\qedhere
  \end{itemize}
\end{lem}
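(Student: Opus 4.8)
The plan is to prove the lemma by a plain case analysis on the shape of the right-hand side~$s$; no coinduction is needed, since the statement only concerns the root structure. In each case I would first peel off a finite $\beta\bot_\Uc$-prefix from $t \infred_{\beta\bot_\Uc} s$ using the definition of the infinitary closure, then separate the $\beta$-steps from the parallel $\bot_\Uc$-step, and finally exploit the assumption that~$t$ is in rnf to control the remaining finitary $\beta$-reduction.

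Concretely, consider the case $s \equiv s_1 s_2$. By the application rule in the definition of~$\infred_{\beta\bot_\Uc}$ there are $u_1,u_2$ with $t \reduces_{\beta\bot_\Uc} u_1 u_2$ and $u_i \infred_{\beta\bot_\Uc} s_i$. By Lemma~\ref{lem_fin_bohm_decompose} we obtain~$r$ with $t \reduces_\beta r \To_{\bot_\Uc} u_1 u_2$. Since $u_1 u_2 \not\equiv \bot$, the only applicable rule of~$\To_{\bot_\Uc}$ at the root is the application congruence, so $r \equiv r_1 r_2$ with $r_i \To_{\bot_\Uc} u_i$; the $\bot_\Uc$-redex rule is excluded because it always yields~$\bot$. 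Combining $r_i \To_{\bot_\Uc} u_i \infred_{\beta\bot_\Uc} s_i$ with Lemma~\ref{lem_omega_1_beta_to_beta} gives $r_i \infred_{\beta\bot_\Uc} s_i$. The atom and abstraction cases are handled identically: a non-$\bot$ atom can only be the $\To_{\bot_\Uc}$-image of itself, and an abstraction only of an abstraction, so again no root collapse occurs. Here I would use that~$s$ being in rnf forces the atom in the first bullet to differ from~$\bot$.

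It remains to transfer the established root shape of~$r$ back across the finite reduction $t \reduces_\beta r$, using that~$t$ is in rnf. The main obstacle is the purely finitary fact that an rnf term reduces \emph{componentwise}: if~$t$ is in rnf and $t \reduces_\beta v$, then~$v$ has the same root constructor as~$t$ and the reduction splits into reductions of the immediate subterms. The point is that rnf guarantees (by definition, equivalently phrased with~$\reduces_\beta$) that the head of an application never $\beta$-reduces to an abstraction, so no redex is ever created at the root; an induction on the length of $t \reduces_\beta v$ then shows that every reduct keeps the root constructor of~$t$ and that the reduction decomposes into its components (for an atom~$t$ this degenerates to $v \equiv t$). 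Applying this to $t \reduces_\beta r$ yields, in the application case, $t \equiv t_1 t_2$ with $t_i \reduces_\beta r_i$; the atom case gives $t \equiv a \equiv s$; and the abstraction case gives $t \equiv \lambda x . t'$ with $t' \reduces_\beta r'$. Note this step needs only that~$t$ is in rnf, not that $t \notin \Uc$, which is why I avoid routing the argument through Lemma~\ref{lem_bohm_decompose} and Lemma~\ref{lem_par_bot_preserves_rnf_rev}: a term in rnf may well lie in~$\Uc$ (e.g.\ the ogre~$\Os \equiv \lambda x . \Os$).

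Finally I would glue the pieces together. From $t_i \reduces_\beta r_i \infred_{\beta\bot_\Uc} s_i$ (and likewise $t' \reduces_\beta r' \infred_{\beta\bot_\Uc} s'$) the basic fact that $t \reduces s \infred s'$ implies $t \infred s'$ yields $t_i \infred_{\beta\bot_\Uc} s_i$ and $t' \infred_{\beta\bot_\Uc} s'$, which are exactly the required conclusions; the atom case is already finished by $t \equiv s$. I expect the only non-mechanical ingredient to be the componentwise-reduction fact for rnf terms described above; everything else is a direct application of the already established decomposition lemma (Lemma~\ref{lem_fin_bohm_decompose}) and absorption lemma (Lemma~\ref{lem_omega_1_beta_to_beta}) together with the structure of~$\To_{\bot_\Uc}$ on non-$\bot$ targets.
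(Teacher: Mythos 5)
Your proof is correct, and it takes a genuinely different route from the paper's, although the two share their core ingredients. The paper applies the full infinitary postponement result (Lemma~\ref{lem_bohm_decompose}) to obtain $t \infred_\beta r \To_{\bot_\Uc} s$ with the parallel $\bot_\Uc$-step at the very end, decomposes that step at the root, peels a finite prefix off $t \infred_\beta r$, and reassembles the subterm reductions with Lemma~\ref{lem_omega_merge} (absorbing a \emph{trailing} $\To_{\bot_\Uc}$-step into $\infred_{\beta\bot_\Uc}$). You instead unfold one level of the coinductive definition of $\infred_{\beta\bot_\Uc}$ first, postpone only over the resulting finite prefix via Lemma~\ref{lem_fin_bohm_decompose}, and reassemble with Lemma~\ref{lem_omega_1_beta_to_beta}, which absorbs a \emph{leading} $\To_{\bot_\Uc}$-step. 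Both arguments rest on the same two non-mechanical facts: that $\To_{\bot_\Uc}$ into a non-$\bot$ target preserves the root constructor, and that a finite $\beta$-reduction out of a term in rnf preserves the root constructor and splits componentwise --- the paper uses the latter just as silently as you flag it (``because $t$ is in rnf, we must in fact have $t \equiv t_1t_2$ with $t_i \reduces_\beta t_i'$''). What your variant buys is concrete: Lemma~\ref{lem_omega_merge} is one of the few results in that subsection requiring the expansion axiom, and the paper accordingly lists Lemma~\ref{lem_omega_rnf} among the statements depending on it; your ingredients (Lemma~\ref{lem_fin_bohm_decompose}, Lemma~\ref{lem_omega_1_beta_to_beta}, and the componentwise fact) are all expansion-free, so your proof shows this particular lemma does not actually need the expansion axiom. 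Your side remark that a term in rnf may still lie in $\Uc$ (the ogre) correctly explains why Lemma~\ref{lem_par_bot_preserves_rnf_rev} would be the wrong tool here, though the paper does not use it in this proof either.
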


\begin{proof}
  First note that by Lemma~\ref{lem_bohm_decompose} there is~$r$
  with $t \infred_\beta r \To_{\bot_\Uc} s$.
  \begin{itemize}
  \item If $s \equiv a$ then $a \not\equiv \bot$ and $r \equiv a$, and thus
    $t \reduces_\beta a$. But because~$t$ is in rnf it does not reduce
    to a $\beta$-redex, so in fact $t \equiv a$.
  \item If $s \equiv s_1s_2$ then $r \equiv r_1r_2$ with
    $r_i \To_{\bot_\Uc} s_i$. Thus $t \reduces_\beta t_1't_2'$ where
    $t_i' \infred_\beta r_i$. Because~$t$ is in rnf, we must in fact
    have $t \equiv t_1t_2$ with $t_i \reduces_\beta t_i'$. Then
    $t_i \infred_\beta r_i \To_{\bot_\Uc} s_i$, so
    $t_i \infred_{\beta\bot_\Uc} s_i$ by Lemma~\ref{lem_omega_merge}.
  \item The case $s \equiv \lambda x . s'$ is analogous to the
    previous one.\qedhere
  \end{itemize}
\end{proof}

\begin{thm}\label{thm_N_prepend}
  If $t \infred_{\beta\bot_\Uc} t' \leadsto_{N_\Uc} s$ then
  $t \leadsto_{N_\Uc} s$.
\end{thm}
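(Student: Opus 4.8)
The plan is to prove the theorem by coinduction, performing case analysis on the derivation of $t' \leadsto_{N_\Uc} s$ according to the four rules of Definition~\ref{def_leadsto}. In each case I must exhibit the matching $N_\Uc$-rule for $t$ and feed the corresponding subterm obligations to the coinductive hypothesis; the constructor (application or abstraction) wrapping those corecursive calls supplies the guard. Two preliminary observations handle the degenerate situations. If $s \equiv \bot$, then $t' \in \Uc$, and since $t \infred_{\beta\bot_\Uc} t'$, Corollary~\ref{cor_back_active} (this is where the expansion axiom enters) gives $t \in \Uc$, so $t \leadsto_{N_\Uc} \bot$ by the last rule. In every other case $t' \notin \Uc$, and then $t \notin \Uc$ as well: were $t \in \Uc$, Corollary~\ref{cor_bohm_preserves_U} would force $t' \in \Uc$. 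Hence $\crnf(t)$ is defined and I may reason about it.

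The heart of the argument is the structural case; take $\crnf(t') \equiv t_1' t_2'$ with $t_i' \leadsto_{N_\Uc} s_i$, the abstraction case being identical and the atom case simpler. Since $t' \reduces_w \crnf(t')$ is in particular a $\beta$-reduction, Corollary~\ref{cor_bohm_beta_append} yields $t \infred_{\beta\bot_\Uc} t_1' t_2'$. Postponement of parallel $\bot_\Uc$-reduction (Lemma~\ref{lem_bohm_decompose}) then gives an $r$ with $t \infred_\beta r \To_{\bot_\Uc} t_1' t_2'$, and as $t_1' t_2' \not\equiv \bot$ we obtain $r \equiv r_1 r_2$ with $r_i \To_{\bot_\Uc} t_i'$. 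I then check that $r_1 r_2$ is in rnf: from $t \infred_{\beta\bot_\Uc} r_1 r_2$ and $t \notin \Uc$, Corollary~\ref{cor_back_active} shows $r_1 r_2 \notin \Uc$, so Lemma~\ref{lem_par_bot_preserves_rnf_rev} applied to $r_1 r_2 \To_{\bot_\Uc} t_1' t_2'$ (which is in rnf) yields that $r_1 r_2$ is in rnf.

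With $t \infred_\beta r_1 r_2$ and $r_1 r_2$ in rnf, Lemma~\ref{lem_beta_to_crnf} gives $t \reduces_w \crnf(t) \infred_w r_1 r_2$, hence $\crnf(t) \infred_\beta r_1 r_2$ by Theorem~\ref{thm_polonsky}. Both terms are in rnf, so Lemma~\ref{lem_omega_rnf} forces $\crnf(t) \equiv t_1 t_2$ with $t_i \infred_{\beta\bot_\Uc} r_i$. Composing with $r_i \To_{\bot_\Uc} t_i'$ through Lemma~\ref{lem_omega_merge} gives $t_i \infred_{\beta\bot_\Uc} t_i'$, and together with $t_i' \leadsto_{N_\Uc} s_i$ the coinductive hypothesis yields $t_i \leadsto_{N_\Uc} s_i$. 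Since $t \notin \Uc$ and $\crnf(t) \equiv t_1 t_2$, the application rule concludes $t \leadsto_{N_\Uc} s_1 s_2 \equiv s$. In the atom case $\crnf(t') \equiv a$ one instead gets $t \infred_{\beta\bot_\Uc} a$; because $a$ is an atom $\not\equiv \bot$, postponement forces the intermediate term to be $a$ itself, so $t \infred_\beta a$, and the same Lemma~\ref{lem_beta_to_crnf}/Lemma~\ref{lem_omega_rnf} step gives $\crnf(t) \equiv a$.

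I expect the main obstacle to be the bookkeeping that aligns $\crnf(t)$ with $\crnf(t')$ through the intermediate term $r$: establishing that $r_1 r_2$ is in rnf (which needs both Corollary~\ref{cor_back_active} and Lemma~\ref{lem_par_bot_preserves_rnf_rev}), and pinning down the \emph{shape} of $\crnf(t)$ rather than some unrelated rnf. The latter relies essentially on uniqueness of the canonical root normal form, which in turn rests on the Endrullis--Polonsky standardisation (Theorem~\ref{thm_polonsky}); without it one could not match $\crnf(t)$ with $r_1 r_2$ via Lemma~\ref{lem_omega_rnf}. One must also take care that the corecursive calls are always placed under a constructor so that the implicit Skolem-style definition of the $N_\Uc$-witness remains guarded.
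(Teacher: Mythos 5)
Your proposal is correct and follows essentially the same route as the paper's proof: the $s \equiv \bot$ case via Corollary~\ref{cor_back_active}, then $t \infred_{\beta\bot_\Uc} \crnf(t')$ via Corollary~\ref{cor_bohm_beta_append}, postponement (Lemma~\ref{lem_bohm_decompose}), the rnf argument through Lemma~\ref{lem_par_bot_preserves_rnf_rev}, and alignment of $\crnf(t)$ with $\crnf(t')$ via Lemma~\ref{lem_beta_to_crnf} and Lemma~\ref{lem_omega_rnf} before the guarded corecursive call. The only (immaterial) difference is that you apply Lemma~\ref{lem_omega_rnf} to $\crnf(t) \infred_\beta r_1r_2$ and then merge with $r_i \To_{\bot_\Uc} t_i'$ componentwise, whereas the paper first merges to get $\crnf(t) \infred_{\beta\bot_\Uc} \crnf(t')$ and then decomposes; you also make explicit the step $r \notin \Uc$ that the paper leaves implicit.
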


\begin{proof}
  By coinduction. If $s \equiv \bot$ then~$t' \in \Uc$. By
  Corollary~\ref{cor_back_active} also~$t \in \Uc$. Hence
  $t \leadsto_{N_\Uc} \bot \equiv s$. If $s \not\equiv\bot$ then
  $t' \notin \Uc$ and $t' \reduces_w \crnf(t')$. By
  Corollary~\ref{cor_bohm_beta_append} we have
  $t \infred_{\beta\bot_\Uc} \crnf(t')$. By
  Lemma~\ref{lem_bohm_decompose} there is~$r$ with
  $t \infred_\beta r \To_{\bot_\Uc} \crnf(t')$. We have $t \notin \Uc$
  by Corollary~\ref{cor_bohm_preserves_U}. Then~$r$ is in rnf by
  Lemma~\ref{lem_par_bot_preserves_rnf_rev}. Hence
  $t \reduces_w \crnf(t) \infred_{\beta\bot_\Uc} \crnf(t')$ by
  Lemma~\ref{lem_beta_to_crnf} and Lemma~\ref{lem_omega_merge}. Now
  there are three cases depending on the form of~$\crnf(t')$.
  \begin{itemize}
  \item $\crnf(t') \equiv a$. Then $s \equiv a$, and $\crnf(t) \equiv
    a$ by Lemma~\ref{lem_omega_rnf}. Thus $t \leadsto_{N_\Uc} a \equiv s$.
  \item $\crnf(t') \equiv t_1't_2'$. Then $s \equiv s_1s_2$ with
    $t_i' \leadsto_{N_\Uc} s_i$. By Lemma~\ref{lem_omega_rnf} we have
    $\crnf(t) \equiv t_1t_2$ with $t_i \infred_{\beta\bot_\Uc}
    t_i'$. By the coinductive hypothesis $t_i \leadsto_{N_\Uc}
    s_i$. Hence $t \leadsto_{N_\Uc} s_1s_2 \equiv s$.
  \item The case $\crnf(t') \equiv \lambda x . u$ is analogous to the
    previous one.\qedhere
  \end{itemize}
\end{proof}

\subsection{Confluence and normalisation}

Recall that~$\Uc$ is an arbitrary fixed set of strongly meaningless
terms.

\begin{thm}[Confluence of the $\lambda_{\beta\bot_\Uc}^\infty$-calculus]\label{thm_bohm_cr}~

  If $t \infred_{\beta\bot_\Uc} t_1$ and
  $t \infred_{\beta\bot_\Uc} t_2$ then there exists~$t_3$ such that
  $t_1 \infred_{\beta\bot_\Uc} t_3$ and
  $t_2 \infred_{\beta\bot_\Uc} t_3$.
\end{thm}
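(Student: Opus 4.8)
The plan is to derive confluence as an immediate corollary of the fact that every term has a unique $N_\Uc$-normal form, combined with the prepend theorem. The diagram in Figure~\ref{fig_cr} already displays the shape of the argument: push both divergent reducts $t_1$ and $t_2$ down to their $N_\Uc$-normal forms and observe that these normal forms must coincide. No coinduction is needed at this final stage; it is a short diagram chase assembling the lemmas of Section~\ref{sec_inf_n_red}.

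Concretely, I would first invoke Lemma~\ref{lem_N_normalising} to obtain terms $t_1'$ and $t_2'$ with $t_1 \leadsto_{N_\Uc} t_1'$ and $t_2 \leadsto_{N_\Uc} t_2'$. Combining the hypotheses $t \infred_{\beta\bot_\Uc} t_1$ and $t \infred_{\beta\bot_\Uc} t_2$ with these $N_\Uc$-reductions, Theorem~\ref{thm_N_prepend} yields $t \leadsto_{N_\Uc} t_1'$ and $t \leadsto_{N_\Uc} t_2'$. Since $N_\Uc$-normal forms are unique (Lemma~\ref{lem_N_confluent}), this forces $t_1' \equiv t_2'$.

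It then remains only to turn the $N_\Uc$-reductions back into ordinary B{\"o}hm reductions. By Lemma~\ref{lem_N_to_bohm} we have $t_1 \infred_{\beta\bot_\Uc} t_1'$ and $t_2 \infred_{\beta\bot_\Uc} t_2'$, and since $t_1' \equiv t_2'$ we may take $t_3 \equiv t_1'$, completing the proof.

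The genuinely hard work has already been discharged before this statement: the crux is Theorem~\ref{thm_N_prepend}, that prepending an infinitary $\beta\bot_\Uc$-reduction to an $N_\Uc$-reduction again yields an $N_\Uc$-reduction, which itself rests on uniqueness of canonical root normal forms and hence on the standardisation result of Endrullis and Polonsky (Theorem~\ref{thm_polonsky}). Relative to that, the present theorem is essentially mechanical. The one point to keep in mind is that the expansion axiom is used silently through Theorem~\ref{thm_N_prepend} (via Corollary~\ref{cor_back_active}), which is precisely why confluence of the full $\lambda_{\beta\bot_\Uc}^\infty$-calculus, unlike confluence modulo $\sim_\Uc$, requires $\Uc$ to be strongly meaningless.
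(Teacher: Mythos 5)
Your proof is correct and follows exactly the same route as the paper: Lemma~\ref{lem_N_normalising} to obtain $N_\Uc$-normal forms, Theorem~\ref{thm_N_prepend} to prepend the B{\"o}hm reductions, Lemma~\ref{lem_N_confluent} for uniqueness, and Lemma~\ref{lem_N_to_bohm} to convert back. The paper's own proof is the same four-step diagram chase.
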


\begin{proof}
  By Lemma~\ref{lem_N_normalising} there are~$t_1',t_2'$ with
  $t_i \leadsto_{N_\Uc} t_i'$ for $i=1,2$. By
  Theorem~\ref{thm_N_prepend} we have $t \leadsto_{N_\Uc} t_i'$ for
  $i=1,2$. By Lemma~\ref{lem_N_confluent} we have $t_1' \equiv
  t_2'$. Take $t_3 \equiv t_1' \equiv t_2'$. We have
  $t_i \leadsto_{N_\Uc} t_3$ for $i=1,2$, so
  $t_1 \infred_{\beta\bot_\Uc} t_3$ and
  $t_2 \infred_{\beta\bot_\Uc} t_3$ by Lemma~\ref{lem_N_to_bohm}.
\end{proof}

\begin{thm}[Normalisation of the
  $\lambda_{\beta\bot_\Uc}^\infty$-calculus]\label{thm_bohm_norm}~

  For every $t \in \Lambda^\infty$ there exists a unique
  $s \in \Lambda^\infty$ in $\beta\bot_\Uc$-normal form such that
  $t \infred_{\beta\bot_\Uc} s$.
\end{thm}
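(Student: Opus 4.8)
The statement has two halves, and I would treat them separately. For \emph{existence}, nothing new is needed: by Lemma~\ref{lem_N_normalising} choose $s$ with $t \leadsto_{N_\Uc} s$; Lemma~\ref{lem_N_to_bohm} then gives $t \infred_{\beta\bot_\Uc} s$, and Lemma~\ref{lem_N_nf} guarantees that this $s$ is in $\beta\bot_\Uc$-normal form. So $s$ witnesses existence immediately.

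For \emph{uniqueness} I would first isolate the one genuinely missing ingredient: a $\beta\bot_\Uc$-normal form infinitarily reduces only to itself, i.e.\ if $s$ is in $\beta\bot_\Uc$-normal form and $s \infred_{\beta\bot_\Uc} r$ then $r \equiv s$. I would prove this by coinduction, inspecting the derivation of $s \infred_{\beta\bot_\Uc} r$. In each of the three clauses of the infinitary closure the finite prefix $s \reduces_{\beta\bot_\Uc} \cdots$ must be the empty reduction, since $s$ contains no $R_{\beta\bot_\Uc}$-redex and hence admits no $\to_{\beta\bot_\Uc}$-step; therefore $s$ already has the constructor displayed in the rule ($s \equiv a$, or $s \equiv t_1t_2$, or $s \equiv \lambda x . t'$). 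The immediate subterms of $s$ are themselves normal forms---a redex inside a subterm would, via the compatible-closure rules, be a redex of $s$---so the coinductive hypothesis applies to them and forces the corresponding subterms of $r$ to coincide with those of $s$. Thus $r \equiv s$.

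With this terminal property in hand, uniqueness is a textbook consequence of the confluence already proved. Suppose $s_1$ and $s_2$ are both in $\beta\bot_\Uc$-normal form with $t \infred_{\beta\bot_\Uc} s_1$ and $t \infred_{\beta\bot_\Uc} s_2$. By Theorem~\ref{thm_bohm_cr} there is a common reduct $t_3$ with $s_1 \infred_{\beta\bot_\Uc} t_3$ and $s_2 \infred_{\beta\bot_\Uc} t_3$, and the terminal property then yields $s_1 \equiv t_3 \equiv s_2$.

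The only real work is the terminal lemma, and the single point to watch is that normality must be exploited \emph{at every depth}; this is precisely what the coinduction supplies, because descending into an immediate subterm preserves normality. A variant avoiding confluence is also available: from $t \infred_{\beta\bot_\Uc} s_i$ and Lemma~\ref{lem_N_normalising} pick $s_i'$ with $s_i \leadsto_{N_\Uc} s_i'$, apply Theorem~\ref{thm_N_prepend} to get $t \leadsto_{N_\Uc} s_i'$, use the analogous terminal property of $\leadsto_{N_\Uc}$ (proved by the same coinduction, noting that a normal form $s \not\equiv \bot$ lies outside $\Uc$ and satisfies $\crnf(s) \equiv s$) to conclude $s_i \equiv s_i'$, and finish with Lemma~\ref{lem_N_confluent}. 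Either route hinges on the same short coinductive observation that normal forms are terminal.
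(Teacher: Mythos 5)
Your proposal is correct and follows essentially the same route as the paper: existence via Lemmas~\ref{lem_N_normalising}, \ref{lem_N_nf} and~\ref{lem_N_to_bohm}, and uniqueness from confluence (Theorem~\ref{thm_bohm_cr}). The only difference is that you spell out the coinductive ``normal forms are terminal under $\infred_{\beta\bot_\Uc}$'' lemma, which the paper leaves implicit in the phrase ``the uniqueness of~$s$ follows from Theorem~\ref{thm_bohm_cr}''; your argument for that lemma is sound.
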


\begin{proof}
  By Lemma~\ref{lem_N_normalising} there is~$s$ with
  $t \leadsto_{N_\Uc} s$. By Lemma~\ref{lem_N_nf}, $s$ is in
  $\beta\bot_\Uc$-normal form. By Lemma~\ref{lem_N_to_bohm} we have
  $t \infred_{\beta\bot_\Uc} s$. The uniqueness of~$s$ follows from
  Theorem~\ref{thm_bohm_cr}.
\end{proof}

\subsection{Root-active terms are strongly meaningless}

\begin{defi}\label{def_succ}
  We define the relation~$\succ_x$ coinductively
  \[
  \infer={t \succ_x xu_1\ldots u_n}{u_1,\ldots,u_n \in \Lambda^\infty}\quad\quad
  \infer={a \succ_x a}{}\quad\quad
  \infer={ts \succ_x t's'}{t \succ_x t' & s \succ_x s'}\quad\quad
  \infer={\lambda y . t \succ_x \lambda y . t'}{t \succ_x t' & x \ne y}
  \]
  In other words, $s \succ_x s'$ iff~$s'$ may be obtained from~$s$ by
  changing some arbitrary subterms in~$s$ into some terms having the
  form $xu_1\ldots u_n$.
\end{defi}

\begin{lem}\label{lem_subst_succ}
  If $t \succ_x t'$, $s \succ_x s'$ and $x \ne y$ then $t[s/y] \succ_x
  t'[s'/y]$.
\end{lem}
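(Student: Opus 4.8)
The plan is to prove this by coinduction with case analysis on the derivation of $t \succ_x t'$, following exactly the pattern of the substitution lemmas proved earlier by coinduction (Lemma~\ref{lem_sim_subst_2} and Lemma~\ref{lem_subst_omega_1}). There are four cases to consider, one for each rule defining~$\succ_x$, and I expect three of them to be entirely routine.

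First I would handle the case where the derivation ends with the first rule, so that $t' \equiv x u_1 \ldots u_n$ for some $u_1,\ldots,u_n$ while $t$ is arbitrary. Here the key observation is that, because $x \ne y$, substitution leaves the head occurrence of~$x$ in place: $t'[s'/y] \equiv x\,u_1[s'/y] \cdots u_n[s'/y]$ is again a term of the form $x w_1 \ldots w_n$. Hence $t[s/y] \succ_x t'[s'/y]$ follows directly from the first rule, \emph{without} any appeal to the coinductive hypothesis. This is the case that explains why the hypothesis $x \ne y$ is needed, and it is the only place where a genuine idea (rather than routine bookkeeping) enters.

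Next, if the derivation ends with the atom rule, then $t \equiv t' \equiv a$, and I would split on whether $a \equiv y$. If $a \equiv y$, then $t[s/y] \equiv s \succ_x s' \equiv t'[s'/y]$ directly by the assumption $s \succ_x s'$. If $a \not\equiv y$, then $a[s/y] \equiv a \equiv a[s'/y]$, so the atom rule gives $a \succ_x a$; note this subcase covers $a \equiv x$ as well, again using $x \ne y$. The two remaining cases are compatible. If $t \equiv t_1 t_2$ and $t' \equiv t_1' t_2'$ with $t_i \succ_x t_i'$, the coinductive hypothesis gives $t_i[s/y] \succ_x t_i'[s'/y]$, and the application rule yields $t[s/y] \equiv t_1[s/y]\,t_2[s/y] \succ_x t_1'[s'/y]\,t_2'[s'/y] \equiv t'[s'/y]$.

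The last case is the abstraction $t \equiv \lambda z . t_1$, $t' \equiv \lambda z . t_1'$ with $t_1 \succ_x t_1'$ and $x \ne z$. By Barendregt's variable convention I would assume the bound variable~$z$ to be fresh, i.e.\ $z \ne y$ and $z \notin \FV(s) \cup \FV(s')$, so that both substitutions commute with the abstraction; the coinductive hypothesis then gives $t_1[s/y] \succ_x t_1'[s'/y]$, and since $x \ne z$ the abstraction rule closes the case. The only point demanding care is precisely this freshness bookkeeping for the bound variable (trivial in the underlying de Bruijn representation but worth noting in the human-readable presentation); beyond it, the proof carries no real difficulty once one notices that an $x$-headed term remains $x$-headed under substitution of a variable other than~$x$.
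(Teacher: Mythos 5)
Your proof is correct and follows exactly the paper's approach: the paper's own proof is simply ``by coinduction, analysing $t \succ_x t'$'', and your case analysis (in particular the observation that the head occurrence of~$x$ in $xu_1\ldots u_n$ survives substitution of~$y\ne x$, so the first rule closes that case without a corecursive call) is the intended elaboration.
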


\begin{proof}
  By coinduction, analysing $t \succ_x t'$.
\end{proof}

\begin{lem}\label{lem_beta_succ}
  If $t \succ_x s$ and $t \to_\beta t'$ then there is~$s'$ with $t'
  \succ_x s'$ and $s \to_\beta^\equiv s'$.
\end{lem}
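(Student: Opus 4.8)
The plan is to argue by induction on the derivation of $t \to_\beta t'$, since $\to_\beta$ is the inductively defined compatible closure of $R_\beta$, and at each node to perform a case analysis on how $t \succ_x s$ arose from the coinductive rules for $\succ_x$. First I would record one observation that settles many cases uniformly: if $t \succ_x s$ comes from the first rule, so that $s \equiv x u_1 \ldots u_n$, then I simply take $s' \equiv s$; the step $s \to_\beta^\equiv s'$ holds reflexively (zero steps), and $t' \succ_x s' \equiv x u_1 \ldots u_n$ holds again by the first rule, which applies to the arbitrary term $t'$. Hence in every remaining case I may assume $t \succ_x s$ was not produced by the first rule.

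The compatible-closure steps then follow mechanically from the inductive hypothesis. For $t \equiv t_1 t_2 \to_\beta t_1' t_2$ with $t_1 \to_\beta t_1'$, the excluded first rule forces $s \equiv s_1 s_2$ with $t_1 \succ_x s_1$ and $t_2 \succ_x s_2$; applying the inductive hypothesis to $t_1 \succ_x s_1$ gives $s_1'$ with $t_1' \succ_x s_1'$ and $s_1 \to_\beta^\equiv s_1'$, and I take $s' \equiv s_1' s_2$, rebuilding $t' \succ_x s'$ by the application rule and $s \to_\beta^\equiv s'$ by closure of $\to_\beta$ under applicative contexts. The right-application case and the abstraction case $t \equiv \lambda y . p \to_\beta \lambda y . p'$ are identical, the latter carrying along the side condition $x \ne y$ from the abstraction rule of $\succ_x$.

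The crux is the base case, where $t \equiv (\lambda y . p) q$ is the contracted redex and $t' \equiv p[q/y]$. Since $t$ is an application and the first rule is excluded, $t \succ_x s$ comes from the application rule: $s \equiv s_1 s_2$ with $\lambda y . p \succ_x s_1$ and $q \succ_x s_2$. I then analyse $\lambda y . p \succ_x s_1$. If it is the first rule, so $s_1 \equiv x u_1 \ldots u_n$, then $s \equiv (x u_1 \ldots u_n) s_2$ is itself $x$-headed and $s' \equiv s$ works as in the uniform case above. Otherwise it is the abstraction rule, giving $s_1 \equiv \lambda y . p'$ with $p \succ_x p'$ and $x \ne y$; then $s \equiv (\lambda y . p') s_2$ is a $\beta$-redex, which I contract to $s' \equiv p'[s_2/y]$. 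The only step here beyond bookkeeping is verifying $t' \succ_x s'$, i.e.\ $p[q/y] \succ_x p'[s_2/y]$, which is exactly the role of Lemma~\ref{lem_subst_succ}: from $p \succ_x p'$, $q \succ_x s_2$ and $x \ne y$ it delivers $p[q/y] \succ_x p'[s_2/y]$. I expect this substitution step, together with correctly spotting the absorption of the redex into an $x$-headed term in the first-rule subcase, to be the only subtle points; everything else is a routine simulation.
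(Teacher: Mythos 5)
Your proof is correct and follows essentially the same route as the paper's: induction on $t \to_\beta t'$ with a case split on how $t \succ_x s$ was derived, absorbing the redex into the $x$-headed term when the first rule produced $s_1$, and otherwise contracting $s \equiv (\lambda y . p') s_2$ and invoking Lemma~\ref{lem_subst_succ} to get $p[q/y] \succ_x p'[s_2/y]$. The paper's proof is just a terser version of the same argument, treating only the redex case explicitly.
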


\begin{proof}
  Induction on~$t \to_\beta t'$. The interesting case is when $t
  \equiv (\lambda y . t_1) t_2$, $t' \equiv t_1[t_2/y]$, $s \equiv
  s_1s_2$, $\lambda y . t_1 \succ_x s_1$ and $t_2 \succ_x s_2$. If
  $s_1 \equiv x u_1 \ldots u_m$ then $t' \succ_x x u_1 \ldots u_m s_2$
  and we may take $s' \equiv s$. Otherwise $s_1 \equiv \lambda y
  . s_1'$ with $t_1 \succ_x s_1'$ (by the variable convention $x \ne
  y$). Then $t' \equiv t_1[t_2/y] \succ_x s_1'[s_2/y]$ by
  Lemma~\ref{lem_subst_succ}. We may thus take $s' \equiv
  s_1'[s_2/y]$.
\end{proof}

\begin{lem}\label{lem_beta_succ_rev}
  If $t \succ_x s$ and $s \to_\beta s'$ then there is~$t'$ with $t'
  \succ_x s'$ and $t \to_\beta^\equiv t'$.
\end{lem}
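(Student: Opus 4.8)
The plan is to prove this by induction on the derivation of $s \to_\beta s'$, mirroring the proof of Lemma~\ref{lem_beta_succ} but inspecting $t \succ_x s$ at each step to recover the shape of~$t$. The one genuinely new phenomenon --- and what I expect to be the main obstacle --- is the first rule of~$\succ_x$, which allows $t \succ_x s$ to hold for an \emph{arbitrary}~$t$ whenever $s \equiv x u_1 \ldots u_n$. In that situation I cannot read off any structure of~$t$ from~$s$, so I would dispose of it first, separately from the case analysis on the reduction.

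Concretely, I would begin by observing that if $t \succ_x s$ is derived by the first rule, then $s \equiv x u_1 \ldots u_n$ has the variable~$x$ as its head. A $\beta$-step can only contract a redex strictly inside one of the arguments~$u_i$ (the head~$x$ is not part of any redex, and the number of top-level arguments is unchanged), so $s' \equiv x v_1 \ldots v_n$ is again of this form. Hence $t \succ_x s'$ holds by the first rule for the \emph{same}~$t$, and I may take $t' \equiv t$, using the reflexive option of $\to_\beta^\equiv$.

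In the remaining cases $t \succ_x s$ is derived by one of the other three rules, so the shape of~$t$ is determined by that of~$s$, and I proceed by case analysis on $s \to_\beta s'$. The atom rule forces $s \equiv a$, which has no redex, so that case is vacuous. When the contracted redex is at the root, $s \equiv (\lambda y . s_1) s_2$ and $s' \equiv s_1[s_2/y]$; since this~$s$ is an application not headed by~$x$, it arises from the application rule, giving $t \equiv t_1 t_2$ with $t_1 \succ_x \lambda y . s_1$ and $t_2 \succ_x s_2$. As $\lambda y . s_1$ is an abstraction (again not of the form $x u_1 \ldots u_n$), the premise $t_1 \succ_x \lambda y . s_1$ must come from the abstraction rule, so $t_1 \equiv \lambda y . t_1''$ with $t_1'' \succ_x s_1$ and $x \ne y$. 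Then $t \to_\beta t_1''[t_2/y]$, and Lemma~\ref{lem_subst_succ} yields $t_1''[t_2/y] \succ_x s_1[s_2/y]$, so I take $t' \equiv t_1''[t_2/y]$. The compatible-closure cases (redex inside $s_a$ in $s \equiv s_a s_b$, inside $s_b$, or under a~$\lambda$) follow directly from the induction hypothesis: I recover the matching subterm of~$t$ from the appropriate $\succ_x$-rule, apply the induction hypothesis to the reducing subterm, and reassemble using the same $\succ_x$-rule together with the compatible closure of~$\to_\beta$.

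I expect the first-rule case to be the only real subtlety; once it is isolated, the rest is a routine mirror of Lemma~\ref{lem_beta_succ}. The point to be careful about is that in every non-trivial case I must \emph{rule out} the first rule for the relevant $\succ_x$-premise --- which I can always do because the right-hand term there is either an abstraction or an application whose head is not~$x$ --- so that the structure of~$t$ is forced and the induction hypothesis applies.
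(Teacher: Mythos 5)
Your proposal is correct and takes essentially the same route as the paper, which proves this lemma by induction on $s \to_\beta s'$ using Lemma~\ref{lem_subst_succ} for the redex case; your additional care in isolating the first $\succ_x$-rule (where $s \equiv x u_1 \ldots u_n$ forces the redex into some $u_i$, so $t' \equiv t$ works) correctly fills in the detail the paper leaves implicit.
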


\begin{proof}
  Induction on~$s \to_\beta s'$, using Lemma~\ref{lem_subst_succ} for
  the redex case.
\end{proof}

\begin{lem}\label{lem_succ_rnf}
  If $t \succ_x t'$ and~$t$ is in rnf, then so is~$t'$.
\end{lem}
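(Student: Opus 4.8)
The plan is to prove the statement by a straightforward case analysis on the coinductive rule (Definition~\ref{def_succ}) that generates $t \succ_x t'$; no coinduction is needed, since being in rnf is a property of the root of a term only. There are four rules. If $t \succ_x t'$ comes from $a \succ_x a$, then $t' \equiv t \equiv a$ is in rnf because $t$ is. If it comes from $\lambda y . t_0 \succ_x \lambda y . t_0'$, then $t' \equiv \lambda y . t_0'$ is an abstraction, hence automatically in rnf. So the two genuine cases are the first rule, which rewrites the root into a term of the form $x u_1 \ldots u_n$, and the application rule.

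For the first rule, $t' \equiv x u_1 \ldots u_n$ regardless of the shape of $t$, so here I would not even use that $t$ is in rnf. The key auxiliary fact is that every term of the form $x u_1 \ldots u_n$ is in rnf. This reduces to showing that a term whose head is a variable never $\reduces_\beta$-reduces to an abstraction: a single $\to_\beta$ step on $x w_1 \ldots w_m$ must contract a redex inside some $w_i$, since the spine $(\cdots(x w_1)\cdots) w_j$ contains no redex (its left component is never an abstraction); thus the head variable is preserved, and by induction on the length of the reduction the head stays a variable, so an abstraction is never reached. Hence $x u_1 \ldots u_n$ is in rnf: for $n = 0$ it is the variable $x$, an atom distinct from $\bot$, and for $n \ge 1$ the left part $x u_1 \ldots u_{n-1}$ of $(x u_1 \ldots u_{n-1}) u_n$ does not reduce to an abstraction.

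The main obstacle is the application rule: $t \equiv t_1 t_2$, $t' \equiv t_1' t_2'$ with $t_1 \succ_x t_1'$ and $t_2 \succ_x t_2'$. Since $t$ is in rnf, there is no $s$ with $t_1 \reduces_\beta \lambda y . s$, and I must establish the same for $t_1'$. Suppose towards a contradiction that $t_1' \reduces_\beta \lambda y . s$. Applying Lemma~\ref{lem_beta_succ_rev} step by step along this finite reduction, starting from $t_1 \succ_x t_1'$, produces a term $p$ with $t_1 \reduces_\beta p$ and $p \succ_x \lambda y . s$. Now I analyse which rule yields $p \succ_x \lambda y . s$: the first rule would force $\lambda y . s \equiv x u_1 \ldots u_n$, which is impossible since an abstraction and a variable-headed term have different root constructors; hence it must be the abstraction rule, giving $p \equiv \lambda y . p'$. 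But then $t_1 \reduces_\beta \lambda y . p'$, contradicting that $t$ is in rnf. Therefore $t_1'$ does not reduce to an abstraction, and $t' \equiv t_1' t_2'$ is in rnf, which completes the proof.
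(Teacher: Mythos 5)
Your proof is correct and follows essentially the same route as the paper's: the only substantive case is the application rule, which both you and the paper handle by pulling the reduction $t_1' \reduces_\beta \lambda y . s$ back along $\succ_x$ via Lemma~\ref{lem_beta_succ_rev} to contradict $t$ being in rnf. Your explicit check that variable-headed terms $x u_1 \ldots u_n$ are in rnf covers a detail that the paper's contrapositive formulation leaves implicit (there it is needed to rule out the first $\succ_x$-rule as the source of $t_1't_2'$).
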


\begin{proof}
  Assume~$t'$ is not in rnf. Then $t' \equiv \bot$ or $t' \equiv
  t_1't_2'$ with~$t_1'$ reducing to an abstraction. If $t' \equiv
  \bot$ then $t \equiv \bot$, so assume $t' \equiv t_1't_2'$ and $t_1'
  \reduces_\beta \lambda y . u'$ with $x \ne y$. Then $t \equiv
  t_1t_2$ with $t_i \succ_x t_i'$. By Lemma~\ref{lem_beta_succ_rev}
  there is~$u$ with $t_1 \reduces_\beta \lambda y . u$ and $u \succ_x
  u'$. But this implies that $t \equiv t_1t_2$ is not in
  rnf. Contradiction.
\end{proof}

\begin{lem}\label{lem_subst_active}
  If $t_1, t_2 \in \Lambda^\infty$ and~$t_1$ has no rnf, then neither
  does $t_1[t_2/x]$.
\end{lem}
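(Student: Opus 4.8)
The plan is to prove the contrapositive: if $t_1[t_2/x]$ has rnf, then $t_1$ has rnf. The conceptual heart of the argument is the observation that
$t_1[t_2/x] \succ_x t_1$.
Indeed, $t_1$ is obtained from $t_1[t_2/x]$ by replacing each subterm~$t_2$ that arose from substituting for~$x$ by the variable~$x$ itself, which is an $x$-headed term with no arguments; hence this is an instance of the relation from Definition~\ref{def_succ}. I would first verify $t_1[t_2/x] \succ_x t_1$ by a routine coinduction on the structure of~$t_1$. The case $t_1 \equiv x$ uses the first rule of~$\succ_x$ (with $n=0$) to get $t_2 \succ_x x$; the case of an atom $a \not\equiv x$ gives $a \succ_x a$; and the application and abstraction cases (the latter using the variable convention $x \notin \FV(t_2)$) follow from the congruence rules together with the coinductive hypothesis.

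Next I would convert the infinitary hypothesis into a finite reduction. Assuming $t_1[t_2/x]$ has rnf means $t_1[t_2/x] \infred_\beta r$ with~$r$ in rnf, so by Lemma~\ref{lem_beta_to_rnf} there is an rnf~$s'$ with $t_1[t_2/x] \reduces_w s'$; since $\reduces_w$ is a special case of $\reduces_\beta$, this is a \emph{finite} $\beta$-reduction $t_1[t_2/x] \reduces_\beta s'$ to an rnf. Finiteness is essential, because the tool relating $\succ_x$ to reduction, Lemma~\ref{lem_beta_succ}, acts one $\to_\beta$-step at a time.

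I would then lift this finite reduction across~$\succ_x$. Starting from $t_1[t_2/x] \succ_x t_1$ and iterating Lemma~\ref{lem_beta_succ} along the steps of $t_1[t_2/x] \reduces_\beta s'$ (reducing the left-hand side of~$\succ_x$ and tracking the induced $\to_\beta^\equiv$-step on the right-hand side), I obtain a term~$v$ with $t_1 \reduces_\beta v$ and $s' \succ_x v$. Since~$s'$ is in rnf, Lemma~\ref{lem_succ_rnf} gives that~$v$ is in rnf as well. As $t_1 \reduces_\beta v$ entails $t_1 \infred_\beta v$, the term~$t_1$ has rnf, completing the contrapositive.

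I expect the main obstacle to be purely conceptual: recognising that the correct bridge between $t_1$ and $t_1[t_2/x]$ is the relation $t_1[t_2/x] \succ_x t_1$ (rather than any direct reduction between the two terms), and remembering that one must first pass to a finite weak head reduction via Lemma~\ref{lem_beta_to_rnf} before the single-step Lemma~\ref{lem_beta_succ} is applicable. Once these two points are in place, the supporting steps --- the coinduction establishing the key observation and the finite iteration of Lemma~\ref{lem_beta_succ} --- are routine.
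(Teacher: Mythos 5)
Your proof is correct and follows essentially the same route as the paper's: both rest on the observation $t_1[t_2/x] \succ_x t_1$, pass to a finite reduction to rnf via Lemma~\ref{lem_beta_to_rnf}, transport it along $\succ_x$ by (iterating) Lemma~\ref{lem_beta_succ}, and conclude with Lemma~\ref{lem_succ_rnf}. The only difference is presentational: the paper argues by contradiction where you argue by contraposition, and it leaves the coinductive verification of $t_1[t_2/x] \succ_x t_1$ and the iteration of the one-step lemma implicit.
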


\begin{proof}
  Assume $t_1[t_2/x]$ has a rnf. Then $t_1[t_2/x] \reduces_\beta s$
  for some~$s$ in rnf, by Lemma~\ref{lem_beta_to_rnf}. By the variable
  convention $t_1[t_2/x] \succ_x t_1$. Hence by
  Lemma~\ref{lem_beta_succ} there is~$s'$ such that $t_1
  \reduces_\beta s'$ and $s \succ_x s'$. Since~$s$ is in rnf, so
  is~$s'$, by Lemma~\ref{lem_succ_rnf}. Thus~$t_1$ has a
  rnf. Contradiction.
\end{proof}

\begin{lem}\label{lem_sim_subst}
  If $t \sim_\Rc t'$ and $s \sim_\Rc s'$ then $t[s/x] \sim_\Rc
  t'[s'/x]$.
\end{lem}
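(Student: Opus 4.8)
The plan is to prove the statement by coinduction with case analysis on the last rule used to derive $t \sim_\Rc t'$, following the same pattern as the proof of Lemma~\ref{lem_sim_subst_2}. Observe that $\sim_\Rc$ is exactly the parallel closure $\sim_\Uc$ instantiated at $\Uc = \Rc$, so one might hope to simply cite Lemma~\ref{lem_sim_subst_2}. However, that lemma is stated for a set $\Uc$ of \emph{meaningless} terms, and at this point in the development we have not yet verified that $\Rc$ is meaningless (this is the very goal of the present subsection). The crucial observation is that the proof of Lemma~\ref{lem_sim_subst_2} uses only the substitution axiom, and for $\Rc$ this axiom is supplied by Lemma~\ref{lem_subst_active}. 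Hence I would re-run the coinductive argument, invoking Lemma~\ref{lem_subst_active} in place of the abstract substitution axiom.

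Concretely, I would analyse the four rules defining the parallel closure. In the base case $t, t' \in \Rc$, I would apply Lemma~\ref{lem_subst_active}: since $t$ and $t'$ have no rnf, neither do $t[s/x]$ nor $t'[s'/x]$, so both substitution instances lie in $\Rc$ and $t[s/x] \sim_\Rc t'[s'/x]$ holds by the base rule again. In the atom case $t \equiv t' \equiv a$ I would split on whether $a \equiv x$: if so, the two sides become $s$ and $s'$, which are related by hypothesis; if not, both sides remain $a$, and $a \sim_\Rc a$ by the atom rule.

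The application and abstraction cases are routine appeals to the coinductive hypothesis. If $t \equiv t_1 t_2$ and $t' \equiv t_1' t_2'$ with $t_i \sim_\Rc t_i'$, the coinductive hypothesis gives $t_i[s/x] \sim_\Rc t_i'[s'/x]$, and pushing the substitution through the application yields $t[s/x] \sim_\Rc t'[s'/x]$ by the application rule. The abstraction case is identical, using Barendregt's variable convention so that the bound variable $y$ differs from $x$ and does not occur free in $s$ or $s'$, which lets the substitution commute with the binder.

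I do not expect any genuine obstacle: the only nontrivial ingredient is closure of $\Rc$ under substitution, which is precisely Lemma~\ref{lem_subst_active}. The one subtlety worth flagging is that we must reproduce the coinductive argument rather than cite Lemma~\ref{lem_sim_subst_2} directly, exactly because the latter presupposes that $\Rc$ is meaningless, a fact only being assembled piecewise in this subsection.
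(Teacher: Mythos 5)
Your proposal is correct and matches the paper's proof, which is exactly ``by coinduction, using Lemma~\ref{lem_subst_active}'' in place of the abstract substitution axiom. The point you flag --- that Lemma~\ref{lem_sim_subst_2} cannot be cited directly because $\Rc$ has not yet been shown meaningless --- is precisely why the paper restates the lemma here rather than instantiating the earlier one.
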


\begin{proof}
  By coinduction, using Lemma~\ref{lem_subst_active}.
\end{proof}

\begin{lem}\label{lem_beta_sim}
  If $t \to_\beta t'$ and $t \sim_\Rc s$ then there is~$s'$ with $s
  \to_\beta^\equiv s'$ and $t' \sim_\Rc s'$.
\end{lem}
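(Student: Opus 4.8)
The plan is to follow the proof of Lemma~\ref{lem_sim_fin_beta_2} closely, proceeding by induction on the derivation of $t \to_\beta t'$ while analysing which clause of the coinductive definition of $t \sim_\Rc s$ witnesses the pair $\langle t, s\rangle$. The difference is that the roles played there by the closure, indiscernibility and overlap axioms of an abstract meaningless set are now taken over by concrete facts about root-active terms: closure of $\Rc$ under $\to_\beta$ (a term with no rnf cannot $\beta$-reduce to a term with an rnf, since otherwise it would itself have an rnf by Lemma~\ref{lem_beta_append}), the substitution compatibility Lemma~\ref{lem_sim_subst}, and the elementary observation that every abstraction is in rnf and hence never lies in~$\Rc$.

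First I would dispose of the base clause. If $t \sim_\Rc s$ holds because $t, s \in \Rc$, then since $t \to_\beta t'$ we obtain $t' \in \Rc$ by closure of $\Rc$ under $\beta$-reduction, whence $t' \sim_\Rc s$ by the same base clause, and we may take $s' \equiv s$ (using the reflexive closure $s \to_\beta^\equiv s$). Having excluded this clause, the pair $\langle t, s\rangle$ is witnessed by one of the structural clauses, so $t$ and $s$ share their top-level constructor. Carrying out the induction on $t \to_\beta t'$, every case in which the contracted redex lies strictly inside $t$ is immediate from the induction hypothesis applied to the matching subterm of $s$ and then reassembling by the appropriate structural clause.

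The only substantial case is when $t$ is itself the contracted redex, i.e.\ $t \equiv (\lambda x . t_1) t_2$ and $t' \equiv t_1[t_2/x]$. Since the structural application clause applies, $s \equiv s_1 s_2$ with $\lambda x . t_1 \sim_\Rc s_1$ and $t_2 \sim_\Rc s_2$. Here root-activeness does the work that overlap did before: because $\lambda x . t_1$ is an abstraction it is in rnf, so $\lambda x . t_1 \notin \Rc$, and therefore the base clause cannot be the one witnessing $\lambda x . t_1 \sim_\Rc s_1$. Hence this pair is forced through the abstraction clause, giving $s_1 \equiv \lambda x . s_1'$ with $t_1 \sim_\Rc s_1'$. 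Then $s \equiv (\lambda x . s_1') s_2 \to_\beta s_1'[s_2/x]$, and by Lemma~\ref{lem_sim_subst} applied to $t_1 \sim_\Rc s_1'$ and $t_2 \sim_\Rc s_2$ we get $t' \equiv t_1[t_2/x] \sim_\Rc s_1'[s_2/x]$, so $s' \equiv s_1'[s_2/x]$ works.

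I expect the main obstacle to be essentially bookkeeping: ensuring that in each case exactly one clause of the coinductive definition of $\sim_\Rc$ is invoked, and that the base clause $t, s \in \Rc$ is peeled off uniformly at the start, just as in Lemma~\ref{lem_sim_fin_beta_2}. The one genuinely new ingredient — which also makes this proof slightly shorter than its meaningless-set analogue — is the use of ``abstractions are never root-active'' in place of the overlap axiom; once that is available the redex case collapses at once via Lemma~\ref{lem_sim_subst}.
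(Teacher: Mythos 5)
Your proposal is correct and matches the paper's proof: the paper also proceeds by induction on $t \to_\beta t'$, dispatches the clause $t,s\in\Rc$ via closure of $\Rc$ under $\beta$-reduction, and in the redex case uses that abstractions are in rnf (hence not root-active) to force $s \equiv (\lambda x . s_1)s_2$ and concludes with Lemma~\ref{lem_sim_subst}. The extra detail you give about why the abstraction clause is forced is exactly what the paper leaves implicit.
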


\begin{proof}
  Induction on $t \to_\beta t'$. There are two interesting cases.
  \begin{itemize}
  \item $t, s \in \Rc$, i.e., they have no rnf. Then also $t' \in
    \Rc$, so we may take $s' \equiv s$.
  \item $t \equiv (\lambda x . t_1) t_2$, $t' \equiv t_1[t_2/x]$, $s
    \equiv (\lambda x . s_1) s_2$ and $t_i \sim_\Rc s_i$. Then $t'
    \sim_\Rc s_1[s_2/x]$ by Lemma~\ref{lem_sim_subst}. Hence we may
    take $s' \equiv s_1[s_2/x]$.\qedhere
  \end{itemize}
\end{proof}

\begin{lem}\label{lem_sim_rnf}
  If $t$ is in rnf and $t \sim_\Rc s$, then so is~$s$.
\end{lem}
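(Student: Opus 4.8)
The plan is to argue by a direct case analysis on the root structure of $t$, \emph{not} by coinduction, since being in rnf is a root-level property and does not require recursing into the infinite subtrees. First I would record the basic observation that if $t$ is in rnf then $t \infred_\beta t$ with $t$ in rnf, so $t$ has an rnf and hence $t \notin \Rc$. This is the key leverage: in the derivation of $t \sim_\Rc s$ the rule coming from the base relation $\Rc \times \Rc$ cannot fire at the root, because it would force $t \in \Rc$. Therefore one of the structural rules of the parallel closure applies at the root, and I would then split on the three forms of $t$ permitted by the definition of rnf.

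The two easy cases are atoms and abstractions. If $t \equiv a$ with $a \not\equiv \bot$, then the only rule that can match (apart from the excluded $\Rc \times \Rc$ one) is the atom rule, giving $s \equiv a$, which is in rnf. If $t \equiv \lambda x . t'$, then the abstraction rule forces $s \equiv \lambda x . s'$, which is in rnf irrespective of $s'$.

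The real work is the application case: $t \equiv t_1 t_2$ with no $r$ such that $t_1 \reduces_\beta \lambda x . r$, and the application rule forces $s \equiv s_1 s_2$ with $t_1 \sim_\Rc s_1$ and $t_2 \sim_\Rc s_2$. To conclude that $s$ is in rnf it suffices to show $s_1$ does not reduce to an abstraction. Suppose for contradiction that $s_1 \reduces_\beta \lambda x . r$. Since $\sim_\Rc$ is symmetric (it is the parallel closure of the symmetric relation $\Rc \times \Rc$, so symmetry follows by an easy coinduction), we have $s_1 \sim_\Rc t_1$, and I would transport the finite reduction $s_1 \reduces_\beta \lambda x . r$ across this relation by iterating Lemma~\ref{lem_beta_sim} once per $\to_\beta$ step (each step producing a $\to_\beta^\equiv$ move on the $t_1$-side, whose composite is a $\reduces_\beta$). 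This yields some $u$ with $t_1 \reduces_\beta u$ and $\lambda x . r \sim_\Rc u$.

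Finally I would case-split on $\lambda x . r \sim_\Rc u$: the $\Rc \times \Rc$ rule is impossible, because an abstraction is in rnf and hence not root-active, so the abstraction rule must apply and $u \equiv \lambda x . u'$. Thus $t_1 \reduces_\beta \lambda x . u'$, contradicting that $t \equiv t_1 t_2$ is in rnf. Hence $s_1$ has no abstraction reduct and $s \equiv s_1 s_2$ is in rnf. The main obstacle is precisely this application case: transporting the property ``reduces to an abstraction'' along $\sim_\Rc$. The crux is combining the one-step commutation of Lemma~\ref{lem_beta_sim} (iterated along the finite reduction) with the symmetry of $\sim_\Rc$, and using that abstractions are non-root-active to discard the $\Rc \times \Rc$ branch; everything else is routine bookkeeping at the root.
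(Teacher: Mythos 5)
Your proof is correct and follows essentially the same route as the paper's: a root-level case analysis on the form of $t$ in rnf, with the application case handled by transporting $s_1 \reduces_\beta \lambda x.r$ back to $t_1$ via (iterated, symmetrised) Lemma~\ref{lem_beta_sim} and ruling out the $\Rc\times\Rc$ branch because abstractions are not root-active. You merely make explicit some steps the paper leaves implicit (symmetry of $\sim_\Rc$, iteration of the one-step lemma, and the exclusion of the base rule at the root).
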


\begin{proof}
  Because~$t$ is in rnf, there are three cases.
  \begin{itemize}
  \item $t \equiv a$ with $a \not\equiv \bot$. Then $s \equiv t$, so
    it is in rnf.
  \item $t \equiv \lambda x . t'$. Then $s \equiv \lambda x . s'$,
    so~$s$ is in rnf.
  \item $t \equiv t_1t_2$ and~$t_1$ does not $\beta$-reduce to an
    abstraction. Then $s \equiv s_1s_2$ with $t_i \sim_\Rc
    s_i$. Assume $s_1 \reduces_\beta \lambda x . s'$. Then by
    Lemma~\ref{lem_beta_sim} there is~$t'$ with $t_1 \reduces_\beta t'
    \sim_\Rc \lambda x . s'$. But then~$t'$ must be an
    abstraction. Contradiction.\qedhere
  \end{itemize}
\end{proof}

\begin{cor}\label{cor_sim_rnf}
  If $t$ has a rnf and $t \sim_\Rc s$, then so does~$s$.
\end{cor}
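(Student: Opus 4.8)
The plan is to reduce this statement to the finitary level, where the single-step simulation lemma \ref{lem_beta_sim} and the preservation lemma \ref{lem_sim_rnf} can be combined directly. First I would unfold the hypothesis that $t$ has a rnf: by definition there is some $t'$ in rnf with $t \infred_\beta t'$. The crucial preparatory move is to replace this \emph{infinitary} reduction by a \emph{finitary} one, since Lemma~\ref{lem_beta_sim} only handles a single $\beta$-step. This is exactly what Lemma~\ref{lem_beta_to_rnf} delivers: it yields an rnf~$r$ with $t \reduces_w r$, and since $\reduces_w \subseteq \reduces_\beta$ we obtain $t \reduces_\beta r$ with $r$ in rnf.

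Next I would transport this finite reduction across $\sim_\Rc$. Proceeding by induction on the length of $t \reduces_\beta r$ and applying Lemma~\ref{lem_beta_sim} at each step --- each step $t \to_\beta \cdot$ being matched by a possibly empty step $s \to_\beta^\equiv \cdot$ while preserving $\sim_\Rc$ --- I obtain a term~$s'$ with $s \reduces_\beta s'$ and $r \sim_\Rc s'$. Since $r$ is in rnf, Lemma~\ref{lem_sim_rnf} gives that $s'$ is in rnf as well. Finally, $s \reduces_\beta s'$ implies $s \infred_\beta s'$, so $s$ reaches an rnf and therefore has a rnf, as required.

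The only delicate point is that very first step: one must eliminate the infinitary reduction hidden in the phrase ``$t$ has a rnf'' before the inductive simulation can even begin, because Lemma~\ref{lem_beta_sim} is stated for $\to_\beta$ and its iterated finitary closure, not for $\infred_\beta$. Once Lemma~\ref{lem_beta_to_rnf} has supplied a finite $\reduces_w$ (hence $\reduces_\beta$) reduction to rnf, the remainder is a routine finitary commutation of $\beta$-reduction with $\sim_\Rc$ followed by the already-proved preservation-of-rnf property, so I do not anticipate any genuine obstacle. Note that symmetry of $\sim_\Rc$ is not needed, as the hypothesis already supplies $t \sim_\Rc s$ in the exact orientation required by Lemma~\ref{lem_beta_sim}.
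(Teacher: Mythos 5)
Your proof is correct and follows essentially the same route as the paper, which simply cites Lemma~\ref{lem_beta_sim} and Lemma~\ref{lem_sim_rnf}; you have merely made explicit the step the paper leaves implicit, namely using Lemma~\ref{lem_beta_to_rnf} to replace the infinitary reduction in ``$t$ has a rnf'' by a finite one before iterating the single-step simulation. (The paper does exactly this in the analogous Lemma~\ref{lem_subst_active}, so your elaboration matches the intended argument.)
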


\begin{proof}
  Follows from Lemma~\ref{lem_beta_sim} and Lemma~\ref{lem_sim_rnf}.
\end{proof}

\begin{lem}\label{lem_infbeta_rnf}
  If $t \infred_\beta s$ and~$t$ has a rnf, then so does~$s$.
\end{lem}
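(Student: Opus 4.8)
The plan is to push everything down to weak head reduction, where determinism makes the two given reductions out of $t$ commute cleanly. Since $t$ has a rnf, the observation recorded right after Definition~\ref{def_crnf} gives that $\crnf(t)$ is defined, so $t \reduces_w \crnf(t)$ with $\crnf(t)$ in rnf. By Theorem~\ref{thm_polonsky} the hypothesis $t \infred_\beta s$ becomes $t \infred_w s$. Thus I have a \emph{finite} weak head reduction $t \reduces_w \crnf(t)$ and an \emph{infinitary} one $t \infred_w s$ issuing from the same term, and the goal is to move $s$ forward to an rnf.

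The main work is to commute the finite reduction $t \reduces_w \crnf(t)$ past $t \infred_w s$. First I would establish, by induction on the length of a finite weak head reduction $t \reduces_w v$, the auxiliary statement: if $t \reduces_w v$ and $t \infred_w s$, then there is $s'$ with $v \infred_w s'$ and $s \reduces_w s'$. The base case $v \equiv t$ is immediate, and each inductive step applies Lemma~\ref{lem_w_fin_commute} to a single $\to_w$ step, the resulting $s^{(i)} \to_w^\equiv s^{(i+1)}$ clauses composing into one finite $\reduces_w$ on the right. Instantiating with $v \equiv \crnf(t)$ then yields $s'$ with $\crnf(t) \infred_w s'$ and $s \reduces_w s'$.

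To conclude, from $\crnf(t) \infred_w s'$ and Theorem~\ref{thm_polonsky} I obtain $\crnf(t) \infred_\beta s'$; as $\crnf(t)$ is in rnf, Lemma~\ref{lem_rnf_fwd} gives that $s'$ is in rnf. Finally $s \reduces_w s'$ entails $s \reduces_\beta s'$ (every weak head step is a $\beta$-step), hence $s \infred_\beta s'$, so $s$ reaches the rnf $s'$ and therefore has a rnf, as required.

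The step I expect to be the main obstacle is the commutation itself. The delicate point is that the weak head prefix appearing in the derivation of $t \infred_w s$ need not reach all the way to $\crnf(t)$, so I cannot simply invoke determinism to identify the two reductions; the iterated use of Lemma~\ref{lem_w_fin_commute} is exactly what absorbs this possible mismatch, trading the finite reduction $t \reduces_w \crnf(t)$ for a finite weak head reduction out of $s$ while sliding the endpoint of $t \infred_w s$ to sit below $\crnf(t)$.
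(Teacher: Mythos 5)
Your proposal is correct and follows essentially the same route as the paper's proof: convert the hypothesis to weak head reduction via Theorem~\ref{thm_polonsky}, commute the finite weak head reduction to an rnf of $t$ past the infinitary one by iterating Lemma~\ref{lem_w_fin_commute}, and close with Lemma~\ref{lem_rnf_fwd}. The only cosmetic differences are that you name the target $\crnf(t)$ rather than an arbitrary rnf obtained from Lemma~\ref{lem_beta_to_rnf}, and that you spell out the induction on the length of the finite reduction which the paper leaves implicit.
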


\begin{proof}
  Suppose~$t$ has a rnf. Then by Lemma~\ref{lem_beta_to_rnf} there
  is~$t'$ in rnf with $t \reduces_w t'$. By Theorem~\ref{thm_polonsky}
  and Lemma~\ref{lem_w_fin_commute} there is~$r$ with $s \reduces_w r$
  and $t' \infred_\beta r$. Since~$t'$ is in rnf, by
  Lemma~\ref{lem_rnf_fwd} so is~$r$. Hence~$s$ has a rnf~$r$.
\end{proof}

\begin{thm}\label{thm_root_active_meaningless}
  The set~$\Rc$ of root-active terms is a set of strongly meaningless
  terms.
\end{thm}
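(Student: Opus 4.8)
The plan is to verify directly that $\Rc$ satisfies each of the six axioms in the definition of a set of strongly meaningless terms, reading the membership ``$t \in \Rc$'' as ``$t$ has no rnf'' throughout. Most of the axioms will fall out of the lemmas already established in this subsection, typically after passing to the contrapositive so that the \emph{forward} preservation results about rnf can be applied.

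First I would dispose of the structural axioms. Root-activeness is immediate, since $\Rc \subseteq \Rc$. The overlap axiom holds vacuously: any abstraction $\lambda x . s$ is already in rnf, hence has a rnf (namely itself), so $\lambda x . s \notin \Rc$ and the hypothesis is never met. The substitution axiom is exactly Lemma~\ref{lem_subst_active}, which states that if $t$ has no rnf then neither does $t[s/x]$.

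Next come the two axioms phrased in terms of $\infred_\beta$. For closure, suppose $t \in \Rc$ and $t \infred_\beta s$; arguing by contraposition, if $s$ had a rnf $s'$ then $t \infred_\beta s \infred_\beta s'$, and Lemma~\ref{lem_beta_append} would yield $t \infred_\beta s'$ with $s'$ in rnf, contradicting $t \in \Rc$. For expansion, the statement ``$t \in \Rc$ and $s \infred_\beta t$ imply $s \in \Rc$'' is precisely the contrapositive of Lemma~\ref{lem_infbeta_rnf}, which propagates the existence of a rnf forward along $\infred_\beta$.

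The one point requiring a small extra observation is indiscernibility. Here I would first note that $\sim_\Rc$, being the parallel closure of the symmetric relation $\Rc \times \Rc$, is itself symmetric, which is a routine coinduction. Given $t \in \Rc$ and $t \sim_\Rc s$, I would argue by contraposition: if $s$ had a rnf, then from $s \sim_\Rc t$ (by symmetry) together with Corollary~\ref{cor_sim_rnf} it would follow that $t$ has a rnf, contradicting $t \in \Rc$. The only mild obstacle is this need for symmetry, since Corollary~\ref{cor_sim_rnf} is stated only in the forward direction and so, without symmetry, would deliver the conclusion for the wrong endpoint of the $\sim_\Rc$-relation. Once symmetry is in hand, all six axioms are verified and the theorem follows.
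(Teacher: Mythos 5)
Your proposal is correct and follows essentially the same route as the paper: each axiom is discharged by the same lemma (Lemma~\ref{lem_subst_active} for substitution, Lemma~\ref{lem_beta_append} for closure, Lemma~\ref{lem_infbeta_rnf} for expansion, Corollary~\ref{cor_sim_rnf} for indiscernibility, and the observation that abstractions are in rnf for overlap). Your explicit note that indiscernibility additionally needs the (routine) symmetry of $\sim_\Rc$ before Corollary~\ref{cor_sim_rnf} can be applied is a small detail the paper leaves implicit, but it does not change the argument.
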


\begin{proof}
  We check the axioms. The root-activeness axiom is obvious. The
  closure axiom follows from Lemma~\ref{lem_beta_append}. The
  substitution axiom follows from Lemma~\ref{lem_subst_active}. The
  overlap axiom follows from the fact that lambda abstractions are in
  rnf. The indiscernibility axiom follows from
  Corollary~\ref{cor_sim_rnf}. The expansion axiom follows from
  Lemma~\ref{lem_infbeta_rnf}.
\end{proof}

\begin{cor}[Confluence of the $\lambda_{\beta\bot_\Rc}^\infty$-calculus]\label{cor_bohm_ra_cr}~

  If $t \infred_{\beta\bot_\Rc} t_1$ and $t \infred_{\beta\bot_\Rc} t_2$ then
  there exists~$t_3$ such that $t_1 \infred_{\beta\bot_\Rc} t_3$ and $t_2
  \infred_{\beta\bot_\Rc} t_3$.
\end{cor}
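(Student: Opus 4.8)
The plan is to obtain this corollary as a direct instance of Theorem~\ref{thm_bohm_cr}. The entire development of this section, culminating in the confluence theorem, was carried out for an \emph{arbitrary} fixed set~$\Uc$ of strongly meaningless terms. Consequently, to prove confluence of the $\lambda_{\beta\bot_\Rc}^\infty$-calculus it suffices to check that~$\Rc$ is itself a legitimate choice for~$\Uc$, and then specialise $\Uc := \Rc$.

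That verification is exactly the content of Theorem~\ref{thm_root_active_meaningless}, which asserts that the set~$\Rc$ of root-active terms is a set of strongly meaningless terms. In particular~$\Rc$ satisfies all the meaningless-term axioms together with the expansion axiom (the latter via Lemma~\ref{lem_infbeta_rnf}). Since the expansion axiom is precisely the additional hypothesis that Theorem~\ref{thm_bohm_cr} requires beyond the bare meaningless-term axioms, nothing further needs to be established: taking $\Uc = \Rc$ is sound.

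Thus I would simply instantiate Theorem~\ref{thm_bohm_cr} with $\Uc := \Rc$. Given $t \infred_{\beta\bot_\Rc} t_1$ and $t \infred_{\beta\bot_\Rc} t_2$, the theorem directly yields a common reduct~$t_3$ satisfying $t_1 \infred_{\beta\bot_\Rc} t_3$ and $t_2 \infred_{\beta\bot_\Rc} t_3$. There is no genuine obstacle in this corollary: it is a pure specialisation, and all the real work has already been discharged in Theorem~\ref{thm_root_active_meaningless} (checking the axioms for~$\Rc$) and in the chain of lemmas leading to Theorem~\ref{thm_bohm_cr} (the $N_\Uc$-reduction machinery, uniqueness of canonical root normal forms, and Theorem~\ref{thm_N_prepend}).
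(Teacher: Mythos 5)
Your proposal is correct and matches the paper's intent exactly: the corollary is obtained by instantiating Theorem~\ref{thm_bohm_cr} with $\Uc := \Rc$, which is licensed by Theorem~\ref{thm_root_active_meaningless} showing that~$\Rc$ is a set of strongly meaningless terms. The paper gives no further argument for this corollary, so there is nothing to add.
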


\begin{cor}[Normalisation of the
  $\lambda_{\beta\bot_\Rc}^\infty$-calculus]\label{cor_bohm_ra_norm}~

  For every $t \in \Lambda^\infty$ there exists a unique $s \in
  \Lambda^\infty$ in $\beta\bot_\Rc$-normal form such that $t
  \infred_{\beta\bot_\Rc} s$.
\end{cor}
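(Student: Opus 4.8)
The plan is to obtain this corollary as an immediate specialisation of the general normalisation result, Theorem~\ref{thm_bohm_norm}, which already establishes existence and uniqueness of a $\beta\bot_\Uc$-normal form for every term, for an \emph{arbitrary} fixed set $\Uc$ of strongly meaningless terms. All the real work has therefore been carried out in the preceding development; the only thing left to supply is a verification that $\Rc$ is an admissible instance of $\Uc$.

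Concretely, I would first invoke Theorem~\ref{thm_root_active_meaningless}, which asserts that the set $\Rc$ of root-active terms is a set of strongly meaningless terms, i.e., that it satisfies closure, substitution, overlap, root-activeness, indiscernibility and (crucially) the expansion axiom. This is the substantive input. Having this in hand, I would simply instantiate $\Uc := \Rc$ in Theorem~\ref{thm_bohm_norm} and read off the conclusion: for every $t \in \Lambda^\infty$ there exists a unique $s \in \Lambda^\infty$ in $\beta\bot_\Rc$-normal form with $t \infred_{\beta\bot_\Rc} s$.

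The main obstacle is not located in this corollary at all, but in the hypothesis it consumes. The expansion axiom is exactly what Theorem~\ref{thm_bohm_norm} requires (as emphasised in Section~\ref{sec_inflam_confluence}, it is needed to compress B{\"o}hm reductions to length~$\omega$), and verifying that $\Rc$ enjoys this axiom is the delicate point, discharged by Theorem~\ref{thm_root_active_meaningless} through Lemma~\ref{lem_infbeta_rnf}. Once $\Rc$ is known to be strongly meaningless, the corollary follows with no further argument.
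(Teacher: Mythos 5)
Your proposal is correct and matches the paper's intent exactly: the corollary is stated without a separate proof precisely because it is the instantiation $\Uc := \Rc$ of Theorem~\ref{thm_bohm_norm}, licensed by Theorem~\ref{thm_root_active_meaningless}. Nothing further is needed.
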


\subsection{Confluence modulo equivalence of meaningless terms}

From confluence of the $\lambda_{\beta\bot_\Rc}^\infty$-calculus we
may derive confluence of infinitary $\beta$-reduction~$\infred_\beta$
modulo equivalence of meaningless terms. The expansion axiom in not
needed for the proof of the following theorem.

\begin{thm}[Confluence modulo equivalence of meaningless
  terms]\label{thm_cr_modulo}~

  If $t \sim_\Uc t'$, $t \infred_{\beta} s$ and $t' \infred_{\beta}
  s'$ then there exist~$r,r'$ such that $r \sim_\Uc r'$, $s
  \infred_{\beta} r$ and $s' \infred_{\beta} r'$.
\end{thm}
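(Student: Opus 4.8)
The plan is to reduce the statement to the special ``diagonal'' case in which the two infinitary $\beta$-reductions start from the same term, and then to handle that diagonal case through the already established confluence of B{\"o}hm reduction over the root-active terms.

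First I would move both reductions onto the term $t'$. Since $t \sim_\Uc t'$ and $t \infred_\beta s$, Lemma~\ref{lem_sim_beta_2} yields a term $w$ with $t' \infred_\beta w$ and $s \sim_\Uc w$. Now $t' \infred_\beta w$ and $t' \infred_\beta s'$ are two infinitary $\beta$-reducts of the single term $t'$, and it suffices to join them modulo $\sim_\Uc$: once I have $w \infred_\beta r_w$, $s' \infred_\beta r'$ with $r_w \sim_\Uc r'$, I apply Lemma~\ref{lem_sim_beta_2} once more (using that $\sim_\Uc$ is symmetric, an easy coinduction) to $w \infred_\beta r_w$ together with $w \sim_\Uc s$, obtaining $r$ with $s \infred_\beta r$ and $r_w \sim_\Uc r$; transitivity of $\sim_\Uc$ (Lemma~\ref{lem_sim_trans}) then gives $r \sim_\Uc r'$, and $s \infred_\beta r$, $s' \infred_\beta r'$ is the required span.

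For the diagonal case I would route through $\Rc$. Since $\Rc$ is strongly meaningless (Theorem~\ref{thm_root_active_meaningless}) and $\to_\beta \subseteq \to_{\beta\bot_\Rc}$ gives $\infred_\beta \subseteq \infred_{\beta\bot_\Rc}$, both $t' \infred_\beta w$ and $t' \infred_\beta s'$ are infinitary $\beta\bot_\Rc$-reductions, so confluence of the $\lambda_{\beta\bot_\Rc}^\infty$-calculus (Corollary~\ref{cor_bohm_ra_cr}) produces a common $\beta\bot_\Rc$-reduct $z$. The key remaining step is to pull a $\beta\bot_\Rc$-reduct back to a pure $\beta$-reduct up to $\sim_\Rc$: given any $a \infred_{\beta\bot_\Rc} b$, postponement of parallel $\bot_\Rc$-reduction (Lemma~\ref{lem_bohm_decompose}) gives $a \infred_\beta c \To_{\bot_\Rc} b$, and since $\To_{\bot_\Rc} \subseteq \sim_\Rc$ --- this is precisely the coinductive computation inside the proof of Lemma~\ref{lem_omega_indisc}, using that $\bot$ is root-active --- we obtain $a \infred_\beta c$ with $c \sim_\Rc b$. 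Applying this to $s' \infred_{\beta\bot_\Rc} z$ and $w \infred_{\beta\bot_\Rc} z$ yields $r'$ and $r_w$ with $s' \infred_\beta r'$, $w \infred_\beta r_w$ and $r', r_w \sim_\Rc z$; symmetry and transitivity of $\sim_\Rc$ give $r_w \sim_\Rc r'$, whence $r_w \sim_\Uc r'$ because $\Rc \subseteq \Uc$ forces $\sim_\Rc \subseteq \sim_\Uc$. This closes the diagonal case and hence the theorem.

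I expect the pull-back step to be the conceptual heart of the argument: it is what lets the powerful but ``lossy'' confluence of B{\"o}hm reduction (which collapses root-active subterms to $\bot$) be converted back into a statement about genuine $\beta$-reductions, at the cost of replacing identity with $\sim_\Rc$-equivalence. Everything else is diagram chasing with Lemma~\ref{lem_sim_beta_2} and the equational properties of $\sim_\Uc$. I would also keep in mind that the general set $\Uc$ is only assumed meaningless: the argument never invokes the expansion axiom for $\Uc$, appealing to it only implicitly through $\Rc$, which is unconditionally strongly meaningless.
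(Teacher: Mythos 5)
Your proposal is correct and follows essentially the same route as the paper: reduce to the diagonal case via Lemma~\ref{lem_sim_beta_2} and Lemma~\ref{lem_sim_trans}, join the two reducts using confluence of the $\lambda_{\beta\bot_\Rc}^\infty$-calculus (Corollary~\ref{cor_bohm_ra_cr}), and pull back to pure $\beta$-reduction via Lemma~\ref{lem_bohm_decompose} together with $\To_{\bot_\Rc}\subseteq{\sim_\Rc}$ and $\Rc\subseteq\Uc$. The only difference is that you spell out explicitly the steps the paper leaves implicit (symmetry of $\sim_\Uc$ and the passage from $\To_{\bot_\Rc}$ to $\sim_\Rc$), which is fine.
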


\begin{proof}
  By Lemma~\ref{lem_sim_beta_2} and Lemma~\ref{lem_sim_trans} it
  suffices to consider the case $t \equiv t'$. By
  Corollary~\ref{cor_bohm_ra_cr} there is~$u$ with $s
  \infred_{\beta\bot_\Rc} u$ and $s' \infred_{\beta\bot_\Rc} u$. By
  Theorem~\ref{thm_root_active_meaningless} and
  Lemma~\ref{lem_bohm_decompose} there are $r,r'$ with $s
  \infred_\beta r \sim_{\Rc} u$ and $s' \infred_\beta r' \sim_\Rc
  u$. Because $\Rc \subseteq \Uc$, by Lemma~\ref{lem_sim_trans} we
  obtain $r \sim_\Uc r'$.
\end{proof}

\section{Strongly convergent reductions}\label{sec_inflam_strong_convergence}

In this section we prove that the existence of coinductive infinitary
reductions is equivalent to the existence of strongly convergent
reductions, under certain assumptions. As a corollary, this also
yields $\omega$-compression of strongly convergent reductions, under
certain assumptions. The equivalence proof is virtually the same as
in~\cite{EndrullisPolonsky2011}. The notion of strongly convergent
reductions is the standard notion of infinitary reductions used in
non-coinductive treatments of infinitary lambda calculus.

\newcommand{\aux}{\ensuremath{\to^{2\infty}}}
\newcommand{\infreds}{\ensuremath{\to^{\infty*}}}

\begin{defi}
  On the set of infinitary lambda terms we define a metric~$d$ by
  \[
  d(t,s) = \inf\{2^{-n} \mid t^{\upharpoonright n} \equiv
  s^{\upharpoonright{n}} \}
  \]
  where $r^{\upharpoonright n}$ for $r \in \Lambda^\infty$ is defined
  as the infinitary lambda term obtained by replacing all subterms
  of~$r$ at depth~$n$ by~$\bot$. This defines a metric topology on the
  set of infinitary lambda terms. Let
  $R \subseteq \Lambda^\infty \times \Lambda^\infty$ and let~$\zeta$
  be an ordinal. A map $f : \{\gamma \le \zeta\} \to \Lambda^\infty$
  together with reduction steps
  $\sigma_\gamma : f(\gamma) \to_R f(\gamma+1)$ for $\gamma < \zeta$ is a
  \emph{strongly convergent $R$-reduction sequence of length~$\zeta$
    from~$f(0)$ to~$f(\zeta)$} if the following conditions hold:
  \begin{enumerate}
  \item if $\delta \le \zeta$ is a limit ordinal then~$f(\delta)$ is
    the limit in the metric topology on infinite terms of the
    ordinal-indexed sequence $(f(\gamma))_{\gamma<\delta}$,
  \item if $\delta \le \zeta$ is a limit ordinal then for every $d
    \in \Nbb$ there exists $\gamma < \delta$ such that for all~$\gamma'$
    with $\gamma \le \gamma' < \delta$ the redex contracted in the
    step~$\sigma_{\gamma'}$ occurs at depth greater than~$d$.
  \end{enumerate}
  We write $s \xto{S,\zeta}_R t$ if~$S$ is a strongly convergent
  $R$-reduction sequence of length~$\zeta$ from~$s$ to~$t$.

  A relation ${\to} \subseteq \Lambda^\infty \times \Lambda^\infty$ is
  \emph{appendable} if $t_1 \infred t_2 \to t_3$ implies $t_1 \infred
  t_3$. We define~$\aux$ as the infinitary closure of~$\infred$. We
  write~$\infreds$ for the transitive-reflexive closure of~$\infred$.
\end{defi}

\begin{lem}\label{lem_appendable_append}
  If $\to$ is appendable then $t_1 \infred t_2 \infred t_3$ implies $t_1
  \infred t_3$.
\end{lem}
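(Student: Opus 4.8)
The plan is to mirror the proof of Lemma~\ref{lem_beta_append}, of which this lemma is the abstract version: there $\infred_\beta$ is appendable by Lemma~\ref{lem_beta_beta_fin_append}, and the argument uses only appendability together with the coinductive structure of the infinitary closure. So the first step is to lift appendability from single $\to$-steps to finite $\reduces$-reductions. Concretely, I would first show that if $\to$ is appendable then $t_1 \infred t_2 \reduces t_3$ implies $t_1 \infred t_3$, by induction on the length of $t_2 \reduces t_3$: the base case is trivial since $t_2 \equiv t_3$, and for the inductive step one writes $t_2 \to t_2' \reduces t_3$, applies appendability to $t_1 \infred t_2 \to t_2'$ to obtain $t_1 \infred t_2'$, and then invokes the inductive hypothesis on $t_1 \infred t_2' \reduces t_3$. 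This is the exact analog of Lemma~\ref{lem_beta_beta_fin_append}.

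With this in hand, I would prove the main statement by coinduction with case analysis on the derivation of $t_2 \infred t_3$, according to the three rules defining the infinitary closure. If $t_3 \equiv a$ then $t_2 \reduces a$, so the finite-appending lemma applied to $t_1 \infred t_2 \reduces a$ immediately gives $t_1 \infred a$. If $t_3 \equiv s_1 s_2$, then $t_2 \reduces u_1 u_2$ with $u_1 \infred s_1$ and $u_2 \infred s_2$; the finite-appending lemma yields $t_1 \infred u_1 u_2$, which by inversion of the infinitary-closure rule decomposes as $t_1 \reduces v_1 v_2$ with $v_1 \infred u_1$ and $v_2 \infred u_2$. Now the coinductive hypothesis applied to $v_i \infred u_i \infred s_i$ gives $v_i \infred s_i$, and re-applying the infinitary-closure rule to $t_1 \reduces v_1 v_2$ together with $v_1 \infred s_1$ and $v_2 \infred s_2$ produces $t_1 \infred s_1 s_2 \equiv t_3$. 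The abstraction case $t_3 \equiv \lambda x . r'$ is handled analogously.

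The main obstacle, and the only genuinely coinductive part, is this re-decomposition in the application (and abstraction) cases: after appending the finite prefix to obtain a full infinitary reduction $t_1 \infred u_1 u_2$, one must re-open it along the matching constructor to expose $t_1 \reduces v_1 v_2$ with $v_i \infred u_i$, so that the coinductive hypothesis can be fired on the strictly smaller pairs and the result re-wrapped under the same constructor. The re-wrapping supplies the guard: the conclusion $t_1 \infred s_1 s_2$ is justified by the infinitary-closure rule whose premises $v_i \infred s_i$ come from the coinductive hypothesis, so in the approximant reading of Section~\ref{sec_coind} we pass from $v_i \infred^\gamma s_i$ to $t_1 \infred^{\gamma+1} s_1 s_2$. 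Care is only needed to ensure that the hypotheses $t_1 \infred t_2$ and $u_i \infred s_i$ are used as full relations, which we inspect and decompose, while the conclusion is built one constructor at a time; this is exactly the bookkeeping discipline laid out in Section~\ref{sec_coind}.
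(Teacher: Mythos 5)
Your proposal is correct and matches the paper's intended argument: the paper proves Lemma~\ref{lem_appendable_append} simply ``by coinduction'' (citing Endrullis--Polonsky), and the proof it has in mind is exactly the one you give, namely the abstract analogue of Lemma~\ref{lem_beta_append} --- first lift appendability to finite $\reduces$-suffixes by induction on length (the analogue of Lemma~\ref{lem_beta_beta_fin_append}), then coinduct with case analysis on $t_2 \infred t_3$, inverting the appended reduction $t_1 \infred u_1u_2$ to expose the finite prefix before firing the coinductive hypothesis. Your bookkeeping about which hypotheses are full relations (and hence may be inverted or fed to lemmas) versus which occurrences are guarded approximants is also the right discipline.
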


\begin{proof}
  By coinduction. This has essentially been shown
  in~\cite[Lemma~4.5]{EndrullisPolonsky2011}.
\end{proof}

\begin{lem}\label{lem_two_to_one}
  If $\to$ is appendable then $s \aux t$ implies $s \infred t$.
\end{lem}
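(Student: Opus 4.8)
The plan is to prove a statement slightly stronger than the lemma, so that the coinductive hypothesis can be applied through a prepended $\infred$. Concretely, I would show by coinduction that for all $s,u,t$, if $s \infred u$ and $u \aux t$ then $s \infred t$. The lemma then follows by instantiating $u \equiv s$ and using reflexivity $s \infred s$ (the first of the two simple lemmas). The reason for the strengthening is this: in a naive coinductive proof of $s \aux t \Rightarrow s \infred t$, the application case produces $s \reduces w_1 w_2$ with $w_i \infred t_i$, while the coinductive hypothesis only supplies the approximant $t_i \infred^\gamma t_i'$; to conclude $s \infred^{\gamma+1} t_1't_2'$ one would need $w_i \infred^\gamma t_i'$, i.e.\ to append the full reduction $w_i \infred t_i$ in front of an \emph{approximant}, which Lemma~\ref{lem_appendable_append} does not license. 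Carrying the prepended reduction inside the coinductive statement sidesteps this: the reduction $w_i \infred u_i$ becomes part of a legitimate corecursive invocation rather than something appended afterwards.

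The coinductive proof of the strengthened statement proceeds by case analysis on $u \aux t$, with a uniform preliminary step. Recall that $\aux$ is the infinitary closure of $\infred$, so its rules record a head reduction $u \infreds v$ (the transitive--reflexive closure of $\infred$) before splitting on the shape of~$t$. Since $\to$ is appendable, Lemma~\ref{lem_appendable_append} collapses any finite $\infred$-chain to a single $\infred$-step; combined with the hypothesis $s \infred u$, the chain $s \infred u \infreds v$ collapses to a genuine, full reduction $s \infred v$. As this is a full $\infred$ (not an approximant obtained from the coinductive hypothesis), I may unfold it by its defining rule.

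In the atom case $t \equiv a$ the premise gives $u \infreds a$, so the preliminary step yields $s \infred a$ outright and no coinductive hypothesis is needed. In the application case $t \equiv t_1't_2'$ the premises are $u \infreds u_1u_2$ and $u_i \aux t_i'$; the preliminary step gives $s \infred u_1u_2$, and unfolding its defining rule produces $s \reduces w_1 w_2$ with $w_i \infred u_i$. Now each triple $(w_i,u_i,t_i')$ satisfies the hypotheses of the strengthened statement, so the coinductive hypothesis gives $w_i \infred t_i'$; applying the $\infred$-rule for applications to $s \reduces w_1 w_2$ then yields $s \infred t_1't_2' \equiv t$. The abstraction case $t \equiv \lambda x . r'$ is identical using the abstraction rule: from $u \infreds \lambda x . r$ and $r \aux r'$ I obtain $s \infred \lambda x . r$, unfold to $s \reduces \lambda x . w$ with $w \infred r$, apply the coinductive hypothesis to $(w,r,r')$, and rebuild.

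The main obstacle is exactly the mismatch between the full hypotheses exposed by unfolding $\infred$ and the approximant delivered by the coinductive hypothesis; the strengthening is what makes the corecursive call land on the correct shape and keeps the proof within the permitted coinductive discipline (we only unfold genuine $\infred$-facts, never the approximants obtained from the coinductive hypothesis). A secondary point to check is guardedness of the implicit construction: in each reconstruction the $\reduces$-prefix is followed by a constructor — an application or an abstraction — which supplies the guard, precisely as in the rules defining $\infred$.
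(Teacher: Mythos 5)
Your proof is correct, and its skeleton --- case analysis on the $\aux$-derivation, collapsing the finite $\infreds$-prefix with Lemma~\ref{lem_appendable_append}, unfolding the resulting genuine $\infred$-fact by its defining rule, a corecursive call, and rebuilding under the constructor --- is exactly the paper's. The one point of divergence is the device used to make the corecursive call land on the right shape. You strengthen the statement to carry a prepended $\infred$-step as an extra hypothesis; the paper keeps the unstrengthened statement $s \aux t \Rightarrow s \infred t$ and instead moves the leftover step onto the hypothesis side: from $u_i \infred t_i'$ and $t_i' \aux t_i$ it concludes $u_i \aux t_i$ (prepending a finite $\infred$-prefix to an $\aux$-fact is licensed by the generic prepending lemma, since $\aux$ is itself an infinitary closure of $\infred$ and both facts involved are full relations rather than approximants), and then applies the plain coinductive hypothesis to $u_i \aux t_i$. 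Both devices resolve precisely the mismatch you diagnose --- one cannot append a full $\infred$ in front of an approximant delivered by the coinductive hypothesis --- and both are sound under the paper's coinductive discipline. Your strengthening buys a self-contained argument at the cost of an extra parameter in the coinductive statement; the paper's version keeps the statement minimal at the cost of one additional (and in the text implicit) appeal to the prepending lemma for~$\aux$.
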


\begin{proof}
  By coinduction. There are three cases.
  \begin{itemize}
  \item $t \equiv a$. Then $s \infreds a$, so $s \infred a$ by
    Lemma~\ref{lem_appendable_append}.
  \item $t \equiv t_1 t_2$. Then there are $t_1',t_2'$ with $s
    \infreds t_1't_2'$ and $t_i' \aux t_i$. By
    Lemma~\ref{lem_appendable_append} we have $s \infred t_1't_2'$, so
    there are $u_1,u_2$ with $s \reduces u_1u_2$ and $u_i \infred
    t_i'$. Then $u_i \aux t_i$. By the coinductive hypothesis $u_i
    \infred t_i$. Hence $s \infred t_1t_2 \equiv t$.
  \item $t \equiv \lambda x . r$. Then by
    Lemma~\ref{lem_appendable_append} there is~$s'$ with $s \infred
    \lambda x . s'$ and $s' \aux r$. So there is~$s_0$ with $s
    \reduces \lambda x . s_0$ and $s_0 \infred s'$. Then also $s_0
    \aux r$. By the coinductive hypothesis $s_0 \infred r$. Thus $s
    \infred \lambda x . r \equiv t$.\qedhere
  \end{itemize}
\end{proof}

\begin{thm}\label{thm_strongly_convergent}
  For every $R \subseteq \Lambda^\infty \times \Lambda^\infty$ such
  that~$\to_R$ is appendable, and for all $s,t \in \Lambda^\infty$, we
  have the equivalence: $s \infred_R t$ iff there exists a strongly
  convergent $R$-reduction sequence from~$s$ to~$t$. Moreover, if $s
  \infred_R t$ then the sequence may be chosen to have length at
  most~$\omega$.
\end{thm}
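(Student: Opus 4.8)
The plan is to prove the two implications separately, reading off the ``moreover'' bound from the left-to-right direction and observing that the right-to-left direction needs no bound on the length.

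For the direction $s \infred_R t \Rightarrow$ existence of a strongly convergent sequence of length at most $\omega$, I would extract the sequence from the coinductive derivation of $s \infred_R t$. Each node of that derivation supplies a \emph{finite} reduction $\reduces_R$ reaching the root constructor of the corresponding subterm of $t$, after which the derivation continues on the immediate subterms one level deeper. Since every node has at most two children, the derivation tree is finitely branching, so at each depth $n$ there are only finitely many nodes, each contributing a finite $\reduces_R$. I would concatenate these depth by depth: block $n$ performs all the finite reductions attached to depth-$n$ nodes, every one of which contracts redexes at depth $\ge n$. The resulting reduction has length at most $\omega$, its contracted depths tend to infinity (block $n$ lives at depth $\ge n$), so it is strongly convergent, and since the term $v_n$ reached after block $n$ agrees with $t$ up to depth $n$, its limit is $t$. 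Formally one shows by induction on $n$ that there is a finite $s \reduces_R v_n$ with $v_n^{\upharpoonright n} \equiv t^{\upharpoonright n}$ and the remaining activity confined to depth $\ge n$, which is routine from the shape of the rules for $\infred_R$.

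For the converse I would first record that any strongly convergent reduction contracts only finitely many redexes at depth $0$: were there infinitely many, their positions would have a limit ordinal $\delta \le \zeta$ below which depth-$0$ contractions occur cofinally, contradicting strong convergence at $\delta$ (instantiated with $d = 0$). Hence after the last depth-$0$ step the root constructor is fixed, and the tail projects onto strongly convergent reductions of the immediate subterms, sitting one level deeper. This makes depth, rather than the ordinal length, the natural decreasing measure, which is exactly why plain transfinite induction on $\zeta$ breaks down: in the limit case the subterm projections can again have length $\zeta$. To get around this I would route the argument through the auxiliary relation $\aux$ (the infinitary closure of $\infred$), whose prefixes are built from finitely many $\infred_R$-steps rather than from single $\to_R$-steps. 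Using the finiteness of depth-$0$ steps together with appendability of $\to_R$ to organise the initial segment (the chunks of deep activity and the finitely many root contractions) into finitely many $\infred_R$-steps, and invoking the coinductive hypothesis on the strictly smaller subterm reductions, I would prove by coinduction that a strongly convergent $R$-reduction from $s$ to $t$ yields $s \aux_R t$. Finally Lemma~\ref{lem_two_to_one} collapses this to $s \infred_R t$, using once more that $\to_R$ is appendable.

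The main obstacle is precisely the limit/ordinal-length case of the converse. The difficulty is that the definition of $\infred_R$ insists on reaching each successive root constructor by a \emph{finite} $\reduces_R$, whereas in a strongly convergent reduction the root may only stabilise after an initial segment of transfinite length, with deep activity creating the root redexes. The device that resolves this is the two-layer relation $\aux$: its $\infreds_R$-prefixes allow the root structure to be reached by finitely many \emph{infinitary} steps, each doing unbounded deep work, which is what a strongly convergent reduction actually provides, and Lemma~\ref{lem_two_to_one} then flattens the two infinitary layers back into one by appendability. The remaining care is purely in the bookkeeping: checking that the coinductive definition underlying $s \aux_R t$ stays guarded, so that each corecursive call descends into a proper subterm, and that the finitely many depth-$0$ contractions can indeed be absorbed using appendability.
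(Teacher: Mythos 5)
Your left-to-right direction is correct and matches the paper's: the paper only sketches the traversal of the derivation tree, and your depth-by-depth blocking of the finitely many finite prefixes at each level, with block $n$ contracting only at depth $\ge n$, is the intended elaboration. The gap is in the converse. You correctly isolate the crux --- the root of $t$ may stabilise only after a transfinite initial segment --- and correctly route the argument through $\to^{2\infty}_R$ and Lemma~\ref{lem_two_to_one}. But the step where you ``organise the initial segment \ldots{} into finitely many $\infred_R$-steps'' using only the finiteness of depth-$0$ contractions together with appendability is not justified: the chunks of deep activity between the finitely many root contractions are themselves strongly convergent reductions of possibly transfinite length, and turning each such chunk into an $\infred_R$-step is precisely an instance of the implication being proved, applied to a reduction of strictly smaller ordinal length. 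Appendability and Lemma~\ref{lem_appendable_append} only glue such pieces together; they do not produce them. Nor can the chunks be discharged by the coinductive hypothesis of a coinduction on $\to^{2\infty}_R$: the premise $s \to^{\infty*}_R t_1t_2$ of the relevant rule must be a completed fact about $\infred_R$, whereas the coinductive hypothesis yields only approximants of $\to^{2\infty}_R$, which by the discipline of Section~\ref{sec_coind} may not be fed into Lemma~\ref{lem_two_to_one} or otherwise treated as the full relation.

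Consequently you cannot discard transfinite induction on $\zeta$, as you propose to. The paper's proof \emph{is} such an induction, with the limit case handled by an inner coinduction: one picks $\gamma<\zeta$ past which all contractions occur at positive depth, obtains $s \to^{2\infty}_R t_\gamma$ (hence $s \infred_R t_\gamma$ via Lemma~\ref{lem_two_to_one}) from the \emph{outer inductive hypothesis} because $\gamma<\zeta$, and then treats the subterm projections of the tail by the outer inductive hypothesis when their length is strictly below $\zeta$ and by the inner coinductive hypothesis when it equals $\zeta$. So the induction on ordinal length is not ``broken''; it is indispensable for converting the initial segment, and the coinduction is layered inside it only to handle those tails whose length fails to decrease. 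Your proposal inverts the two layers (coinduction outermost, no induction at all), and as written the recursive conversion of the deep chunks is not well-founded.
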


\begin{proof}
  The proof is a straightforward generalisation of the proof of
  Theorem~3 in~\cite{EndrullisPolonsky2011}.

  Suppose that $s \infred_R t$. By traversing the infinite derivation
  tree of $s \infred_R t$ and accumulating the finite prefixes by
  concatenation, we obtain a reduction sequence of length at
  most~$\omega$ which satisfies the depth requirement by construction.

  For the other direction, by induction on~$\zeta$ we show that if $s
  \xto{S,\zeta}_R t$ then $s \aux_R t$, which suffices for $s
  \infred_R t$ by Lemma~\ref{lem_two_to_one}. There are three cases.
  \begin{itemize}
  \item $\zeta = 0$. If $s \xto{S,0}_R t$ then $s \equiv t$, so $s
    \aux_R t$.
  \item $\zeta = \gamma+1$. If $s \xto{S,\gamma+1}_R t$ then $s
    \xto{S',\gamma}_R s' \to_R t$. Hence $s \aux_R s'$ by the
    inductive hypothesis. Then $s \infred_R s' \to_R t$ by
    Lemma~\ref{lem_two_to_one}. So $s \infred_R t$ because~$\to_R$ is
    appendable.
  \item $\zeta$ is a limit ordinal. By coinduction we show that if $s
    \xto{S,\zeta}_R t$ then $s \aux_R t$. By the depth condition
    there is $\gamma<\zeta$ such that for every $\delta \ge \gamma$ the
    redex contracted in~$S$ at~$\delta$ occurs at depth greater than
    zero. Let~$t_\gamma$ be the term at index~$\gamma$ in~$S$. Then by
    the inductive hypothesis we have $s \aux_R t_\gamma$, and thus $s
    \infred_R t_\gamma$ by Lemma~\ref{lem_two_to_one}. There are three
    cases.
    \begin{itemize}
    \item $t_\gamma \equiv a$. This is impossible because then there
      can be no reduction of~$t_\gamma$ at depth greater than zero.
    \item $t_\gamma \equiv \lambda x . r$. Then $t \equiv \lambda x
      . u$ and $r \xto{S',\delta}_R u$ with $\delta \le \zeta$. Hence
      $r \aux_R u$ by the coinductive hypothesis if $\delta=\zeta$,
      or by the inductive hypothesis if $\delta < \zeta$. Since $s
      \infred_R \lambda x . r$ we obtain $s \aux_R \lambda x . u
      \equiv t$.
    \item $t_\gamma \equiv t_1t_2$. Then $t\equiv u_1u_2$ and the tail
      of the reduction~$S$ past~$\gamma$ may be split into two parts:
      $t_i \xto{S_i,\delta_i}_R u_i$ with $\delta_i \le \zeta$ for
      $i=0,1$. Then $t_i \aux_R u_i$ by the inductive and/or the
      coinductive hypothesis. Since $s \infred_R t_1t_2$ we obtain $s
      \aux_R u_1u_2 \equiv t$.\qedhere
    \end{itemize}
  \end{itemize}
\end{proof}

\begin{cor}[$\omega$-compression]
  If~$\to_R$ is appendable and there exists a strongly convergent
  $R$-reduction sequence from~$s$ to~$t$ then there exists such a
  sequence of length at most~$\omega$.
\end{cor}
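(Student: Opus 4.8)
The plan is to derive this corollary directly from Theorem~\ref{thm_strongly_convergent}, which already packages both the equivalence and the length bound. First I would assume that $\to_R$ is appendable and that there exists some strongly convergent $R$-reduction sequence from~$s$ to~$t$, of arbitrary ordinal length~$\zeta$. Applying the ``only if'' direction of Theorem~\ref{thm_strongly_convergent} — the direction proved there by induction on~$\zeta$ together with Lemma~\ref{lem_two_to_one} — this existence immediately yields $s \infred_R t$ in the coinductive sense. Note that appendability of~$\to_R$ is exactly the hypothesis that direction requires.

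Next I would feed $s \infred_R t$ back into the forward direction of the same theorem. Here the ``moreover'' clause does the decisive work: it states precisely that whenever $s \infred_R t$ holds, one can produce a strongly convergent $R$-reduction sequence from~$s$ to~$t$ of length at most~$\omega$, obtained by traversing the derivation tree of $s \infred_R t$ and concatenating the finite $\reduces_R$-prefixes in the nodes (the depth requirement being satisfied by construction). This already-bounded sequence is exactly the witness demanded by the corollary, so the proof is complete.

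There is essentially no genuine obstacle, since all the real work — the translation between strongly convergent reductions and~$\infred_R$ in both directions, and the explicit construction of a length-$\omega$ sequence by tree traversal — has been carried out inside Theorem~\ref{thm_strongly_convergent}. The only thing to keep straight is that appendability is the common hypothesis underlying both invocations: it guarantees both that an arbitrary-length strongly convergent sequence collapses to $s \infred_R t$ and that the reconstructed length-$\omega$ sequence is again strongly convergent. Thus the corollary is simply the composition of the two directions of the equivalence, with the output length controlled by the ``moreover'' part; I would write it as a two-sentence proof citing Theorem~\ref{thm_strongly_convergent}.
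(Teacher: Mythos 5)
Your proof is correct and is exactly the argument the paper intends (the corollary is stated without proof precisely because it follows by composing the two directions of Theorem~\ref{thm_strongly_convergent}, using the ``moreover'' clause for the length bound). Nothing is missing.
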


\begin{cor}
  Let~$\Uc$ be a set of strongly meaningless terms.
  \begin{itemize}
  \item $s \infred_{\beta\bot_\Uc} t$ iff there exists a strongly
    convergent $\beta\bot_\Uc$-reduction sequence from~$s$ to~$t$.
  \item $s \infred_{\beta} t$ iff there exists a strongly convergent
    $\beta$-reduction sequence from~$s$ to~$t$.
  \end{itemize}
\end{cor}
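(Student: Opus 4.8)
The plan is to obtain both equivalences as immediate instances of Theorem~\ref{thm_strongly_convergent}, which already proves exactly this statement for \emph{every} relation~$R$ whose one-step closure~$\to_R$ is appendable (and moreover yields length at most~$\omega$). So the whole corollary reduces to verifying the appendability hypothesis in the two cases $R = R_\beta$ and $R = R_{\beta\bot_\Uc}$; once that is done, the two bullets follow by instantiating the theorem with these~$R$.

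For the plain $\beta$-case this is immediate and requires no assumption on~$\Uc$: appendability of~$\to_\beta$ asserts precisely that $t_1 \infred_\beta t_2 \to_\beta t_3$ implies $t_1 \infred_\beta t_3$, which is Lemma~\ref{lem_beta_beta_fin_append}. Hence Theorem~\ref{thm_strongly_convergent} with $R = R_\beta$ delivers the second bullet directly. For the $\beta\bot_\Uc$-case I would prove appendability of~$\to_{\beta\bot_\Uc}$ by analysing the final single step $t_2 \to_{\beta\bot_\Uc} t_3$ according to whether the contracted redex is a $\beta$-redex or a $\bot_\Uc$-redex. If it is a $\beta$-step, then $t_2 \reduces_\beta t_3$, and Corollary~\ref{cor_bohm_beta_append} gives $t_1 \infred_{\beta\bot_\Uc} t_3$. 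If it is a $\bot_\Uc$-step, I would first note that a single compatible-closure step is a special case of the parallel closure, i.e.\ ${\to_{\bot_\Uc}} \subseteq {\To_{\bot_\Uc}}$; this uses only that~$\To_{\bot_\Uc}$ is reflexive (all untouched subterms parallel-reduce to themselves, proved by a trivial coinduction), established by an easy induction on the compatible-closure derivation. Thus $t_2 \To_{\bot_\Uc} t_3$, and Lemma~\ref{lem_omega_merge} yields $t_1 \infred_{\beta\bot_\Uc} t_3$. In both sub-cases $\to_{\beta\bot_\Uc}$ is appendable, so Theorem~\ref{thm_strongly_convergent} with $R = R_{\beta\bot_\Uc}$ gives the first bullet.

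I expect the only genuinely non-routine point to be the $\bot_\Uc$-step: one has to recognise that single-step $\bot_\Uc$-reduction embeds into parallel $\bot_\Uc$-reduction in order to reuse Lemma~\ref{lem_omega_merge} rather than prove a fresh appendability lemma by hand. This is also the sole place where the hypothesis bites, since Lemma~\ref{lem_omega_merge} rests on Corollary~\ref{cor_back_active} and hence on the expansion axiom --- which is exactly why the corollary is stated for \emph{strongly} meaningless~$\Uc$, whereas the second bullet needs no such assumption. Everything else is a direct application of the already-proved equivalence.
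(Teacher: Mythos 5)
Your proposal is correct and follows essentially the same route as the paper: the paper's proof likewise reduces both bullets to appendability via Theorem~\ref{thm_strongly_convergent}, citing Lemma~\ref{lem_beta_beta_fin_append} for the $\beta$-case and Lemma~\ref{lem_omega_merge} together with Corollary~\ref{cor_bohm_beta_append} for the $\beta\bot_\Uc$-case. You merely spell out the case split on the final step and the (correct) embedding of a single $\bot_\Uc$-step into the parallel closure, which the paper leaves implicit.
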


\begin{proof}
  By Theorem~\ref{thm_strongly_convergent} it suffices to show
  that~$\to_{\beta\bot_\Uc}$ and~$\to_\beta$ are
  appendable. For~$\to_{\beta\bot_\Uc}$ this follows from
  Lemma~\ref{lem_omega_merge} and
  Corollary~\ref{cor_bohm_beta_append}. For~$\to_\beta$ this follows
  from Lemma~\ref{lem_beta_beta_fin_append}.
\end{proof}

\section{The formalisation}\label{sec_formalisation}

The results of this paper have been formalised in the Coq proof
assistant. The formalisation is available at:
\begin{center}
  \url{https://github.com/lukaszcz/infinitary-confluence}
\end{center}
The formalisation contains all results of
Section~\ref{sec_inflam_confluence}. We did not formalise the proof
from Section~\ref{sec_inflam_strong_convergence} of the equivalence
between the coinductive definition of the infinitary reduction
relation and the standard notion of strongly convergent reductions.

In our formalisation we use a representation of infinitary
lambda terms with de Bruijn indices, and we do not allow constants
except~$\bot$. Hence, the results about $\alpha$-conversion alluded to
in Section~\ref{sec_inflam_intro} are not formalised either. Because
the formalisation is based on de Bruijn indices, many tedious lifting
lemmas need to be proved. These lemmas are present only in the
formalisation, but not in the paper.

In general, the formalisation follows closely the text of the
paper. Each lemma from Section~\ref{sec_inflam_confluence} has a
corresponding statement in the formalisation (annotated with the lemma
number from the paper). There are, however, some subtleties, described
below.

One difficulty with a Coq formalisation of our results is that in~Coq
the coinductively defined equality (bisimilarity)~$=$ on infinite
terms (see Definition~\ref{def_bisimilarity}) is not identical with
Coq's definitional equality~$\equiv$. In the paper we use~$\equiv$
and~$=$ interchangeably, following
Proposition~\ref{prop_bisimilarity}. In the formalisation we needed to
formulate our definitions ``modulo'' bisimilarity. For instance, the
inductive definition of the transitive-reflexive closure~$R^*$ of a
relation~$R$ on infinite terms is as follows.
\begin{enumerate}
\item If $t_1 = t_2$ then $R^* t_1 t_2$ (where~$=$ denotes
  bisimilarity coinductively defined like in
  Definition~\ref{def_bisimilarity}).
\item If $R t_1 t_2$ and $R^* t_2 t_3$ then $R^* t_1 t_3$.
\end{enumerate}
Changing the first point to
\begin{enumerate}
\item $R^* t t$ for any term~$t$
\end{enumerate}
would not work with our formalisation. Similarly, the formal
definition of the compatible closure of a relation~$R$ follows the
inductive rules
\[
  \begin{array}{cccc}
    \infer{s \to_R t}{\la s, t \ra \in R} &\quad
    \infer{s t \to_R s' t'}{s \to_R s' & t = t'} &\quad
    \infer{s t \to_R s' t'}{t \to_R t' & s = s'} &\quad
    \infer{\lambda x . s \to_R \lambda x . s'}{s \to_R s'}
  \end{array}
\]
where~$=$ denotes the coinductively defined bisimilarity relation.

Another limitation of Coq is that it is not possible to directly prove
by coinduction statements of the form
$\forall \vec{x} . \varphi(\vec{x}) \to R_1(\vec{x}) \land
R_2(\vec{x})$, i.e., statements with a conjuction of two coinductive
predicates. Instead, we show
$\forall \vec{x} . \varphi(\vec{x}) \to R_1(\vec{x})$ and
$\forall \vec{x} . \varphi(\vec{x}) \to R_2(\vec{x})$ separately. In
all our coinductive proofs we use the coinductive hypothesis in a way
that makes this separation possible.

The formalisation assumes the following axioms.
\begin{enumerate}
\item The constructive indefinite description axiom:
  \[
    \forall A : \Type . \forall P : A \to \Prop . (\exists x : A . P
    x) \to \{x : A \mid P x\}.
  \]
  This axiom states that if there exists an object~$x$ of type~$A$
  satisfying the predicate~$P$, then it is possible to choose a fixed
  such object. This is not provable in the standard logic of Coq. We
  need this assumption to be able to define the implicit functions in
  some coinductive proofs which show the existence of an infinite
  object, when the form of this object depends on which case in the
  definition of some (co)inductive predicate holds. More precisely,
  the indefinite description axiom is needed in the proof of
  Lemma~\ref{lem_bohm_decompose}, in the definition of canonical
  root normal forms (Definition~\ref{def_crnf}), and in the proofs of
  Lemma~\ref{lem_sim_beta_2}, Lemma~\ref{lem_sim_to_bot} and
  Lemma~\ref{lem_N_normalising}.
\item Excluded middle for the property of being in root normal form:
  for every term~$t$, either~$t$ is in root normal form or not.
\item Excluded middle for the property of having a root normal form:
  for every term~$t$, either~$t$ has a root normal form or not.
\item Excluded middle for the property of belonging to a set of
  strongly meaningless terms: for any set of strongly meaningless
  terms~$\Uc$ and any term~$t$, either $t \in \Uc$ or $t \notin \Uc$.
\end{enumerate}
Note that the last axiom does not constructively imply the third. We
define being root-active as not having a root normal form. In fact, we
need the third axiom to show that if a term does not belong to a set
of meaningless terms then it has a root normal form.

The first axiom could probably be avoided by making the reduction
relations $\Set$-valued instead of $\Prop$-valued. We do not use the
impredicativity of~$\Prop$. The reason why we chose to define the
relations as $\Prop$-valued is that certain automated proof search
tactics work better with $\Prop$-valued relations, which makes the
formalisation easier to carry out.

Because the $\bot_\Uc$-reduction rule, for any set of meaningless
terms~$\Uc$, requires an oracle to check whether it is applicable, the
present setup is inherently classical. It is an interesting research
question to devise a constructive theory of meaningless terms.

Aside of the axioms (1)--(4), everything else from
Section~\ref{sec_inflam_confluence} is formalised in the constructive
logic of Coq, including the proof of Theorem~\ref{thm_polonsky} only
cited in this paper. Our formalisation of Theorem~\ref{thm_polonsky}
closely follows~\cite{EndrullisPolonsky2011}.

In our formalisation we extensively used the CoqHammer
tool~\cite{CzajkaKaliszyk2018}.

\section{Conclusions}

We presented new and formal coinductive proofs of the following
results.
\begin{enumerate}
\item Confluence and normalisation of B{\"o}hm reduction over any set
  of strongly meaningless terms.
\item Confluence and normalisation of B{\"o}hm reduction over
  root-active terms, by showing that root-active terms are strongly
  meaningless.
\item Confluence of infinitary $\beta$-reduction modulo any set of
  meaningless terms (expansion axiom not needed).
\end{enumerate}
We formalised these results in Coq. Our formalisation uses a
definition of infinitary lambda terms based on de Bruijn
indices. Strictly speaking, the precise relation of this definition to
other definitions of infinitary lambda terms in the literature has not
been established. We leave this for future work. The issue of the
equivalence of various definitions of infinitary lambda terms is not
necessarily
trivial~\cite{KurzPetrisanSeveriVries2012,KurzPetrisanSeveriVries2013}.

By a straightforward generalisation of a result
in~\cite{EndrullisPolonsky2011} we also proved equivalence, in the
sense of existence, of the coinductive definitions of infinitary
rewriting relations with the standard definitions based on strong
convergence. However, we did not formalise this result.

In Section~\ref{sec_coind} we explained how to elaborate our
coinductive proofs by reducing them to proofs by transfinite induction
and thus eliminating coinduction. This provides one way to understand
and verify our proofs without resorting to a formalisation. After
properly understanding the observations of Section~\ref{sec_coind} it
should be ``clear'' that coinduction may in principle be eliminated in
the described manner. We use the word ``clear'' in the same sense that
it is ``clear'' that the more sophisticated inductive constructions
commonly used in the literature can be formalised in ZFC set
theory. Of course, this notion of ``clear'' may always be debated. The
only way to make this completely precise is to create a formal system
based on Section~\ref{sec_coind} in which our proofs could be
interpreted reasonably directly. We do not consider the observations
of Section~\ref{sec_coind} to be novel or particularly
insightful. However, distilling them into a formal system could
perhaps arise some interest. This is left for future work.

\bibliography{biblio}{}
\bibliographystyle{plain}

\appendix

\clearpage
\section{A set of meaningless terms not satisfying the expansion axiom}

We recall the definitions from page~14. Let~$\Os$ be the \emph{ogre}
satisfying $\Os \equiv \lambda x . \Os$. A term~$t$ is
\emph{head-active} if $t \equiv \lambda x_1 \ldots x_n . r t_1 \ldots
t_m$ with $r \in \Rc$ and $n,m\ge 0$. Define:
\begin{itemize}
\item $\Hc = \{ t \in \Lambda^\infty \mid t \reduces_\beta t' \text{
  with } t' \text{ head-active} \}$,
\item $\Oc = \{ t \in \Lambda^\infty \mid t \reduces_\beta \Os \}$,
\item $\Uc = \Hc \cup \Oc$.
\end{itemize}
We will show that~$\Uc$ is a set of meaningless terms. The proofs in
this appendix rely on the results established in the paper. In
particular, we use that~$\Rc$ is a set of strongly meaningless terms
and that~$\infred_\beta$ is confluent modulo~$\Rc$.

\begin{lem}\label{lem_a_cong}
  \begin{enumerate}
  \item If $t \in \Uc$ then $t s \in \Uc$.
  \item If $t \in \Uc$ then $\lambda x . t \in \Uc$.
  \item If $t \in \Uc$ then $t[s/x] \in \Uc$.
  \end{enumerate}
\end{lem}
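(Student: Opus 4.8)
The plan is to prove each of the three claims by case analysis on whether $t \in \Hc$ or $t \in \Oc$, reducing every case to one of two facts. The first is that $\Os$ is a closed term, so that $\Os s \to_\beta \Os$, $\lambda x . \Os \equiv \Os$ and $\Os[s/x] \equiv \Os$; this settles all the $\Oc$ cases uniformly. The second is that root-activeness is preserved by substitution, i.e.\ Lemma~\ref{lem_subst_active}, which is what keeps the head of a head-active term in $\Rc$ throughout the $\Hc$ cases. I would also use the standard finitary facts that $\reduces_\beta$ is compatible with application, abstraction and substitution (so $t_1 \reduces_\beta t_2$ implies $t_1 s \reduces_\beta t_2 s$, $\lambda x . t_1 \reduces_\beta \lambda x . t_2$ and $t_1[s/x] \reduces_\beta t_2[s/x]$), together with the trivial observation that membership in $\Uc$ is preserved under prepending a finite $\beta$-reduction: if $u \reduces_\beta w$ and $w \in \Uc$ then $u \in \Uc$, since reducing further to a head-active term or to $\Os$ composes with $u \reduces_\beta w$.

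For the $\Oc$ cases of all three parts, if $t \reduces_\beta \Os$ then, by compatibility, $ts \reduces_\beta \Os s \to_\beta \Os$, and $\lambda x . t \reduces_\beta \lambda x . \Os \equiv \Os$, and $t[s/x] \reduces_\beta \Os[s/x] \equiv \Os$. In each case the result reduces to $\Os$, hence lies in $\Oc \subseteq \Uc$.

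For the $\Hc$ cases, write $t \reduces_\beta t'$ with $t' \equiv \lambda x_1 \ldots x_n . r u_1 \ldots u_m$ and $r \in \Rc$. For part~2, $\lambda x . t \reduces_\beta \lambda x . t'$, which is again head-active with one more leading abstraction, so $\lambda x . t \in \Hc$. For part~3, $t[s/x] \reduces_\beta t'[s/x]$, and by the variable convention $t'[s/x] \equiv \lambda x_1 \ldots x_n . r[s/x]\, u_1[s/x] \ldots u_m[s/x]$; since $r[s/x] \in \Rc$ by Lemma~\ref{lem_subst_active}, the reduct is head-active and $t[s/x] \in \Hc$.

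The main obstacle is the application case (part~1) when $n \ge 1$: appending $s$ to $t'$ creates a head redex $(\lambda x_1 \ldots x_n . r u_1 \ldots u_m)\, s$. When $n = 0$ there is no redex and $ts \reduces_\beta (r u_1 \ldots u_m)\, s$ is immediately head-active. When $n \ge 1$ I would contract once to obtain $(\lambda x_2 \ldots x_n . r u_1 \ldots u_m)[s/x_1]$; the body $\lambda x_2 \ldots x_n . r u_1 \ldots u_m$ is head-active, hence in $\Hc \subseteq \Uc$, so by part~3 its substitution instance is in $\Uc$, and since $ts$ reduces to this term the prepending observation gives $ts \in \Uc$. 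The only real care needed is the bookkeeping of the variable convention when pushing a substitution through the block of leading abstractions, and the appeal to Lemma~\ref{lem_subst_active} to keep the head root-active; proving part~3 before part~1 makes the latter a one-line consequence.
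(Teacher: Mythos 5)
Your proof is correct and is essentially a detailed elaboration of the paper's one-line argument, which likewise reduces everything to the definitions of $\Hc$ and $\Oc$ together with the substitution axiom for $\Rc$ (Lemma~\ref{lem_subst_active}). The extra care you take with the application case when $n\ge 1$ (contracting the created head redex and appealing to part~3) is exactly the bookkeeping the paper leaves implicit.
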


\begin{proof}
  Follows from definitions and the fact that $\Rc$ satisfies the
  substitution axiom.
\end{proof}

\begin{lem}
  If $t \infred_\beta t'$ and $t \in \Hc$ then $t' \in \Hc$.
\end{lem}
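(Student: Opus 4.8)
The plan is to unpack the hypothesis $t \in \Hc$ as a finite reduction $t \reduces_\beta u$ to a head-active term $u \equiv \lambda x_1 \ldots x_n . r_0 t_1 \ldots t_m$ with $r_0 \in \Rc$, and then to transport head-activeness from this finite reduct $u$ to the infinitary reduct $t'$. The pivotal tool is confluence of $\infred_\beta$ modulo $\Rc$ (Theorem~\ref{thm_cr_modulo} with $\Uc = \Rc$, which is available since $\Rc$ is strongly meaningless by Theorem~\ref{thm_root_active_meaningless}). Since $t \reduces_\beta u$ gives $t \infred_\beta u$, and $t \infred_\beta t'$ by hypothesis, I would apply confluence modulo $\Rc$ to the trivial instance $t \sim_\Rc t$ to obtain terms $r,r'$ with $u \infred_\beta r$, $t' \infred_\beta r'$ and $r \sim_\Rc r'$. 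It then suffices to move head-activeness along $u \to r$, then across $r \sim_\Rc r'$, and finally back from $r'$ to $t'$.

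I would carry this out in three routine but structurally delicate steps, each driven by one axiom of $\Rc$. First, in the forward step, I would show that head-activeness is preserved by $\infred_\beta$: if $u$ is head-active and $u \infred_\beta r$ then $r$ is again head-active. The argument is by induction on the number $n$ of leading abstractions with an inner induction on the spine length $m$; because $r_0 \in \Rc$ is root-active it can never reduce to an abstraction (abstractions are root normal forms), so the finite skeleton $\lambda x_1 \ldots x_n . [\,\cdot\,] t_1 \ldots t_m$ is preserved, the head reduces within $\Rc$ by closure (Lemma~\ref{lem_beta_append}), and the arguments $t_i$ reduce freely at greater depth. Second, across $\sim_\Rc$, I would unfold the parallel closure along the same skeleton and use indiscernibility of $\Rc$ (Corollary~\ref{cor_sim_rnf}) to conclude that $r'$ is head-active as well, the head staying in $\Rc$. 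Third, in the backward step, I would show that $t' \infred_\beta r'$ with $r'$ head-active implies $t' \in \Hc$; the degenerate case $n=m=0$, where $r' \equiv r_0' \in \Rc$, is immediate from the expansion axiom for $\Rc$ (Lemma~\ref{lem_infbeta_rnf}), giving $t' \in \Rc \subseteq \Hc$.

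The hard part is the backward step in its non-degenerate form. Peeling the skeleton of $r'$ through the rules defining $\infred_\beta$ exposes a finite reduction of $t'$ through the abstractions and the application spine, and at an application node one gets $t' \reduces_\beta a_l a_r$ with $a_l \infred_\beta (r_0' t_1' \ldots t_{m-1}')$, so the inductive hypothesis yields $a_l \in \Hc$. The danger is that the head-active reduct of $a_l$ might itself carry a leading abstraction, destroying the application shape. I expect this to be the main obstacle, and I would resolve it by noting that $a_l$ cannot reduce to an abstraction at all: it reduces to an application with a root-active head, and an abstraction-headed reduct versus an application with root-active head are distinct root normal forms, neither lying in $\Rc$, so $\sim_\Rc$ cannot relate them and confluence modulo $\Rc$ would be violated. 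Hence the head-active reduct of $a_l$ is application-headed and $t' \reduces_\beta$ a genuine head-active term.

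As a cleaner alternative that sidesteps this spine bookkeeping, I would characterise $\Hc$ through the standard $N_\Rc$-reduction of Section~\ref{sec_inf_n_red}: a term $v$ lies in $\Hc$ exactly when its unique $N_\Rc$-normal form has the shape $\lambda x_1 \ldots x_n . \bot s_1 \ldots s_m$, the head position collapsing to $\bot$ precisely because the corresponding subterm is root-active. Granting this characterisation, the lemma is immediate: from $t \infred_\beta t'$ we get $t \infred_{\beta\bot_\Rc} t'$, so taking an $N_\Rc$-normal form $q$ of $t'$ (Lemma~\ref{lem_N_normalising}) and prepending via Theorem~\ref{thm_N_prepend} gives $t \leadsto_{N_\Rc} q$, whence $q$ is also the $N_\Rc$-normal form of $t$ by uniqueness (Lemma~\ref{lem_N_confluent}); since $t \in \Hc$ this normal form has a $\bot$-head, and therefore $t' \in \Hc$.
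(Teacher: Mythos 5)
Your main argument is exactly the paper's proof: unpack $t \in \Hc$ to a finite head-active reduct $u$, apply confluence of $\infred_\beta$ modulo $\Rc$ to $u$ and $t'$, transfer head-activeness forward across $\infred_\beta$ and $\sim_\Rc$ via the closure and indiscernibility axioms for $\Rc$, and pull it back to a finite $\beta$-reduct of $t'$ via the expansion axiom --- you merely spell out the details (in particular that a root-active head never reduces to an abstraction, so the spine skeleton is rigid) that the paper compresses into ``one may show''. The alternative route via the $N_\Rc$-normal-form characterisation of $\Hc$ is a sound bonus, but your primary proof is essentially identical to the paper's.
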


\begin{proof}
  We have $t \reduces_\beta u$ with $u$ head-active. By confluence
  modulo~$\Rc$ there are~$s,s'$ with $u \infred_\beta s$, $t'
  \infred_\beta s'$ and $s \sim_\Rc s'$. It follows from the closure
  and indiscernibility axioms for~$\Rc$ that $s'$ is head-active. Now,
  using the expansion axiom for~$\Rc$ one may show that there is a
  head-active $w$ such that $t' \reduces_\beta w \infred_\beta s'$.
\end{proof}

\begin{lem}
  If $t \infred_\beta t'$ and $t \in \Oc$ then $t' \in \Oc$.
\end{lem}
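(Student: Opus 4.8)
The plan is to unfold the hypothesis $t \in \Oc$ as a \emph{finite} reduction $t \reduces_\beta \Os$ and to argue by induction on its length $n$. The whole difficulty is that $\Os \equiv \lambda x . \Os$ is a $\beta$-normal form whose spine of abstractions has unbounded depth, so — unlike the head-active case treated just before, where the relevant structure sits at bounded depth — one cannot read off a finite reduction $t' \reduces_\beta \Os$ from $t' \infred_\beta \Os$ alone; the finiteness of $t \reduces_\beta \Os$ must be used essentially. For the base case $n = 0$ we have $t \equiv \Os$, and since $\Os$ contains no redex a one-line coinduction shows that $\Os \infred_\beta t'$ forces $t' \equiv \Os$, so $t' \in \Oc$.

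For the inductive step ($n \geq 1$) I first strip leading abstractions. As $n \geq 1$ we have $t \not\equiv \Os$, and the only term with an infinite spine of leading abstractions is $\Os$ itself, so $t \equiv \lambda x_1 \ldots x_m . t_*$ for some finite $m$ with $t_*$ not an abstraction. The outer abstractions are never contracted, so $t_* \reduces_\beta \Os$ in the same $n$ steps; and $t_*$ cannot be an atom (atoms are normal), hence $t_*$ is an application. Peeling the abstraction clause of $\infred_\beta$ off $t \infred_\beta t'$ (prepending the trivial finite reduction) gives $t' \equiv \lambda x_1 \ldots x_m . t_*'$ with $t_* \infred_\beta t_*'$, so it suffices to prove $t_*' \reduces_\beta \Os$.

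Now $t_* \equiv pq$ reduces to the abstraction $\Os$, so $\crnf(t_*)$ is an abstraction by Lemma~\ref{lem_beta_to_crnf} and $t_*$ has a weak head redex; let $t_* \to_w t_*^+$ contract it. Since this redex must be contracted on the way to the abstraction $\Os$, standardising the reduction to contract it first yields $t_*^+ \reduces_\beta \Os$ in fewer than $n$ steps. The key move is to push this single weak head step through the infinitary reduction \emph{via weak head reduction}: by Theorem~\ref{thm_polonsky} we have $t_* \infred_w t_*'$, and applying Lemma~\ref{lem_w_fin_commute} to $t_* \infred_w t_*'$ and $t_* \to_w t_*^+$ produces $t_3$ with $t_*^+ \infred_w t_3$ and $t_*' \to_w^\equiv t_3$. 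Translating back by Theorem~\ref{thm_polonsky} gives $t_*^+ \infred_\beta t_3$, while $t_*' \reduces_w t_3$ is \emph{finite}. Applying the induction hypothesis to $t_*^+$ (whose reduction to $\Os$ is shorter) together with $t_*^+ \infred_\beta t_3$ yields $t_3 \reduces_\beta \Os$, whence $t_*' \reduces_w t_3 \reduces_\beta \Os$, completing the step.

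The hard part is exactly this passage from an infinitary reduction back to a finite one. A naive strip lemma commuting a single finite $\beta$-step past $t \infred_\beta t'$ is \emph{false}: a contracted redex may acquire infinitely many residuals in $t'$ (substitution into an infinite term duplicates the argument at infinitely many occurrences), which would force the catch-up reduction to be infinitary. Routeing the argument through weak head reduction is what sidesteps this, because Lemma~\ref{lem_w_fin_commute} guarantees the catch-up on the $t'$-side is at most one weak head step, so finiteness is preserved; the induction on the length of $t \reduces_\beta \Os$ then closes the argument. The one ingredient external to the paper is finite standardisation, invoked only to know that contracting the needed weak head redex first does not lengthen the reduction to $\Os$, so that the inductive measure strictly decreases.
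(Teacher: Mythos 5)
Your overall architecture (induct on the length of the finite reduction $t \reduces_\beta \Os$, strip the finite abstraction prefix, and route the commutation through weak head reduction via Theorem~\ref{thm_polonsky} and Lemma~\ref{lem_w_fin_commute} so that the catch-up on the $t'$-side stays finite) is sensible, and the base case and the peeling of abstractions are fine. But the step that makes the induction close is unjustified and, as a general principle, false: you claim that if $t_* \reduces_\beta \Os$ in $n$ steps and $t_* \to_w t_*^+$ contracts the weak head redex, then ``standardising the reduction to contract it first'' yields $t_*^+ \reduces_\beta \Os$ in fewer than $n$ steps. Standardisation does not decrease (or even preserve) reduction length, already in the finitary calculus: for $M \equiv (\lambda x . x x)((\lambda z . z) y)$ we have $M \to_\beta (\lambda x . x x) y \to_\beta y y$ in two steps, the second being the weak head step, whereas contracting the weak head redex first gives $M \to_w ((\lambda z . z) y)((\lambda z . z) y)$, which still needs two steps to reach $y y$. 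So your inductive measure need not decrease. Worse, in the infinitary setting the projection of the remaining steps over the weak head contraction $(\lambda x . P)Q \to P[Q/x]$ can create \emph{infinitely} many residuals of a redex sitting inside $Q$, because $P$ may have infinitely many free occurrences of $x$; this is exactly the duplication phenomenon you correctly identified as killing the naive strip lemma on the $t'$-side, and it strikes equally on the $t$-side. Hence it is not even clear that $t_*^+ \reduces_\beta \Os$ \emph{finitely}, which you need merely to apply the induction hypothesis. A rescue would have to exploit the special shape of $\Os$ (every non-erased, non-contracted residue must already be literally $\Os$), and no such argument is given.

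The paper avoids length-counting altogether. Since the given reduction $t \reduces_\beta \Os$ is finite, it acts above some depth $d$; writing $\Os \equiv \lambda\vec{x}.\Os$ with the inner $\Os$ below depth $d$ and replacing that occurrence by a fresh variable $z$ turns the target into the \emph{finite normal form} $\lambda\vec{x}.z$, so $t \equiv u[\Os/z]$ with $u \reduces_\beta \lambda\vec{x}.z$. A tracking relation $\succ_z^\Os$ then transports $t \infred_\beta t'$ to $u \infred_\beta u'$ with $t' \succ_z^\Os u'$, and confluence modulo $\Rc$ against the finite normal form $\lambda\vec{x}.z$ forces $u' \reduces_\beta \lambda\vec{x}.z$, whence $t' \reduces_\beta \Os$. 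If you want to keep your route you would need to prove the missing quantitative standardisation fact for finite reductions into $\Os$; as it stands the proof has a genuine gap.
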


\begin{proof}
  We have $t \reduces_\beta \Os$. Because the reduction is finite, no
  $\beta$-contractions occur below a fixed depth~$d$. We can write
  $\Os \equiv \lambda \vec{x} . \Os$ where on the right side~$\Os$
  occurs below depth~$d$. Then there is~$u$ with exactly one
  occurrence of~$z$ (where $z \notin \vec{x}$) such that $u
  \reduces_\beta \lambda \vec{x} . z$ and $t \equiv u[\Os/z]$.

  Write $s \succ_z^\Os s'$ if~$s'$ may be obtained from~$s$ by
  changing some~$\Os$ subterms in~$s$ into some terms having the form
  $zu_1\ldots u_n$, defined coinductively analogously to
  Definition~\ref{def_succ}. One shows:
  \begin{enumerate}
  \item if $s \succ_z^\Os s'$ and $s \reduces_\beta w$ then there
    is~$w'$ with $s' \reduces_\beta w'$ and $w \succ_z^\Os w'$,
  \item if $s \succ_z^\Os s'$ and $s' \reduces_\beta w'$ then there
    is~$w$ with $s \reduces_\beta w$ and $w \succ_z^\Os w'$,
  \item if $s \succ_z^\Os s'$ and $s \infred_\beta w$ then there
    is~$w'$ with $s' \infred_\beta w'$ and $w \succ_z^\Os w'$,
  \end{enumerate}
  The first two points are proved by induction. The third one follows
  from the first one using coinduction.

  Hence, there exists~$u'$ such that $u \infred_\beta u'$, and $t'
  \succ_z^\Os u'$. By confluence modulo~$\Rc$ and the fact that
  $\lambda \vec{x} . z$ is a finite normal form, we have $u'
  \reduces_\beta \lambda \vec{x} . z$. Thus there is $w$ with $t'
  \reduces_\beta w \succ_z^\Os \lambda \vec{x} . z$. This is possible
  only when $w \equiv \Os$.
\end{proof}

\begin{cor}\label{cor_a_closure}
  If $t \infred_\beta t'$ and $t \in \Uc$ then $t' \in \Uc$.
\end{cor}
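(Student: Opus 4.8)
The plan is to reduce this corollary to the two immediately preceding lemmas by a single case split on the definition of~$\Uc$. Since $\Uc = \Hc \cup \Oc$, any $t \in \Uc$ satisfies either $t \in \Hc$ or $t \in \Oc$, and these two alternatives exhaust the possibilities. So the entire argument is a dispatch to whichever of the two ``$\Hc$ is closed'' and ``$\Oc$ is closed'' lemmas applies.

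First I would assume $t \in \Hc$. Then the preceding lemma asserting that $t \infred_\beta t'$ and $t \in \Hc$ imply $t' \in \Hc$ applies directly, yielding $t' \in \Hc \subseteq \Uc$. Otherwise $t \in \Oc$, and the companion lemma for~$\Oc$ gives $t' \in \Oc \subseteq \Uc$. In either case $t' \in \Uc$, which is exactly the claim. I would phrase the proof as nothing more than ``case analysis on $t \in \Hc$ or $t \in \Oc$, using the two previous lemmas.''

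I do not expect any real obstacle at the level of the corollary itself: all the genuine content has already been discharged. The delicate reasoning lives entirely in the $\Oc$ lemma, whose proof isolates a single occurrence of a fresh variable~$z$, threads it through the reduction via the coinductively defined relation~$\succ_z^\Os$, and then appeals to confluence modulo~$\Rc$ together with the fact that $\lambda \vec{x} . z$ is a finite normal form. Once that machinery and the analogous (but simpler) closure result for~$\Hc$ are in hand, the corollary reduces to the trivial observation that a union is closed under a relation as soon as each of its members is.
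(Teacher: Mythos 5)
Your proof is correct and matches the paper's (implicit) argument exactly: the corollary is stated without proof precisely because it follows from the two preceding closure lemmas for $\Hc$ and $\Oc$ by the case split $\Uc = \Hc \cup \Oc$. Nothing is missing.
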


\begin{lem}\label{lem_a_sim_subst}
  If $t_1 \sim_\Uc s_1$ and $t_2 \sim_\Uc s_2$ then $t_1[t_2/x]
  \sim_\Uc s_1[s_2/x]$.
\end{lem}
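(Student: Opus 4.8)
The plan is to proceed by coinduction with case analysis on the derivation of $t_1 \sim_\Uc s_1$, exactly as in the proof of Lemma~\ref{lem_sim_subst_2}. The only point where that earlier proof appeals to the substitution axiom is the base case in which both related terms lie in the meaningless set; here, since we are still in the process of verifying that $\Uc = \Hc \cup \Oc$ is a set of meaningless terms, we cannot yet cite that axiom but instead invoke Lemma~\ref{lem_a_cong}(3), which already provides closure of~$\Uc$ under substitution. Throughout I would use Barendregt's variable convention, assuming the bound variable of any abstraction encountered is distinct from~$x$ and does not occur free in~$t_2$ or~$s_2$.

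First I would unfold the coinductive definition of $\sim_\Uc$ (the parallel closure of $\Uc \times \Uc$) applied to $t_1 \sim_\Uc s_1$, giving four cases. If $t_1, s_1 \in \Uc$, then $t_1[t_2/x], s_1[s_2/x] \in \Uc$ by Lemma~\ref{lem_a_cong}(3), so $t_1[t_2/x] \sim_\Uc s_1[s_2/x]$ by the first rule of the parallel closure. If $t_1 \equiv s_1 \equiv a$ for an atom~$a$, I split on whether $a \equiv x$: if so, the two sides are $t_2$ and $s_2$, and we conclude from the hypothesis $t_2 \sim_\Uc s_2$; otherwise both sides equal~$a$ and we apply the atom rule. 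In the application case $t_1 \equiv t_1' t_1''$, $s_1 \equiv s_1' s_1''$ with $t_1' \sim_\Uc s_1'$ and $t_1'' \sim_\Uc s_1''$, substitution distributes over application, so applying the coinductive hypothesis to each component and then the application rule yields the claim. The abstraction case is analogous, using that substitution commutes with $\lambda$ under the variable convention and invoking the coinductive hypothesis on the bodies.

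Because both $\sim_\Uc$ and substitution commute cleanly with the term constructors, none of these cases poses any real difficulty; the whole content of the lemma is concentrated in the meaningless-terms case, where the work has already been done in Lemma~\ref{lem_a_cong}(3). The only subtlety to keep in mind is the guardedness discipline of Section~\ref{sec_coind}: in the application and abstraction cases the coinductive hypothesis must be used solely as a premise of a parallel-closure rule, each time placing its result under a constructor, and never inspected. This is automatically satisfied by the shape of the argument, so I expect no genuine obstacle beyond routine bookkeeping.
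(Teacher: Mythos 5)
Your proof is correct and follows exactly the paper's approach: the paper's proof is the one-liner ``By coinduction, using Lemma~\ref{lem_a_cong}(3)'', and your case analysis is precisely the intended elaboration, with the substitution-closure of~$\Uc$ handling the base case and the coinductive hypothesis guarded under constructors in the application and abstraction cases.
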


\begin{proof}
  By coinduction, using Lemma~\ref{lem_a_cong}(3).
\end{proof}

\begin{lem}\label{lem_a_sim_preserve}
  If $t \to_\beta t'$ and $t \sim_\Uc s$ then there is~$s'$ with $s
  \to_\beta^\equiv s'$ and $t' \sim_\Uc s'$.
\end{lem}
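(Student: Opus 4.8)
The plan is to prove Lemma~\ref{lem_a_sim_preserve} by induction on the derivation of $t \to_\beta t'$, mirroring the structure of $t \sim_\Uc s$ on the right-hand side. This is the same statement as Lemma~\ref{lem_sim_fin_beta_2}, but that lemma's proof invokes the indiscernibility axiom, which here is \emph{not yet available}: we are in the middle of verifying that $\Uc = \Hc \cup \Oc$ is a set of meaningless terms, and indiscernibility is precisely one of the axioms still to be checked (indeed this lemma is a stepping stone towards it). So the whole point will be to re-run the argument using only facts already established in the appendix, namely closure (Corollary~\ref{cor_a_closure}), the congruence/overlap properties (Lemma~\ref{lem_a_cong}), and the substitution property for $\sim_\Uc$ (Lemma~\ref{lem_a_sim_subst}).

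First I would dispose of the case in which $t \sim_\Uc s$ holds by the first rule of the parallel closure, i.e.\ $t, s \in \Uc$: then $t \to_\beta t'$ and $t \in \Uc$ give $t' \in \Uc$ by Corollary~\ref{cor_a_closure}, so $t' \sim_\Uc s$ and we may take $s' \equiv s$. This handles the $\Uc$-membership case uniformly, so in every remaining case $t \sim_\Uc s$ is witnessed by a structural rule matching the shape of $t$. Since a $\beta$-reducible $t$ is never an atom, the relevant rules are the application and abstraction rules.

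The inductive (non-redex) cases are then routine. If $t \equiv uv \to_\beta u'v \equiv t'$ with $u \to_\beta u'$, and $t \sim_\Uc s$ via the application rule $s \equiv s_u s_v$, $u \sim_\Uc s_u$, $v \sim_\Uc s_v$, then the inductive hypothesis applied to $u \to_\beta u'$ and $u \sim_\Uc s_u$ yields $s_u'$ with $s_u \to_\beta^\equiv s_u'$ and $u' \sim_\Uc s_u'$; taking $s' \equiv s_u' s_v$ gives $s \to_\beta^\equiv s'$ and $t' \sim_\Uc s'$ by the application rule. The right-application case and the abstraction case ($t \equiv \lambda y . u$, $s \equiv \lambda y . s_u$, $u \sim_\Uc s_u$) are identical in spirit.

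The main obstacle is the redex case $t \equiv (\lambda x . t_1) t_2$, $t' \equiv t_1[t_2/x]$, which is exactly where Lemma~\ref{lem_sim_fin_beta_2} used indiscernibility. Here $t \sim_\Uc s$ must be via the application rule: $s \equiv s_1 s_2$ with $\lambda x . t_1 \sim_\Uc s_1$ and $t_2 \sim_\Uc s_2$. I would split on how $\lambda x . t_1 \sim_\Uc s_1$ is justified. If it holds via the abstraction rule, then $s_1 \equiv \lambda x . s_1'$ with $t_1 \sim_\Uc s_1'$, so $s \equiv (\lambda x . s_1') s_2 \to_\beta s_1'[s_2/x]$, and $t_1[t_2/x] \sim_\Uc s_1'[s_2/x]$ by Lemma~\ref{lem_a_sim_subst}; take $s' \equiv s_1'[s_2/x]$. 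Otherwise $\lambda x . t_1 \sim_\Uc s_1$ holds by the first rule, so $\lambda x . t_1, s_1 \in \Uc$; then $t \equiv (\lambda x . t_1) t_2 \in \Uc$ by the overlap property Lemma~\ref{lem_a_cong}(1), whence $t' \equiv t_1[t_2/x] \in \Uc$ by Corollary~\ref{cor_a_closure}, and also $s \equiv s_1 s_2 \in \Uc$ by Lemma~\ref{lem_a_cong}(1), so $t' \sim_\Uc s$ and we may take $s' \equiv s$. The key insight is that instead of appealing to indiscernibility to transfer $\Uc$-membership from $t$ to $s$, we re-derive the required memberships directly from the finer structure of the parallel-closure derivation, using only overlap and closure.
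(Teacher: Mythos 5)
Your proof is correct and follows essentially the same route as the paper's: induction on $t \to_\beta t'$, dispatching the $t,s \in \Uc$ case via Corollary~\ref{cor_a_closure}, and in the redex case splitting on whether $\lambda x . t_1 \sim_\Uc s_1$ holds by the $\Uc$-rule (handled with Lemma~\ref{lem_a_cong} and closure) or by the abstraction rule (handled with Lemma~\ref{lem_a_sim_subst}). No discrepancies to report.
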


\begin{proof}
  Induction on~$t \to_\beta t'$. If the case $t,s \in \Uc$ in the
  definition of~$t \sim_\Uc s$ holds, then $t' \in \Uc$ by
  Corollary~\ref{cor_a_closure}, so $t' \sim_\Uc s$ and we may take
  $s' \equiv s$. So assume otherwise. Then all cases follow directly
  from the inductive hypothesis except when~$t$ is the contracted
  $\beta$-redex. Then $t \equiv (\lambda x . t_1) t_2$ and $t' \equiv
  t_1[t_2/x]$ and $s = s_0 s_2$ with $\lambda x . t_1 \sim_\Uc s_0$
  and $t_2 \sim_\Uc s_2$. If $\lambda x . t_1, s_0 \in \Uc$ then
  $t,s\in\Uc$ by Lemma~\ref{lem_a_cong}, and thus $t'\in\Uc$ by
  Corollary~\ref{cor_a_closure}, and thus $t' \sim_\Uc s$ and we may
  take $s' \equiv s$. Otherwise, $s_0 \equiv \lambda x . s_1$ with
  $t_1 \sim_\Uc s_1$. By Lemma~\ref{lem_a_sim_subst} we have $s
  \to_\beta s_1[s_2/x] \sim_\Uc t_1[t_2/x] \equiv t'$, so we take $s'
  \equiv s_1[s_2/x]$.
\end{proof}

\begin{lem}\label{lem_a_sim_rc}
  If $t \in \Rc$ and $t \sim_\Uc t'$ then $t' \in \Uc$.
\end{lem}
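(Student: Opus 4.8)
The plan is to argue by contradiction, following the weak head reduction of $t$ and transferring each step across $\sim_\Uc$. First I would dispose of the trivial case $t \equiv \bot$: then $\bot \sim_\Uc t'$ forces either $t' \equiv \bot$ (atom rule) or $t',\bot \in \Uc$ (base rule), and in both cases $t' \in \Uc$ since $\bot \in \Hc \subseteq \Uc$. So assume $t \not\equiv \bot$ and suppose, for contradiction, that $t' \notin \Uc$. Since every root-active term is head-active (take $n=m=0$ in the definition), we have $\Rc \subseteq \Hc \subseteq \Uc$; hence $t' \notin \Rc$, so $t'$ has a root normal form, and by Lemma~\ref{lem_beta_to_crnf} there is a finite weak head reduction $t' \reduces_w \crnf(t')$ to an rnf, say of length $M$. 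I will derive a contradiction by induction on $M$.

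The structural fact driving the proof is that a root-active term other than $\bot$ always possesses a weak head redex (otherwise its weak head reduction would have length $0$, making the term itself an rnf, contrary to $t \in \Rc$). Thus $t \equiv (\lambda x . p)\,q\,r_1 \cdots r_k$. I would then analyse $t \sim_\Uc t'$ along the left spine down to this redex. If the base rule of $\sim_\Uc$ is used at any spine node, the corresponding subterm of $t'$ lies in $\Uc$ and sits on the spine of $t'$, whence $t' \in \Uc$ by Lemma~\ref{lem_a_cong}(1), contradicting $t' \notin \Uc$. Otherwise the application rules hold all the way down to the head $\lambda x . p$, which being an abstraction forces the abstraction rule, giving $t' \equiv (\lambda x . p')\,q'\,r_1' \cdots r_k'$ with $p \sim_\Uc p'$, $q \sim_\Uc q'$ and $r_i \sim_\Uc r_i'$; in particular $t'$ then has a weak head redex.

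Now I close the induction. When $M = 0$ the term $t'$ is an rnf, yet the previous step exhibited a weak head redex in $t'$, a contradiction since root normal forms have no weak head redex. When $M > 0$, the synchronised contractions $t \to_w t_1$ with $t_1 \equiv p[q/x]\,r_1 \cdots r_k$ and $t' \to_w t_1'$ with $t_1' \equiv p'[q'/x]\,r_1' \cdots r_k'$ satisfy $t_1 \sim_\Uc t_1'$ by Lemma~\ref{lem_a_sim_subst} together with the application rule. Moreover $t_1 \in \Rc$ by closure of $\Rc$ under $\to_\beta$ (Theorem~\ref{thm_root_active_meaningless}); also $t_1' \notin \Uc$, since otherwise $t' \reduces_\beta t_1' \in \Uc$ would give $t' \in \Uc$ by transitivity of $\reduces_\beta$ and the definition of $\Uc = \Hc \cup \Oc$ via finite $\reduces_\beta$-reachability; and, the weak head redex being unique, $t_1' \reduces_w \crnf(t')$ in $M-1$ steps. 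If $t_1 \equiv \bot$, the $\bot$-analysis yields $t_1' \in \Uc$, a contradiction; otherwise the induction hypothesis applied to $(t_1,t_1')$ gives the contradiction. Hence $t' \in \Uc$.

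I expect the main obstacle to be the precise weak-head characterisation of root-activeness, in particular the $\bot$ corner case (a root-active term possessing no weak head redex at all), together with checking that, in the absence of a $\Uc$-absorption on the spine, the weak head redex of $t$ genuinely transfers to $t'$ across $\sim_\Uc$. It is worth noting that this argument never invokes the expansion axiom, which is consistent with the fact that the present $\Uc$ fails it; it relies only on Lemma~\ref{lem_a_cong}, Lemma~\ref{lem_a_sim_subst}, closure of $\Rc$, the inclusion $\Rc \subseteq \Uc$, and the finiteness of the canonical weak head reduction to an rnf.
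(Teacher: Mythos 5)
Your overall strategy (follow the weak head redex of the root-active $t$, match it across $\sim_\Uc$, and induct on the length of the weak head reduction of $t'$ to root normal form) is workable, but the base case $M=0$ rests on a false claim: root normal forms \emph{can} have weak head redexes. For example, $(\lambda x.y)\,z\,w$ with $y$ a variable different from $x$ is in rnf --- its left part $(\lambda x.y)\,z$ $\beta$-reduces only to $y$ or to terms of the same shape, never to an abstraction, so the third clause of the definition of rnf holds --- yet it contains the weak head redex $(\lambda x.y)\,z$. Hence from ``$t'$ is an rnf and $t' \equiv (\lambda x.p')\,q'\,r_1'\cdots r_k'$'' you obtain a contradiction only when $k=0$; for $k\ge 1$ nothing follows yet. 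To close this case you need the step that is in fact the heart of the paper's proof: since $t$ is root-active it is not in rnf, so its left part $(\lambda x.p)\,q\,r_1\cdots r_{k-1}$ $\beta$-reduces to some abstraction $\lambda z.u$; transferring this finite reduction across $\sim_\Uc$ with Lemma~\ref{lem_a_sim_preserve} yields $(\lambda x.p')\,q'\,r_1'\cdots r_{k-1}' \reduces_\beta w \sim_\Uc \lambda z.u$, and then either $w,\lambda z.u \in \Uc$ (whence the left part of $t'$ is in $\Uc$ by finite $\reduces_\beta$-reachability and so $t'\in\Uc$ by Lemma~\ref{lem_a_cong}(1), contradicting $t'\notin\Uc$), or $w$ is an abstraction (contradicting that $t'$ is in rnf). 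Without this dichotomy the base case is unproved.

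For comparison, the paper avoids the induction on $M$ altogether: it takes the finite reduction of $t'$ to a rnf $s'$, pushes it across $\sim_\Uc$ with Lemma~\ref{lem_a_sim_preserve} to get $t \reduces_\beta s \sim_\Uc s'$ with $s \in \Rc$, and then applies the abstraction-versus-$\Uc$ dichotomy exactly once to $s \equiv s_1s_2$. The remainder of your argument (the spine analysis with absorption into $\Uc$ via Lemma~\ref{lem_a_cong}(1), the use of Lemma~\ref{lem_a_sim_subst} for the contracted redex, and the propagation of $t_1'\in\Uc$ back to $t'$ by finite reachability) is sound, so the proof can be repaired, but as written the key contradiction is missing.
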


\begin{proof}
  Assume $t' \infred_\beta s_0$ with $s_0$ in rnf. Then by
  Lemma~\ref{lem_beta_to_rnf} there is~$s'$ in rnf with $t'
  \reduces_\beta s'$. By Lemma~\ref{lem_a_sim_preserve} there is~$s$
  with $t \reduces_\beta s$ and $s \sim_\Uc s'$. If $s,s' \in \Uc$
  then $t' \in \Uc$. Otherwise, because~$t \in \Rc$, we must have $s
  \equiv s_1 s_2$, $s' \equiv s_1's_2'$ with $s_i \sim_\Uc
  s_i'$. Because $t \in \Rc$ and $t \reduces_\beta s_1 s_2$, there
  exists $u$ such that $s_1 \reduces_\beta \lambda x . u$. Then by
  Lemma~\ref{lem_a_sim_preserve} there is $u_0$ with $s_1'
  \reduces_\beta u_0 \sim_\Uc \lambda x . u$. If $u_0,\lambda . u \in
  \Uc$ then $s_1' \in \Uc$, and thus $s' \in \Uc$, and thus $t' \in
  \Uc$. Otherwise $u_0 \equiv \lambda x . u'$ with $u \sim_\Uc
  u'$. But this contradicts that~$s'$ is in rnf.
\end{proof}

\begin{lem}\label{lem_a_sim}
  If $t \in \Uc$ and $t \sim_\Uc t'$ then $t' \in \Uc$.
\end{lem}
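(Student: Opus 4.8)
The plan is to exploit that membership in $\Uc = \Hc \cup \Oc$ is witnessed by a \emph{finite} $\beta$-reduction to a term of constrained shape, push that reduction across $\sim_\Uc$, and then analyse the reduct. Since $t \in \Uc$, there is a finite reduction $t \reduces_\beta u$ with~$u$ either head-active or $u \equiv \Os$. Iterating Lemma~\ref{lem_a_sim_preserve} along the finitely many steps of $t \reduces_\beta u$, starting from $t \sim_\Uc t'$, produces a term~$u'$ with $t' \reduces_\beta u'$ and $u \sim_\Uc u'$ (each step contributes a $\to_\beta^\equiv$, so $t' \reduces_\beta u'$ is again finite). It then suffices to prove $u' \in \Uc$, because membership in~$\Uc$ is trivially closed under $\beta$-expansion along a finite reduction: if $u' \reduces_\beta w$ with~$w$ head-active then $t' \reduces_\beta w$ gives $t' \in \Hc$, and if $u' \reduces_\beta \Os$ then $t' \reduces_\beta \Os$ gives $t' \in \Oc$.

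For the \emph{head-active case}, where $u \equiv \lambda x_1 \ldots x_n . r s_1 \ldots s_m$ with $r \in \Rc$, I would analyse $u \sim_\Uc u'$ by descending (by induction on $n+m$) through this finite head context. At each abstraction node and each spine application node the derivation of $\sim_\Uc$ either uses the $\Uc \times \Uc$ axiom---whereupon the corresponding subterm of~$u'$ lies in~$\Uc$ and repeated use of Lemma~\ref{lem_a_cong}(1),(2) lifts membership back up to $u' \in \Uc$---or uses the matching congruence rule. If congruence is used throughout we reach $u' \equiv \lambda x_1 \ldots x_n . r' s_1' \ldots s_m'$ with $r \sim_\Uc r'$; since $r \in \Rc$, Lemma~\ref{lem_a_sim_rc} gives $r' \in \Uc$, and then Lemma~\ref{lem_a_cong}(1) applied $m$ times and Lemma~\ref{lem_a_cong}(2) applied $n$ times give $u' \in \Uc$.

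For the \emph{ogre case}, where $u \equiv \Os$ and $\Os \sim_\Uc u'$, I would use $\Os \equiv \lambda x . \Os$ and split classically on whether the $\Uc \times \Uc$ axiom is ever applied down the abstraction spine. If it first fires at depth~$k$, then $u' \equiv \lambda x_1 \ldots x_k . w$ with $w \in \Uc$, so $u' \in \Uc$ by $k$ applications of Lemma~\ref{lem_a_cong}(2). If instead the abstraction congruence rule fires at every level, then a routine coinduction shows $u' = \Os$, whence $u' \equiv \Os \in \Oc$. Either way $u' \in \Uc$, and the first paragraph concludes $t' \in \Uc$.

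The hard part is the ogre case. In the head-active case peeling the head context terminates after finitely many steps at the root-active~$r$, where Lemma~\ref{lem_a_sim_rc} finishes the job; but the abstraction spine of~$\Os$ is infinite, so no finite induction is available. The delicate point is that $u' \in \Uc$ is a \emph{finitary} property---defined through finite $\beta$-reductions---and so is not directly provable by coinduction; the coinductive content must therefore be confined to establishing the bisimilarity $u' = \Os$ under the assumption that the congruence rule fires at every level, after which $\Os \in \Oc$ supplies membership. The classical case split on whether the $\Uc \times \Uc$ axiom is ever used is exactly what makes this dichotomy available, in keeping with the classical setting already adopted in the paper.
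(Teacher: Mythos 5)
Your proposal is correct and follows essentially the same route as the paper: push the finite witnessing reduction across $\sim_\Uc$ with Lemma~\ref{lem_a_sim_preserve}, then analyse the resulting term related to the head-active term or to~$\Os$ using Lemma~\ref{lem_a_cong} and Lemma~\ref{lem_a_sim_rc}. You merely spell out in more detail the two case analyses (descent through the finite head context, and the classical dichotomy on the infinite abstraction spine of~$\Os$) that the paper compresses into ``we may assume'' and ``one checks''.
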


\begin{proof}
  First assume $t \in \Hc$, i.e., $t \reduces_\beta u \equiv \lambda
  x_1 \ldots x_n . r t_1 \ldots t_m$ with $r \in \Rc$. By
  Lemma~\ref{lem_a_sim_preserve} there is~$u$ with $t' \reduces_\beta
  u' \sim_\Uc u$. We may assume $u' \equiv \lambda x_1 \ldots x_n . r'
  t_1' \ldots t_m'$ with $r' \sim_\Uc r$ (otherwise $u' \in \Uc$,
  using Lemma~\ref{lem_a_cong}, so $t' \in \Uc$). But then $r' \in
  \Rc$ by Lemma~\ref{lem_a_sim_rc}. Hence $t' \in \Hc \subseteq \Uc$.

  Now assume $t \in \Oc$, i.e., $t \reduces_\beta \Os$. By
  Lemma~\ref{lem_a_sim_preserve} there is~$u'$ with $t' \reduces_\beta
  u' \sim_\Uc \Os$. Using Lemma~\ref{lem_a_cong} one checks that $u'
  \sim_\Uc \Os$ implies $u' \in \Uc$. Then also $t' \in \Uc$.
\end{proof}

\begin{thm}
  $\Uc$ is a set of meaningless terms.
\end{thm}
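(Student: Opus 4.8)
The plan is to verify the five defining axioms of a set of meaningless terms one by one, since each preceding lemma in this appendix was engineered to discharge exactly one of them. The \textbf{Closure} axiom is precisely Corollary~\ref{cor_a_closure}, and the \textbf{Substitution} axiom is precisely Lemma~\ref{lem_a_cong}(3), so both hold at once without further argument.

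For the \textbf{Overlap} axiom, suppose $\lambda x . s \in \Uc$. I would apply Lemma~\ref{lem_a_cong}(1) with the $\Uc$-member $\lambda x . s$ in the role of its hypothesis term and an arbitrary $t$ as the argument; this yields $(\lambda x . s) t \in \Uc$, which is exactly what Overlap demands. For the \textbf{Root-activeness} axiom I would observe that every root-active term is already head-active: taking $n = m = 0$ in the definition of head-active, any $r \in \Rc$ has the form $\lambda x_1 \ldots x_n . r t_1 \ldots t_m$ with the required $r \in \Rc$. Since $r \reduces_\beta r$ in zero steps, this witnesses $r \in \Hc \subseteq \Uc$, giving $\Rc \subseteq \Uc$. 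Finally, the \textbf{Indiscernibility} axiom is exactly the statement of Lemma~\ref{lem_a_sim}.

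As for the main obstacle: at this stage there is essentially none, because the genuinely delicate work has already been completed in the preceding lemmas. In particular, the preservation of $\Uc$ under $\sim_\Uc$ (Lemma~\ref{lem_a_sim}), which in turn rested on confluence modulo $\Rc$, on $\Rc$ being strongly meaningless, and on the auxiliary relations $\succ_z^\Os$ used to analyse membership in $\Oc$, is the only substantial ingredient, and it is assumed here. The present theorem is therefore merely an assembly step, packaging the established lemmas into the five required axioms; the proof amounts to a short list of one-line justifications, one per axiom.

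\medskip
\noindent\textit{Remark.} Note that the expansion axiom is \emph{not} among those to be checked, consistently with the discussion on page~14 showing that this $\Uc$ is a set of meaningless terms that fails expansion (witnessed by $\Omega_\Os \infred_\beta \Os \in \Oc$ with $\Omega_\Os \notin \Uc$). Thus only the five axioms above are verified, and no appeal to Expansion is made.
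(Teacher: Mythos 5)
Your proposal is correct and follows exactly the same route as the paper: each of the five axioms (Closure, Substitution, Overlap, Root-activeness, Indiscernibility) is discharged by the same corresponding lemma (Corollary~\ref{cor_a_closure}, Lemma~\ref{lem_a_cong}(3), Lemma~\ref{lem_a_cong}(1), the inclusion $\Rc \subseteq \Hc \subseteq \Uc$, and Lemma~\ref{lem_a_sim}, respectively). Your additional remark that Expansion is deliberately not checked is also consistent with the paper's intent.
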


\begin{proof}
  The closure axiom follows from Corollary~\ref{cor_a_closure}. The
  substitution axiom follows from Lemma~\ref{lem_a_cong}(3). The
  overlap axiom follows from Lemma~\ref{lem_a_cong}(1). The
  root-activeness axiom follows from $\Rc \subseteq \Hc \subseteq
  \Uc$. The indiscernibility axiom follows from Lemma~\ref{lem_a_sim}.
\end{proof}

\end{document}